\documentclass[12pt]{article}
\usepackage{amsmath}
\usepackage{graphicx}
\usepackage{enumerate}
\usepackage{natbib}
\usepackage{url} 

\RequirePackage[OT1]{fontenc}
\RequirePackage{amsthm}
\RequirePackage[colorlinks,citecolor=blue,urlcolor=blue]{hyperref}
\RequirePackage{hypernat}
\usepackage{colortbl}
\usepackage{longtable}
\usepackage{numprint}
\usepackage[utf8]{inputenc} 
\usepackage{booktabs}       
\usepackage{amsfonts}       
\usepackage{amssymb}
\usepackage{nicefrac}       
\usepackage{microtype}      
\usepackage{lipsum}
\usepackage{comment}
\usepackage{setspace}
\usepackage{float}
\graphicspath{ {./images/} }
\usepackage{quoting}
\usepackage{wrapfig}
\quotingsetup{font={itshape}}
\usepackage{multirow}
\usepackage{bm}
\usepackage{subfigure}

\newcommand{\blind}{1}

\addtolength{\oddsidemargin}{-.5in}%
\addtolength{\evensidemargin}{-1in}%
\addtolength{\textwidth}{1in}%
\addtolength{\textheight}{1.7in}%
\addtolength{\topmargin}{-1in}%

\numberwithin{equation}{section}
\theoremstyle{definition}

\newtheorem{theorem}{Theorem}[section]
\newtheorem{proposition}{Proposition}[section]

\newtheorem{remark}{Remark}[section]
\newtheorem{assumption}{Assumption}[section]


\def\0{{\text{\boldmath $0$}}}

\def\R{{\text{\boldmath $R$}}}

\def\V{{\text{\boldmath $V$}}}
\def\W{{\text{\boldmath $W$}}}

\def\Y{{\text{\boldmath $Y$}}}

\DeclareMathOperator{\Var}{Var}
\DeclareMathOperator{\Cov}{Cov}

\newcommand{\argmin}{\mathop{\rm arg~min}\limits}
\def\diag{{\rm diag}}

\allowdisplaybreaks

\begin{document}

\bibliographystyle{agsm}

\def\spacingset#1{\renewcommand{\baselinestretch}%
{#1}\small\normalsize} \spacingset{1}


\if1\blind
{
  \title{\bf Local-Polynomial Estimation for Multivariate Regression Discontinuity Designs.}
  \author{Masayuki Sawada\thanks{
    We thank Yoichi Arai, Hidehiko Ichimura, Hiroaki Kaido, Toru Kitagawa, anonymous associate editor and referees, and the participants of the third Tohoku-ISM-UUlm workshop at Tohoku University, Summer Econometrics Forum at the University of Tokyo, and the seminar at Hitotsubashi University and Kobe University for their valuable comments.
    This work was supported by the JSPS KAKENHI Grant Number 22K13373 (ISHIHARA), the JSPS KAKENHI Grant Number 23K12456 (KURISU), the Grant-in-Aid for Scientific Research (B) 21H03400 (MATSUDA) and the JSPS KAKENHI Grant Number 21K13269 (SAWADA). First version: 2024-Feb-15.}\hspace{.2cm}\\
    Institute of Economic Research, Hitotsubashi University, \\
    Takuya Ishihara \\
    Graduate School of Economics and Management, Tohoku University, \\
    Daisuke Kurisu \\
    Center for Spatial Information Science, The University of Tokyo \\
    and \\
    Yasumasa Matsuda \\
    Graduate School of Economics and Management, Tohoku University}
  \maketitle
} \fi

\if0\blind
{
  \bigskip
  \bigskip
  \bigskip
  \begin{center}
    {\LARGE\bf Local-Polynomial Estimation for Multivariate Regression Discontinuity Designs.}
\end{center}
  \medskip
} \fi

\bigskip
\begin{abstract}
We study a multivariate regression discontinuity design in which treatment is assigned by crossing a boundary in the space of multiple running variables. We document that the existing bandwidth selector is suboptimal for a multivariate regression discontinuity design when the distance to a boundary point is used for its running variable, and introduce a multivariate local-linear estimator for multivariate regression discontinuity designs. Our estimator is asymptotically valid and can capture heterogeneous treatment effects over the boundary. We demonstrate that our estimator exhibits smaller root mean squared errors and often shorter confidence intervals in numerical simulations. We illustrate our estimator in our empirical applications of multivariate designs of a Colombian scholarship study and a U.S. House of representative voting study and demonstrate that our estimator reveals richer heterogeneous treatment effects with often shorter confidence intervals than the existing estimator.
\end{abstract}

\noindent%
{\it Keywords:} Causal Inference, Multiple Running Variables, Distance Running variable
\vfill


\newpage
\spacingset{1.9} 

\section{Introduction}
\label{sec:intro}

The regression discontinuity (RD) design takes advantage of a particular treatment assignment mechanism that is set by the running variables. \footnote{See \cite{Imbens.Lemieux2008}, \cite{Lee.Lemieux2010}, \cite{DiNardo.Lee2011}, and \citeauthor*{Cattaneo.Idrobo.Titiunik2019} (\citeyear{Cattaneo.Idrobo.Titiunik2019},\citeyear{Cattaneo.Idrobo.Titiunik2023}) for extensive surveys of RD literature} An example of such a mechanism is a scholarship that is awarded to applicants whose scores are above a threshold. The eligibility sometimes involves an additional requirement. For example, the applicants' poverty scores must be below another threshold to be eligible. These RD designs are \textit{multivariate} in their running variables because a student must exceed a policy boundary in the space of multivariate running variables to be treated. 

Existing approaches often handle multivariate designs as if they are univariate designs. \footnote{There are a few studies which tackled the multivariate problem as multivariate. For example, \cite{Papay.Willett.Murnane2011} and \cite{reardonRegressionDiscontinuityDesigns2012} are early exception which consider extensions of the classical polynomial based estimation of \cite{Imbens.Lemieux2008}.} The most popular approach \textit{aggregates} observations over the boundary to handle multivariate RD designs. For example, \cite{Matsudaira2008} considers participation in a program based on either a failure in language or math exams. \cite{Matsudaira2008} reduces the multivariate design by aggregating the language-passing students who are at the boundary of the math exam.\footnote{\cite{wongAnalyzingRegressionDiscontinuityDesigns2013} consider a decomposition of the boundary average effects into a weighted average of the boundary specific estimate of the similar strategy.} While there is no theoretical issue with the \textit{aggregation} strategy, one may wish to estimate heterogeneous treatment effects across the policy boundary. \footnote{If we segment the boundary into a few intervals, then we may estimate heterogeneous effects separately for each segment.
Nevertheless, finding an appropriate set of segments can be challenging and one cannot easily take its limit of this strategy to estimate the heterogeneous effect at each boundary point.}

To estimate heterogeneous treatment effects over the boundary, another popular approach constructs a running variable as the Euclidean \textit{distance} from a boundary point. For example, \cite{Keele.Titiunik2015} propose a procedure to conduct the ordinary univariate regression discontinuity estimation with the Euclidean distance from a particular boundary point. \footnote{The distance approach dates back to \cite{Black1999}, for example, which computes the closest boundary point for each unit and compares units of the same closest boundary point to achieve the mean effect across the boundary. In this paper, we focus on estimating the heterogeneous effects across the boundary points.} The \textit{distance} approach produces a valid estimate with a valid inference under the procedure of \cite{Calonico.Cattaneo.Titiunik2014} because of its self-normalizing property of the t-statistic. \footnote{We thank an anonymous referee for this point.} The estimator is straightforward to implement, and available as Stata and R packages, \textit{rdrobust} or its wrapper \textit{rdmulti} (\citealp*{Cattaneo.Titiunik.Vazquez-Bare2020}).

However, the \textit{distance} strategy selects bandwidth for the \textit{incorrect} rate of convergence for the underlying multivariate design: the existing estimators select the optimal bandwidth for a univariate problem, but the underlying design is multivariate. As a result, the existing bandwidth selectors are \textit{suboptimal} and hence their estimations are \textit{inefficient}. 

In this study, we document that the existing bandwidth selectors including \cite{Calonico.Cattaneo.Titiunik2014} are suboptimal when they are applied to a multivariate design with the distance from a boundary point as a running variable. We further propose a multivariate RD estimator with a Mean-Squared Error (MSE) optimal bandwidth selector. We demonstrate preferable properties of our estimator in simulation and empirical analyses.

Our estimator demonstrates favorable performances with smaller MSEs and shorter confidence intervals in most of designs. We demonstrate our estimator in two empirical contexts to compare with \textit{rdrobust}. First, we apply our estimates to the multivariate RD design data of \citet*{Londono-Velez.Rodriguez.Sanchez2020} who study the impact of a Colombian scholarship program on the college attendance rate. Second, we consider a pseudo-multivariate RD design for the \cite{Lee2008} data with continuous covariates to study heterogeneous treatment effects across different values of the covariates. In the first application, our estimates exhibit shorter or comparable confidence intervals and better stability in the choice of scaling in two runnnig variables. In the second application, our estimates reveal shorter confidence intervals or richer heterogeneity.

We contribute to the literature on the estimation for RD designs. For a scalar running variable, the local-linear estimation of \cite{Calonico.Cattaneo.Titiunik2014} is the first choice for estimating treatment effects. Its statistical package, \textit{rdrobust} (\citealp*{Calonico.Cattaneo.Titiunik2014a}, \citealp*{Calonico.Cattaneo.Farrell.Titiunik2017}, \citealp*{Calonico.Cattaneo.Farrell2022}), is the dominant and reliable package for a uni-variate RD design with a large sample. However, we demonstrate that the \textit{rdrobust} bandwidth selector is suboptimal for a multivariate RD design when the distance from a boundary point is its univariate running variable. We further provide an alternative local-linear estimator with an optimal bandwidth selector.

We note that multivariate estimations are also available in a non-kernel \textit{bias-aware} procedure such as \cite{Imbens.Wager2019} and \cite{Kwon.Kwon2020a} which are derived from \cite{Armstrong.Kolesar2018}, for example. These bias-aware methods may fully adapt the underlying distribution of the running variable. The bias-aware approach is a valid but different alternative to the kernel procedure because they employ the worst-case second derivative as the tuning parameter instead of the bandwidth. Given that the two approaches are in different principle, we contribute to fill a missing piece in the kernel procedure for a multivariate RD design with an optimal bandwidth selector.

The most related study is the recent work by \cite{cattaneoEstimationInferenceBoundary2025} which has reported an important boundary bias in the \textit{distance} approach when the evaluation point is at the corner of the policy boundary. Combined with our arguments about the suboptimality of the \textit{distance} approach, both contributions jointly alert that the univariate \textit{distance} approach should not be used for the multivariate designs not just at the corner but also at any boundary points. Contributions in our estimator is also complementary. On the one hand, we allow for selecting dimension specific bandwidths which can differ substantially when the scaling of the running variables differ as demonstrated in our simulation. On the other hand, \cite{cattaneoEstimationInferenceBoundary2025} provide a uniform inference across evaluation points. Hence, both contributions are complementary both in terms of studying the \textit{distance} approach as well as developing an appropriate estimator.

The remainder of the paper is organized as follows. We document the problem of the existing approach and introduce our estimator in Section 2. In Section 3, we evaluate the proposed estimator in Monte Carlo simulations and in empirical studies by \cite{Londono-Velez.Rodriguez.Sanchez2020} and a modification of \cite{Lee2008}. Finally, we conclude the paper and discuss future challenges in Section 4.

\section{Methods}

\subsection{Set up and identification}
Consider a multivariate RD design for a student with a pair of test scores $(R_1,R_2)$. For example, we consider a program that accepts students whose scores exceed their corresponding thresholds $(c_1,c_2)$. In this program, the eligibility is set by a treatment region $\mathcal{T} = \{(R_1,R_2) \in \mathbb{R}^2: R_1 \geq c_1, R_2 \geq c_2 \}$ (Figure \ref{fig:treatment} (a)). For another example, consider a program that accepts students whose total score exceeds a single threshold $c_1 + c_2$. The eligibility is set by another region $\mathcal{T} = \{(R_1,R_2) \in \mathbb{R}^2: R_1 + R_2 \geq c_1 + c_2 \}$ (Figure \ref{fig:treatment} (b)). In general, we consider a binary treatment $D \in \{0,1\}$ and associated pair of potential outcomes $\{Y(1),Y(0)\}$ such that 
$
 Y = D Y(1) + (1 - D)Y(0)
$
for an observed outcome $Y \in \mathbb{R}$. We consider a sharp RD design with a \textit{vector} of running variables $R \in \mathcal{R} \subseteq \mathbb{R}^d$ for some integer $d \geq 1$. Specifically, let $\mathcal{T}$ be the treatment region, which is an open subset of the support, $\mathcal{R}$. Let $\mathcal{T}^C$ be the complement of the closure of $\mathcal{T}$. This $\mathcal{T}^C$ is the control region, and both $\mathcal{T}$ and $\mathcal{T}^C$ have non-zero Lebesgue measures, and $D = 1\{R \in \mathcal{T}\}$.

\begin{figure}[H]
    \centering
    \begin{minipage}{0.45\hsize}
    \includegraphics[width=\columnwidth]{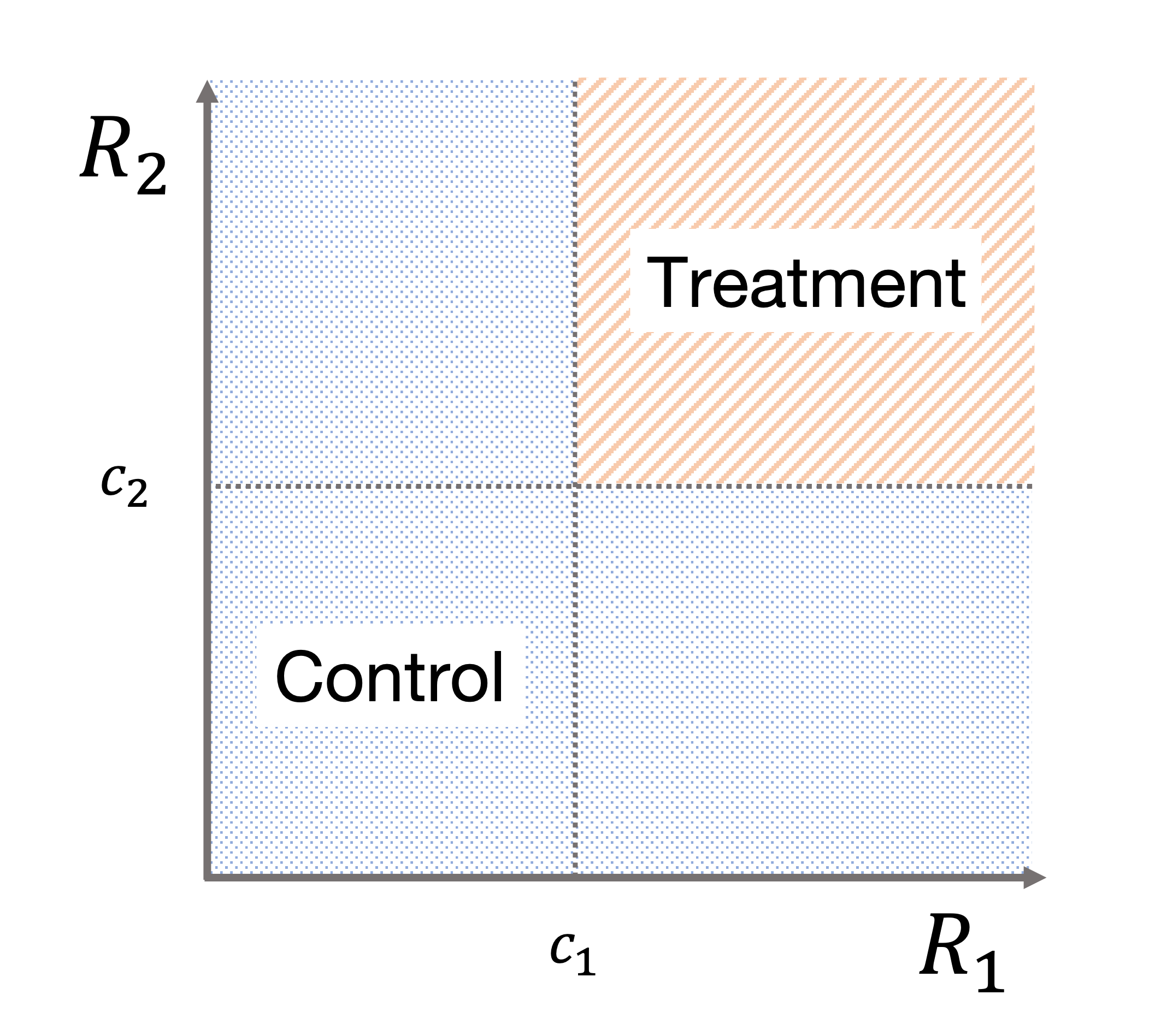}
    \centering (a) $\{(R_1,R_2) \in \mathbb{R}^2: R_1 \geq c_1, R_2 \geq c_2 \}$
    \end{minipage}    
    \begin{minipage}{0.45\hsize}
    \includegraphics[width=\columnwidth]{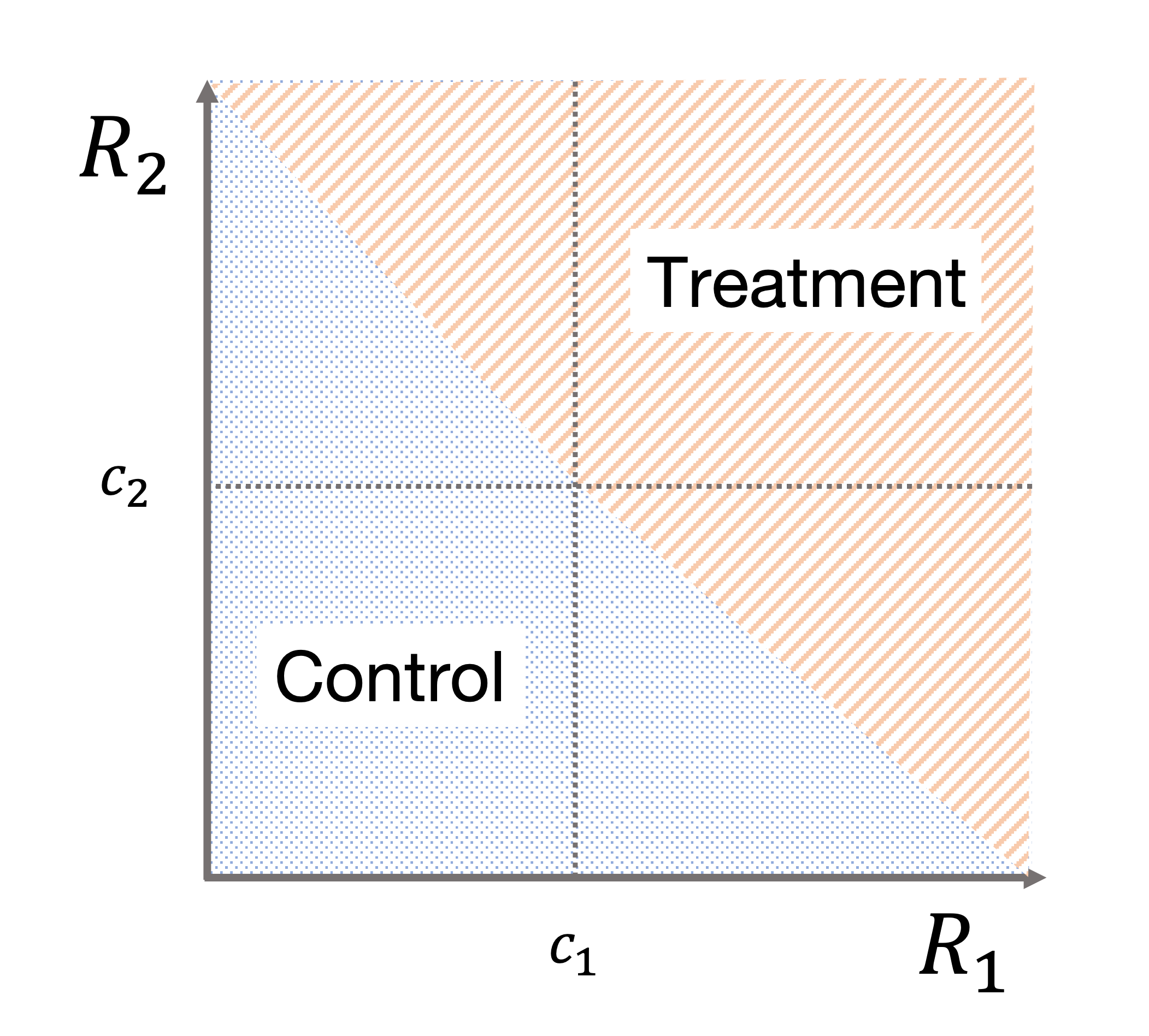}
    \centering (b) $\{(R_1,R_2) \in \mathbb{R}^2: R_1 + R_2 \geq c_1 + c_2 \}$
    \end{minipage}
    \caption{Illustration of $\mathcal{T}$.}\label{fig:treatment}
\end{figure}

We consider the i.i.d. sample of $(Y,D,R)$, $(Y_i,D_i,R_i)_{i \in \{1,\ldots, n\}}$, where $R_i = (R_{i,1},R_{i,2})$ and $R = (R_1,R_2)$. Let $c$ be a particular point on the boundary of the closure of $\mathcal{T}$. Our target parameter is $\theta(c) := \lim_{r \rightarrow c, r \in \mathcal{T}} E[Y(1)-Y(0)|R=r] - \lim_{r \rightarrow c, r \in \mathcal{T}^C} E[Y(1)-Y(0)|R=r]$. In the following section, we focus on the issues in estimating the given identified parameter $\theta(c)$. Under the following assumptions (\citealp*{Hahn.Todd.Klaauw2001}; \citealp{Keele.Titiunik2015}), $\theta(c)$ is the average treatment effect (ATE) at each point of the boundary $c$:
\begin{proposition}(\citealp{Keele.Titiunik2015}, Proposition 1)
 If $E[Y(1)|R=r]$ and $E[Y(0)|R=r]$ are continuous in $r$ at all points $c$ of the boundary of the closure of $\mathcal{T}$; $P(D_i = 1) = 1$ for all $i$ such that $R_i \in \mathcal{T}$; $P(D_i = 1) = 0$ for all $i$ such that $R_i \in \mathcal{T}^C$, then, $\theta(c) = E[Y(1) - Y(0)|R = c]$
 for all $c$ in the boundary.
\end{proposition}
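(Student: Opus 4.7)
The plan is to exploit the sharp-assignment identity $D=\mathbf{1}\{R\in\mathcal{T}\}$ together with the potential-outcome relation $Y=DY(1)+(1-D)Y(0)$ to rewrite each one-sided limit entering $\theta(c)$ as a limit of a conditional mean of a \emph{single} potential outcome, and then to invoke the continuity of $r\mapsto\Ep[Y(1)\mid R=r]$ and $r\mapsto\Ep[Y(0)\mid R=r]$ at the boundary point $c$ to evaluate each limit pointwise and subtract.

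First, I would observe that on $\mathcal{T}$ the hypothesis $P(D_i=1\mid R_i=r)=1$ yields $Y=Y(1)$ almost surely conditional on $R=r$, so the conditional mean appearing in the first limit collapses to $\Ep[Y(1)\mid R=r]$. Symmetrically, on $\mathcal{T}^C$ the hypothesis $P(D_i=1\mid R_i=r)=0$ gives $Y=Y(0)$ almost surely, so the second conditional mean collapses to $\Ep[Y(0)\mid R=r]$. This reduction is the essential step: each one-sided limit now involves only one potential outcome rather than a counterfactual difference that cannot be recovered on either side of the boundary.

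Second, I would apply the continuity assumption at $c$ to pass each limit through:
\begin{equation*}
\lim_{r\to c,\, r\in\mathcal{T}}\Ep[Y(1)\mid R=r]=\Ep[Y(1)\mid R=c],\qquad \lim_{r\to c,\, r\in\mathcal{T}^C}\Ep[Y(0)\mid R=r]=\Ep[Y(0)\mid R=c].
\end{equation*}
Subtracting and applying linearity of conditional expectation delivers $\theta(c)=\Ep[Y(1)-Y(0)\mid R=c]$.

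No step is substantively difficult; the argument is a direct bookkeeping consequence of sharp assignment combined with continuity of the two potential-outcome regressions. The only mild technicality I would want to verify is that the one-sided approaches to $c$ along $\mathcal{T}$ and along $\mathcal{T}^C$ are each nontrivial: since $c$ is on the boundary of the closure of $\mathcal{T}$ and both $\mathcal{T}$ and $\mathcal{T}^C$ carry positive Lebesgue measure, every neighborhood of $c$ intersects each region, so ordinary continuity at $c$ automatically produces the matching one-sided limits used above.
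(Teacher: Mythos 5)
Your argument is correct and is the standard proof of this identification result. The paper itself does not prove the proposition --- it is imported verbatim from Keele and Titiunik (2015, Proposition 1) --- and the proof given there proceeds exactly as you describe: sharp assignment ($D=1\{R\in\mathcal{T}\}$) collapses the one-sided limit of $E[Y\mid R=r]$ over $\mathcal{T}$ to a limit of $E[Y(1)\mid R=r]$ and the limit over $\mathcal{T}^C$ to a limit of $E[Y(0)\mid R=r]$, after which continuity of the two potential-outcome regressions at $c$ evaluates each limit and linearity gives the difference. One point worth flagging: as printed, the paper's definition of $\theta(c)$ places $E[Y(1)-Y(0)\mid R=r]$ inside \emph{both} one-sided limits, under which continuity would force $\theta(c)=0$; your proof implicitly (and correctly) reads the estimand as the difference of one-sided limits of the observed-outcome regression $E[Y\mid R=r]$, which is the intended definition and the one for which the stated conclusion holds.
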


\subsection{Issues in Conventional Estimators}

To estimate heterogeneous treatment effects over the boundary points, one often employs the \textit{distance} strategy which explicitly reduces a multivariate running variable to a scalar distance measure. A frequent choice is the Euclidean distance from a point or the closest boundary \citep{Keele.Titiunik2015}. The \textit{distance} strategy can be easily implemented in most designs via the local-linear estimation \citep[for example]{Fan.Gijbels1992} for the uni-variate RD designs with a MSE optimal bandwidth selection. However, the existing bandwidth selector is not rate optimal when it uses for the \textit{distance} strategy for a multivariate design.


Our first observation is the property of the density function of the \textit{distance} running variable at a boundary point. Let $Z_i$ be the scalar running variable as a distance from a boundary point $c$. Then its density $f_Z(z)$ shrinks to zero as it approaches the boundary when the distance $\tilde{d}$ bounds the Euclidean distance with some constant:
\begin{proposition}\label{prop.zero_density}
    Let $\tilde{d}(\cdot, \cdot)$ be a distance on $\mathbb{R}^d$ such that $\bar{c} \|a - b\| \leq \tilde{d}(a,b)$ for any $a, b \in \mathbb{R}^d$ and some constant $\bar{c} > 0$. Here $\|a - b\|$ is the Euclidean distance between $a = (a_1,\ldots, a_d)'$ and $b = (b_1,\ldots, b_d)'$. Define $Z_i = \tilde{d}(R_i,c)$ with $c = (0,\ldots, 0)'$ and assume that $R_i$ and $Z_i$ have density functions $f$ and $f_Z$, respectively. \footnote{The boundary point value $c$ is set to zero for illustration. The same argument applies in general by normalizing the running variables with respect to the boundary point.
    The distance $\tilde{d}$ includes the Euclidean norm $\|a - b\|$, $\ell^{\infty}$-norm $\|a - b\|_{\infty} = \max_{1 \leq j \leq d} |a_j - b_j| \geq (1/d) \|a - b\|$, and $\ell^1$-norm $\|a - b\|_1 = \sum_{j=1}^d |a_j - b_j| \geq \|a - b\|$.}
    
    Assume that $f$ and $f_Z$ are continuous. Then we have $f_Z(z) \rightarrow 0$ as $z \rightarrow 0$.
\end{proposition}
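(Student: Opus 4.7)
The plan is a two-step argument: first a geometric containment of level sets of $\tilde d$ inside Euclidean balls to bound the CDF $F_Z$, and then a conversion of that bound into a pointwise statement about $f_Z$ using the assumed continuity.

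First I would use the hypothesis $\bar c\|r\| \le \tilde d(r,0)$ to deduce the set inclusion
\begin{equation*}
\{r \in \mathbb{R}^d : \tilde d(r,0) \le z\} \subseteq \{r : \|r\| \le z/\bar c\} = B(0, z/\bar c).
\end{equation*}
Integrating $f$ over the enclosing Euclidean ball and using continuity of $f$ at $0$ (which gives a local upper bound $M$ on $f$ in a neighborhood of the origin), one obtains, for all sufficiently small $z > 0$,
\begin{equation*}
F_Z(z) = P(Z \le z) \le \int_{B(0, z/\bar c)} f(r)\, dr \le M V_d (z/\bar c)^d = K z^d,
\end{equation*}
where $V_d$ is the volume of the Euclidean unit ball and $K = M V_d \bar c^{-d}$.

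Second I would convert this tail bound into the pointwise limit. Because $Z \ge 0$ and $f_Z$ is continuous on $[0,\infty)$, the fundamental theorem of calculus gives $f_Z(0) = \lim_{z \to 0^+} F_Z(z)/z$. In the multivariate setting $d \ge 2$ that motivates the paper, the bound from the first step yields
\begin{equation*}
0 \le f_Z(0) = \lim_{z \to 0^+} \frac{F_Z(z)}{z} \le \lim_{z \to 0^+} K z^{d-1} = 0,
\end{equation*}
so $f_Z(0) = 0$, and continuity of $f_Z$ at the origin then gives $f_Z(z) \to 0$ as $z \to 0$.

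I do not expect any serious obstacle; the content of the proposition is purely geometric. The mass near the boundary point concentrates on a set of Euclidean volume $O(z^d)$, whereas the density of a nonnegative scalar variable at its left endpoint encodes the linear rate $O(z)$, so compatibility for $d \ge 2$ forces $f_Z(0) = 0$. The one subtlety worth flagging is that the conclusion tacitly requires $d \ge 2$: when $d = 1$ the bound $F_Z(z) = O(z)$ is consistent with $f_Z(0) > 0$, but that case falls outside the multivariate RD framework of the paper.
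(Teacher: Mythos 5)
Your proof is correct and follows essentially the same route as the paper's: both bound $P(Z\le z)$ by the probability of the Euclidean ball $B(0,z/\bar c)$, deduce $f_Z(0)=0$ from the resulting $O(z^d)$ bound on the CDF, and then invoke continuity of $f_Z$ to conclude. The only cosmetic difference is that the paper carries out the computation in polar coordinates for $d=2$, whereas you use a volume bound valid for general $d$ and correctly flag that the conclusion tacitly requires $d\ge 2$.
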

\begin{proof}
    By construction of $Z_i$, for  $z > 0$,
      \begin{align*}
        \int_0^z f_Z(r) dr =& P(Z_i \leq z) = P(\tilde{d}(R_i,0) \leq z)
        \leq P(\bar{c} \|R_i\| \leq z) = P(\|R_i\| \leq z/\bar{c})\\
        =& \int_0^{z/\bar{c}} t\left(\int_0^{2\pi} f(t \cos \theta, t \sin \theta) d\theta \right)dt\\
        =& \int_0^z (1/\bar{c})^{2} r \left(\int_0^{2\pi} f(\bar{c}^{-1} r \cos \theta , \bar{c}^{-1} r \sin \theta) d\theta \right) dr
    \end{align*}
    where the last equality uses the change of variable $r = \bar{c} t$. If $f$ is continuous, $f_Z(0) = 0$ by using the above inequality. Since $f_Z$ is continuous, the statement follows.
\end{proof}

To illustrate the proposition in an example, consider $R_i = (R_{1i}, R_{2i})$ where $R_{1i}$ and $R_{2i}$ independent each other, and $R_{1i} \sim U[-1,1]$ and $R_{2i} \sim U[0,1]$. The distribution function of $Z_i = \|R_i\|$ is $P(Z_i \leq z) = P(R_{1i}^2 + R_{2i}^2 \leq z^2) = (\pi/4) z^2$. The half-circle area shrinks to zero at the order of $z^2$ as $z$ approaches the value $0$ at the boundary point $(0,0)$. 


This zero-density problem itself may appear to be not an immediate problem for the \cite{Calonico.Cattaneo.Titiunik2014} (henceforth, CCT or \textit{rdrobust}) estimator as its bandwidth selector does not estimate the density directly. Nevertheless, it leads to another problem that makes CCT bandwidth selector suboptimal when it is used for the \textit{distance} strategy. 

To demonstrate its mechanism, first we consider the simpler \cite{Imbens.Kalyanaraman2012} (IK) bandwidth selector with the Euclidean distance running variable $Z_i = D_i \|R_i\| - (1-D_i) \|R_i\|$. The IK bandwidth selector for the \textit{distance} strategy takes the following form
\[
 \hat{h}_{IK} = C \cdot \left( \frac{\hat{V}_{IK} / \hat{f}_Z(0)}{\hat{B}_{IK}} \right)^{1/5}  n^{-1/5}
\]
where $\hat{B}_{IK}$ depends on the regularization term and the estimator of the second derivative of $E[Y_i|Z_i=z]$, $\hat{V}_{IK}$ depends on the estimator of the conditional variance $V(Y_i|Z_i=z)$, $\hat{f}_Z(0) = \frac{1}{nh_{\mathrm{pilot}}} \sum_{i=1}^n K(Z_i/h_{\mathrm{pilot}})$ with some kernel function $K$ and the pilot bandwidth $h_{\mathrm{pilot}}$. In Online Appendix \ref{sec.density}, we show that $\hat{f}_{Z}(0)$ converges to zero while $h_{\mathrm{pilot}}^{-1} \cdot \hat{f}_{Z}(0)$ converges to a positive constant when $f(0)$ is strictly positive. Hence, if $\hat{V}_{IK}$ and $\hat{B}_{IK}$ converge to strictly positive constants, then the \textit{variance term} $\hat{V}_{IK}/\hat{f}_{Z}(0)$ diverges while $h_{pilot} \hat{V}_{IK}/\hat{f}_{Z}(0)$ converges to a strictly positive constant. Hence, we obtain 
$
\hat{h}_{IK} = O_p(h_{\mathrm{pilot}}^{-1/5} n^{-1/5}),
$
and the rate of $\hat{h}_{IK}$ depends on the pilot bandwidth. For instance, if $h_{\mathrm{pilot}} = O_p(n^{-1/5})$, then $\hat{h}_{IK} = O_p(n^{-4/25})$, which is suboptimal in a two-dimensional estimation problem.

This \textit{diverging variance term} problem arises in CCT bandwidth selector as well even though it avoids the density estimation directly. The CCT bandwidth selector has the form
\begin{equation}
    \hat{h}_{CCT} = C \cdot \left( \frac{\tilde{V}_{CCT}}{\tilde{B}_{CCT}}  \right)^{1/5} n^{-1/5} \label{CCT_band_distance}
\end{equation}
where $\tilde{B}_{CCT}$ depends on the regularization term and the estimator of the second derivative of $E[Y_i|Z_i=z]$ and $\tilde{V}_{CCT}$ depends on the estimators of the conditional variance $V(Y_i|Z_i=z)$ and $f_Z(0)$ but they are estimated in a sandwich form so that the density estimation does not arise explicitly. Specifically, in Online Appendix F, we study the variance term $
    \tilde{V}_{CCT} = n h_{\mathrm{initial}} \left\{ \hat{V}_{+}(h_{\mathrm{initial}}) + \hat{V}_{-}(h_{\mathrm{initial}}) \right\}
$ for $h_{\mathrm{initial}} = O_p(n^{-1/5})$ where the variance term is the sum of the elements of sandwich forms: $\hat{V}_{+}(h) = e_1' \Gamma_{+}(h)^{-1} \Psi_{+}(h) \Gamma_{+}(h)^{-1} e_1 /n$, and $\hat{V}_{-}(h) = e_1' \Gamma_{-}(h)^{-1} \Psi_{-}(h) \Gamma_{-}(h)^{-1} e_1 /n$ for $r_1(z) = (1,z)'$ and $e_1 = (1,0)'$. \footnote{See Online Appendix F for the formal definitions for $\Gamma_+, \gamma_-, \Psi_+,$ and $\Psi_-$. We consider a simplified version for the $\Psi_+$ and $\Psi_{-}$ matrices which take a known variance function instead of the original formula with the plug-in estimates of the residual variance function.} As Proposition F.1, we show that $h^{-1}\Gamma_+(h)$ and $\Psi_+(h)$ converges to constant matrices, instead of $\Gamma_+(h)$ and $h \Psi_+(h)$ convergence as required in the original procedure. Hence, the positive side of the original variance term $n h_{\mathrm{initial}} \hat{V}_{+}(h)$ diverges while
\[
    n h_{\mathrm{initial}}^2 \hat{V}_{+}(h_{\mathrm{initial}}) = e_1' \left\{ h_{\mathrm{initial}}^{-1}\Gamma_{+}(h_{\mathrm{initial}}) \right\}^{-1} \Psi_{+}(h_{\mathrm{initial}}) \left\{ h_{\mathrm{initial}}^{-1}\Gamma_{+}(h_{\mathrm{initial}}) \right\}^{-1} e_1
\]
converges to a positive constant. If $\tilde{B}_{CCT}$ also converges to a positive constant, we have
\[
\hat{h}_{CCT} = C \cdot \left( \frac{h_{\mathrm{initial}} \tilde{V}_{CCT}}{\tilde{B}_{CCT}} \right)^{1/5} h_{\mathrm{initial}}^{-1/5} n^{-1/5} = O_p\left( h_{\mathrm{initial}}^{-1/5} n^{-1/5} \right).
\]
Hence, if $h_{\mathrm{initial}} = O_{p}(n^{-1/5})$, the convergence rate of $\hat{h}_{CCT}$ is $n^{-4/25}$ which is the same suboptimal rate as the IK bandwidth for the multivariate \textit{distance} bandwidth selector. See the complete discussion for the Online Appendix F.

\subsection{Our Estimator} \label{sec.estimator}

Given the suboptimality of the \textit{distance} strategy, we propose a multivariate RD estimator for the heterogeneous treatment effect over the boundary with the MSE optimal bandwidths.

We demonstrate our estimator in a special case of two-dimensional running variables. Consider the following local-linear estimator $\hat{\beta}^+(c)= (\hat{\beta}^+_0(c), \hat{\beta}^+_1(c),\hat{\beta}^+_2(c))'$
\[
\hat{\beta}^+(c) = \argmin_{(\beta_0,\beta_1,\beta_2)' \in \mathbb{R}^{3}}\sum_{i=1}^{n}(Y_i - \beta_0 - \beta_1(R_{i,1} - c_1) - \beta_{2}(R_{i,2}-c_2))^2K_h\left(R_i - c\right)1\{R_i \in \mathcal{T}\}
\]
where
$
K_{h}(R_i - c) = K\left(({R_{i,1} - c_1) / h_1},{(R_{i,2} - c_2) / h_2}\right)
$
and each $h_j$ is a sequence of positive bandwidths such that $h_j \to 0$ as $n \to \infty$. Similarly, let $\hat{\beta}^-(c)$ be the estimator using $1\{R_i \in \mathcal{T}^c\}$ subsample. Hence, our multivariate RD estimator at $c$ is $\hat{\beta}^+_0(c) - \hat{\beta}^-_0(c)$. Our estimator uses the theoretical results of \cite{Ruppert.Wand1994}, \cite{Masry1996}, and \cite{guMultivariateLocalPolynomial2015} for the multivariate RD designs. Specifically, we employ \cite{guMultivariateLocalPolynomial2015} with a slightly extended result such as non-product kernels and explicit higher-order expressions to allow us to conduct the \cite{Calonico.Cattaneo.Titiunik2014} style bias-correction procedure.  

As we consider a random sample, the treated sample is independent of the control sample. Without the loss of generality, we consider the following nonparametric regression models for each sample: $Y_i = m_+(R_i) + \varepsilon_{+,i},\ E[\varepsilon_{+,i}|R_i] = 0,\ i \in \{1,\dots, n: R_i \in \mathcal{T}\}$ and $Y_i = m_-(R_i) + \varepsilon_{-,i},\ E[\varepsilon_{-,i}|R_i] = 0,\ i \in \{1,\dots, n: R_i \in \mathcal{T}^C\}.$

For the asymptotic normality, we impose the following regularity conditions that are standard in kernel regression estimations. We provide the conditions under its general possible form. In Online Appendix \ref{sec.asymptotic}, we present the general results for $p$th order local-polynomial estimation with $d$-dimensional running variables. The general results in the Online Appendix are the basis of the bias correction procedure of our estimator.

In Assumption \ref{Ass1}, we assume the existence of a continuous density function for the running variable $R$. Assumption \ref{Ass2} is the regularity conditions for a kernel function. We select a particular set of kernel functions for our subsequent analysis. Assumption \ref{Ass3} imposes a set of smoothness conditions for the conditional mean functions $m_+$ and $m_-$ and for the conditional moments of residuals $\varepsilon_{+,i}$ and $\varepsilon_{-,i}$. Assumption \ref{Ass4} specifies the rate of convergence of the vector of bandwidths $\{h_1,\ldots, h_d\}$ relative to the sample size $n$.
\begin{assumption}\label{Ass1}
Let $U_{r}$ be a neighborhood of $r=(r_1,\dots, r_d)' \in \mathcal{R}$. 
\begin{itemize}
\item[(a)] The vector of random variables $R_i$ has a probability density function $f$. 
\item[(b)] The density function $f$ is continuous on $U_r$ and $f(r)>0$. 
\item[(c)] For each $r \in \mathcal{R}$ on the boundary of the treated region $\mathcal{T}$, \textcolor{black}{there exists $\delta>0$ such that $\mathcal{T} \cap \prod_{j=1}^d[r_j -\delta, r_j + \delta] = [r_2,r_2+\delta] \times \prod_{j \neq 2} [r_j -\delta, r_j + \delta]$.} 
\end{itemize}
\end{assumption}

Condition (c) means that, in a sufficiently small neighborhood of the point $r$ of interest on the boundary of $\mathcal{T}$, the boundary is linearly separated. \footnote{This condition (c) excludes the evaluation of the corner point. The following implementation and the actual numerical simulation and empirical analyses avoid the evaluation exactly at the corner of the boundary. Adapting the finding of \cite{cattaneoEstimationInferenceBoundary2025}, the same statement should follow with a relaxed condition (c) which allows for the corner point. One may allow for more complex boundary structures in the neighborhood of $r$ in a different setting such as those in \cite{cattaneoEstimationInferenceBoundary2025}; however, extending our framework to their setting is beyond the scope of this paper.}

\begin{assumption}\label{Ass2}
Let $K :\mathbb{R}^{d} \to [0,\infty)$ be a kernel function such that
\textcolor{black}{
\begin{itemize}
\item[(a)]  $\int K_\pm(z)dz = 1$ where $K_+(z)=K(z)1\{z_2\geq 0\}$, and $K_-(z) = K(z)1\{z_2 <0\}$.
\item[(b)] The kernel function $K$ is bounded and there exists a constant $C_{K}>0$ such that $K$ is supported on $[-C_{K},C_{K}]^d$. 
\item[(c)] Define $\check{z}:= (1,(z)'_1,\dots,(z)'_p)',\ (z)_L = \left(\prod_{\ell=1}^{L}z_{j_\ell}\right)'_{1\leq j_1 \leq \dots \leq j_L\leq d},\ 1\leq L \leq p. $
The matrix $S_\pm=\int K_\pm(\bm{z})\left(
\begin{array}{c}
1 \\
\check{\bm{z}}
\end{array}
\right)
(1\ \check{\bm{z}}')d\bm{z}$ is non-singular.
\end{itemize}
}
\end{assumption}

\begin{assumption}\label{Ass3}
Let $U_{r}$ be a neighborhood of $r \in \mathcal{R}$. 
\begin{itemize}
\item[(a)] The mean function $m_+$ is $(p+1)$-times continuously partial differentiable on $U_r$ and define $\partial_{j_1\dots j_L}m_+(r):= \frac{\partial m_+(r)}{\partial r_{j_1}\dots r_{j_L}}$, $1 \leq j_1,\dots, j_L \leq d$, $0 \leq L \leq p+1$. When $L=0$, we set $\partial_{j_1 \dots j_L}m_+(r) = \partial_{j_0}m_+(r)= m_+(r)$. The parallel restriction holds for the mean function $m_-$.
\item[(b)] The variance function $\sigma_+^2(z) = E[\varepsilon_{+,i}^2|R_i = z]$ and  $\sigma_-^2(z) = E[\varepsilon_{-,i}^2|R_i = z]$ are continuous at $r$.
\item[(c)] There exists a constant $\delta>0$ such that $\sup_{z \in U_r}E[|\varepsilon_{+,1}|^{2+\delta}|R_1=z]\leq U(r) < \infty$ and $\sup_{z \in U_r}E[|\varepsilon_{-,1}|^{2+\delta}|R_1=z]\leq U(r) < \infty$
\end{itemize}
\end{assumption}

\begin{assumption}\label{Ass4}
As $n \to \infty$, 
\begin{itemize}
\item[(a)] $h_j \to 0$ for $1 \leq j \leq d$, 
\item[(b)] $nh_1 \cdots h_d \times h_{j_1}^2 \dots h_{j_p}^2 \to \infty$ for $1 \leq j_1 \leq \dots \leq j_{p} \leq d$,
\item[(c)] $nh_1 \cdots h_d \times h_{j_1}^2 \dots h_{j_p}^2h_{j_{p+1}}^2 \to c_{j_1\dots j_{p+1}} \in [0,\infty)$ for $1 \leq j_1 \leq \dots \leq j_{p+1} \leq d$.
\end{itemize} 
\end{assumption}

Under these assumptions, we establish the asymptotic normality of our estimator $\hat{\beta}^+_0(c) - \hat{\beta}^-_0(c)$. \textcolor{black}{The result follows from Theorem \ref{thm: LP-CLT} in Appendix \ref{sec.asymptotic}.}
\begin{theorem}[Asymptotic normality of local-linear estimators]\label{thm: LP-CLT-main-d2}
Under Assumptions \ref{Ass1}, \ref{Ass2}, \ref{Ass3} and \ref{Ass4} for $r=c$ with $d = 2$ and $p = 1$, as $n \to \infty$, we have
\begin{align*}
\sqrt{nh_1 h_2}
&\left[\right.H^{ll}\left((\hat{\beta}^+_0(c) - \hat{\beta}_0^-(c)) - e'_1(M_+(c) - M_-(c))\right)\\
-& \textcolor{black}{e'_1 \{S_+^{-1}B^{(2,1)}M_{+,n}^{(2,1)}(c) - S_-^{-1}B^{(2,1)}M_{-,n}^{(2,1)}\}}\left.\right] \\
&\textcolor{black}{\stackrel{d}{\to} N\left( \0 , e'_1\left\{\frac{\sigma_{+}^2(c)}{f(c)} S_+^{-1}\mathcal{K}_+S_+^{-1} +\frac{\sigma_{-}^2(c)}{f(c)} S_-^{-1}\mathcal{K}_-S_-^{-1}\right\}e_1 \right), }
\end{align*}
for $e_1 = (1,0,0)'$ and $H^{ll} = \diag(1,h_1,h_2) \in \mathbb{R}^{3 \times 3}$ where
\begin{align*}
M_+(c) &= \left(m_+(c),\partial_1 m_+(c), \partial_2 m_+(c)\right)', M_-(c) = \left(m_-(c),\partial_1 m_-(c), \partial_2 m_-(c)\right)',\\
M_{+,n}^{(2,1)}(r) 
&= \left(\frac{\partial_{11}m_+(c)}{2}h_1^{2},\partial_{12}m_+(c)h_1 h_2,\frac{\partial_{22}m_+(c)}{2}h_2^{2}\right)',\\
M_{-,n}^{(2,1)}(r) 
&= \left(\frac{\partial_{11}m_-(c)}{2}h_1^{2},\partial_{12}m_-(c)h_1 h_2,\frac{\partial_{22}m_-(c)}{2}h_2^{2}\right)',\mbox{ and }\\
B^{(2,1)} &= \int \left(
\begin{array}{c}
1 \\
\check{\bm{z}}
\end{array}
\right)
(\bm{z})'_{2}d\bm{z},\ \textcolor{black}{\mathcal{K}_\pm= \int K_\pm^2(\bm{z})\left(
\begin{array}{c}
1 \\
\check{\bm{z}}
\end{array}
\right)
(1\ \check{\bm{z}}')d\bm{z}.}
\end{align*}
\end{theorem}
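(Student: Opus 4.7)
The plan is to reduce the two-sample RD problem to two independent local-polynomial estimations and then invoke Theorem~\ref{thm: LP-CLT} on each side. Since the $n$ observations are i.i.d.\ and the subsamples $\{i: R_i\in\mathcal{T}\}$ and $\{i: R_i\in\mathcal{T}^C\}$ use disjoint sets of observations, the estimators $\hat{\beta}^+(c)$ and $\hat{\beta}^-(c)$ are independent, so it suffices to establish the marginal asymptotic normality of the rescaled $H^{ll}(\hat{\beta}^+(c)-M_+(c))$ and $H^{ll}(\hat{\beta}^-(c)-M_-(c))$, and to add their variances at the end.

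For the treated side, I would write $\hat{\beta}^+(c) = S_{n,+}^{-1} T_{n,+}$ with $S_{n,+} = n^{-1}\sum_i r(R_i-c) r(R_i-c)' K_h(R_i-c)\,1\{R_i\in\mathcal{T}\}$, $r(z)=(1,z_1,z_2)'$, and similarly for $T_{n,+}$. The key simplification is that by Assumption~\ref{Ass1}(c), for small enough $h_1,h_2$ the intersection of $\mathcal{T}$ with a neighborhood of $c$ coincides with the half-space $\{r_2\geq c_2\}$, so the effective kernel becomes $K_+$. Changing variables to $z_i=((R_{i,1}-c_1)/h_1,(R_{i,2}-c_2)/h_2)$ and using continuity of $f$ at $c$ gives the design convergence $(h_1 h_2)^{-1}(H^{ll})^{-1} S_{n,+} (H^{ll})^{-1}\stackrel{p}{\to} f(c) S_+$. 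A Taylor expansion of $m_+$ to order $p+1=2$ around $c$ isolates the leading bias term $S_+^{-1} B^{(2,1)} M_{+,n}^{(2,1)}(c)$, with the remainder negligible under Assumptions~\ref{Ass3}(a) and~\ref{Ass4}(c). The stochastic part $\sqrt{nh_1 h_2}\,(H^{ll})^{-1} n^{-1}\sum_i r(R_i-c) K_h(R_i-c)\varepsilon_{+,i}\,1\{R_i\in\mathcal{T}\}$ is handled by a Lindeberg--Feller CLT, whose Lindeberg condition follows from the uniform $(2+\delta)$-moment bound in Assumption~\ref{Ass3}(c) together with the rate in Assumption~\ref{Ass4}(b); the resulting variance is $\frac{\sigma_+^2(c)}{f(c)} S_+^{-1}\mathcal{K}_+ S_+^{-1}$. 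This is exactly the specialization of Theorem~\ref{thm: LP-CLT} to $d=2$, $p=1$, and half-space kernel $K_+$.

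The analogous argument on $\{R_i\in\mathcal{T}^C\}$ with $K_-$, $\mathcal{K}_-$, and $M_-$ yields the marginal limit for the control side. By independence of the two subsamples, adding variances and subtracting the two bias-corrected normals gives the joint CLT in the stated form; pre-multiplying by $e_1'$ isolates the intercept component, which is the scalar target $\hat{\beta}_0^+(c)-\hat{\beta}_0^-(c)$. The main obstacle is notational rather than conceptual: one has to verify carefully that Assumption~\ref{Ass1}(c) makes the boundary locally linear so that the half-space kernels $K_\pm$ apply, and that the normalizations involving $H^{ll}$ and the matrix $B^{(2,1)}$ align with the general theorem's statement; once this bookkeeping is in place, convergence of the design matrices, handling of non-product kernels, and the explicit second-order bias expression are already covered by Theorem~\ref{thm: LP-CLT}.
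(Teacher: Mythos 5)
Your proposal is correct and follows essentially the same route as the paper: the paper likewise treats the two subsamples as independent, notes that Assumption \ref{Ass1}(c) reduces the local geometry to a half-space so that $K_\pm$ and $S_\pm$, $\mathcal{K}_\pm$ apply, and then simply specializes the general Theorem \ref{thm: LP-CLT} (whose proof uses the same $S_n$/$V_n$/$B_n$ decomposition, design-matrix convergence, Taylor expansion for the bias, and a Lyapounov CLT that you describe) to $d=2$, $p=1$ on each side before adding the two asymptotic variances.
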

Given the bias and variance expressions in Theorem \ref{thm: LP-CLT-main-d2}, we may find the common bandwidth $h = h_1 = h_2$ that minimizes the following asymptotic expansion of the mean-squared error (MSE) of $\hat{m}_+(c) - \hat{m}_{-}(c)$: for $e_1 = (1,0,0)'$,
\textcolor{black}{
\begin{align*}
&\underbrace{
 \left[e'_1 S_+^{-1} B^{(2,1)} \begin{pmatrix}
 \partial_{11}m_+(c)/2\\
 \partial_{12}m_+(c)\\
 \partial_{22}m_+(c)/2 
 \end{pmatrix}
 - e'_1 S_-^{-1} B^{(2,1)}\begin{pmatrix}
 \partial_{11}m_-(c))/2\\
 \partial_{12}m_-(c)\\
 \partial_{22}m_-(c))/2
 \end{pmatrix}\right]^2 h^4}_{\text{Bias term}}\\
&\quad  + \underbrace{\frac{1}{nh^2}\left\{\frac{\sigma_+^2(c)}{f(c)} e'_1 S_+^{-1} \mathcal{K}_+S_+^{-1}e_1+\frac{\sigma_-^2(c)}{f(c)} e'_1 S_-^{-1} \mathcal{K}_-S_-^{-1}e_1\right\}}_{\text{Variance term}}.
\end{align*}
}
 In general, it would be more intuitive and reasonable to consider heterogeneous bandwidth $h_1 \neq h_2$, and our main numerical illustrations are based on the heterogeneous bandwidths. In one of our empirical analysis dataset, two running variables take quite different ranges of values because one of the running variable has twice or three times larger scale than the other. If we use a common bandwidth, a possibly awkward squared area will be used for the estimation while it may be too large for one dimension and too small for the other dimension. One can avoid such an awkward situation by rescaling the running variables appropriately, but the results may change substantially by rescaling. The heterogeneous bandwidths allows users to avoid such a difficult rescaling task and use the original scaling for the estimation. See Appendix \ref{sec.hetero_band} for the details for the heterogeneous case.

\section{Numerical Results}\label{sec.numerical}
We demonstrate the numerical properties of our estimator in Monte Carlo simulations and empirical applications. Numerical simulations use the first empirical context of a Colombian scholarship, \citet*{Londono-Velez.Rodriguez.Sanchez2020,londono_data_2020}. Specifically, we evaluate the performances of our estimator in simulations which take higher-order approximations of the Colombian data as \textit{true} data generating processes, and in empirical application with the actual dataset. In their application, the scholarship of interest is primarily determined by two thresholds: merit-based and need-based. As a result, a policy \textit{boundary} exists instead of a single cutoff. 
Figure \ref{fig:score_map} is a scatter plot of two running variables with $30$ boundary points. 
The $30$ points are selected from taking $15$ points from the maximum value across the boundary to $0$. In the simulations below, we found that a largest point in SISBEN boundary is challenging for all methods evaluated, and we evaluate $28$ points denoted as red filled circles after removing the extreme boundary points denoted as blank black circles in the empirical application later. We explain the institutional details further in Section \ref{sec.application}.
\begin{figure}[H]
    \centering
    \includegraphics[width=0.75\hsize]{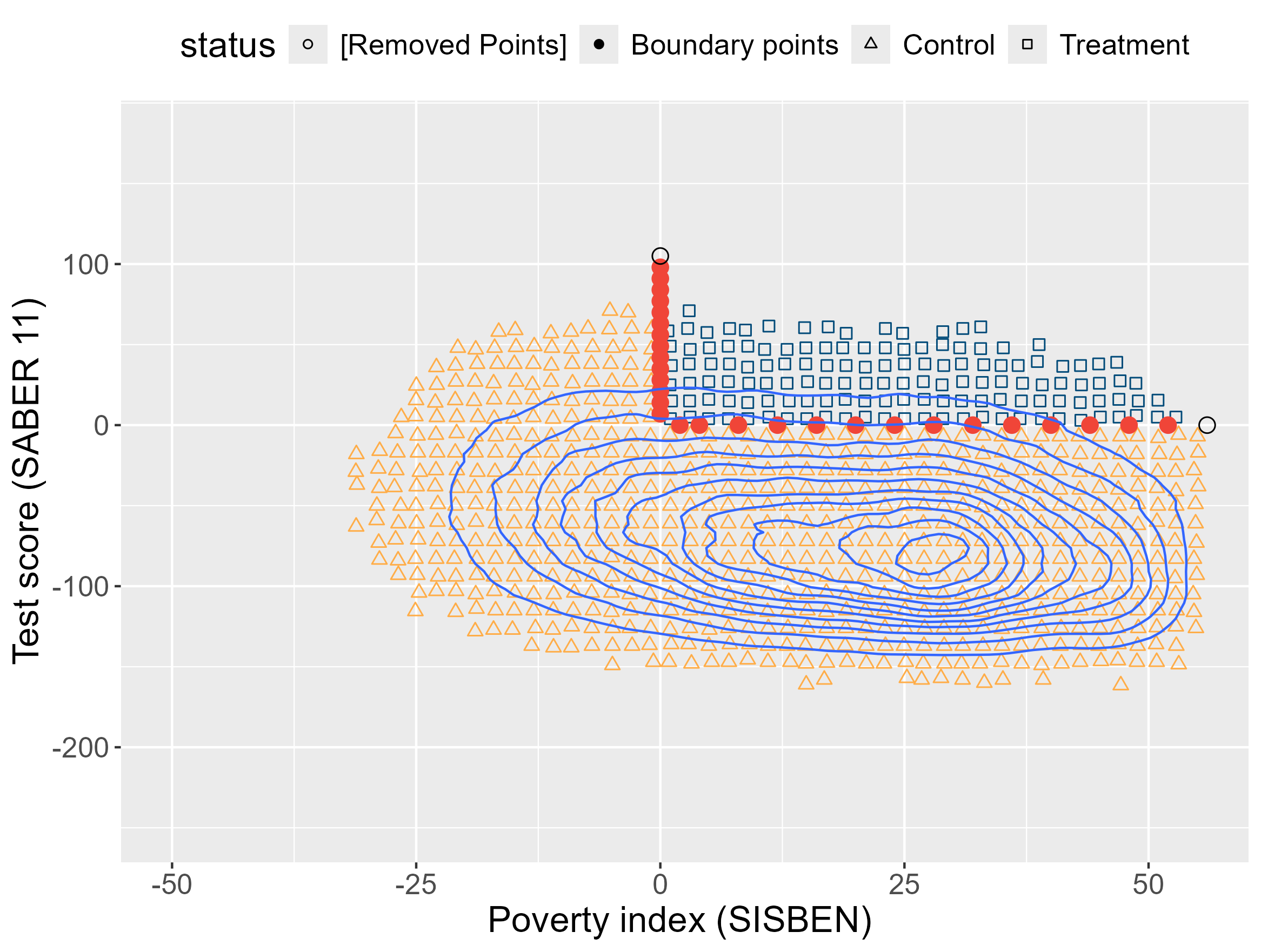}
    \caption{A binned scatter plot with joint density estimates in solid contour plot curves. The $x$-axis represents the SISBEN score minus the policy cutoff; the $y$-axis represents the SABER11 score minus the policy cutoff. Each bin has length $2$ in $x$-axis and $11$ in $y$-axis and its bin-wise median values in each axis are shown in the plot, excluding bins which have fewer than 20 observations in each bin. Circles over the boundary represent 30 points to evaluate in the simulation, where we use the filled $28$ points for the empirical analysis later. Positive scores in both measures imply that the requirements are satisfied. (Source: our calculation using \citealp{londono_data_2020})}
    \label{fig:score_map}
\end{figure}

\subsection{Simulation Results}\label{sec.simul}

Given the dataset, we constructed four designs which are all two-dimensional saturated higher-order polynomial approximations of the conditional expectation functions at four boundary points. Specifically, we use the fully saturated polynomials up to fourth orders plus the fifth order terms for $X$ and $Y$ each. The four boundary points are at a higher SISBEN (need-based) boundary $(7)$, an intermediate SIBEN boundary $(13)$, an intermediate SABER11 (merit-based) boundary $(19)$ and a higher SABER11 (merit-based) boundary $(25)$. Figures \ref{fig:shapes} show the two-dimensional plots of the mean functions. For each draw of a simulation sample, we draw a random sample of two-dimensional running variables as $R_1 \sim U[-1,1]$ and $R_2 \sim 2 \times Beta(2,4)-1$ independent of each other over a rescaled rectangular support, and generate the outcome variable as $m(R_{i1},R_{i2}) + \epsilon_i$ where $\epsilon_i \sim N(0,0.1295^2)$.
\begin{figure}[H]
    \centering 
    \begin{minipage}{0.49\hsize}
    (a) Design 1
    \includegraphics[width=\columnwidth]{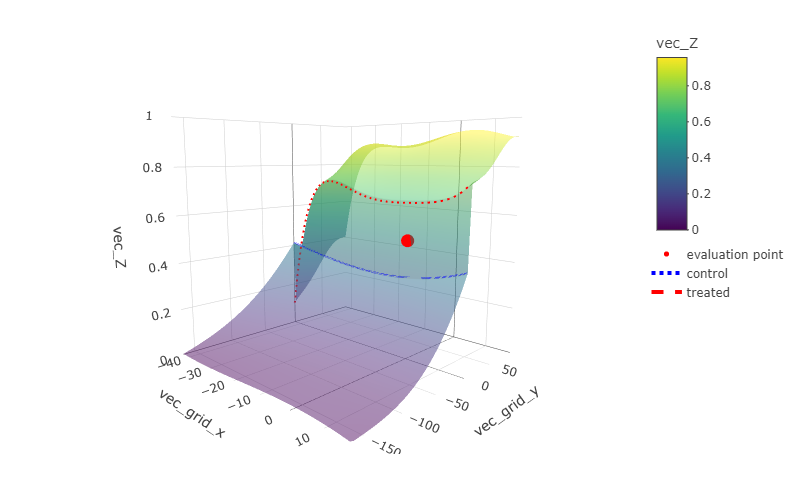}
    \centering 
    \end{minipage}    
    \begin{minipage}{0.49\hsize}
    (b) Design 2
    \includegraphics[width=\columnwidth]{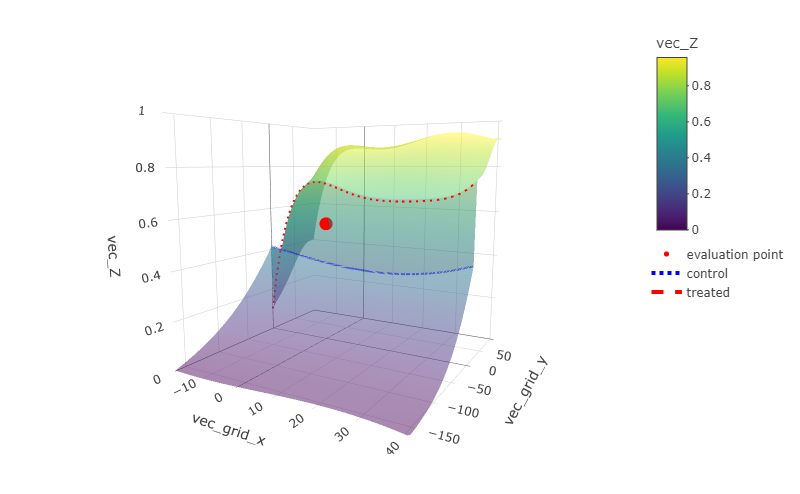}
    \centering
    \end{minipage}
    \begin{minipage}{0.49\hsize}
    (c) Design 3
    \includegraphics[width=\columnwidth]{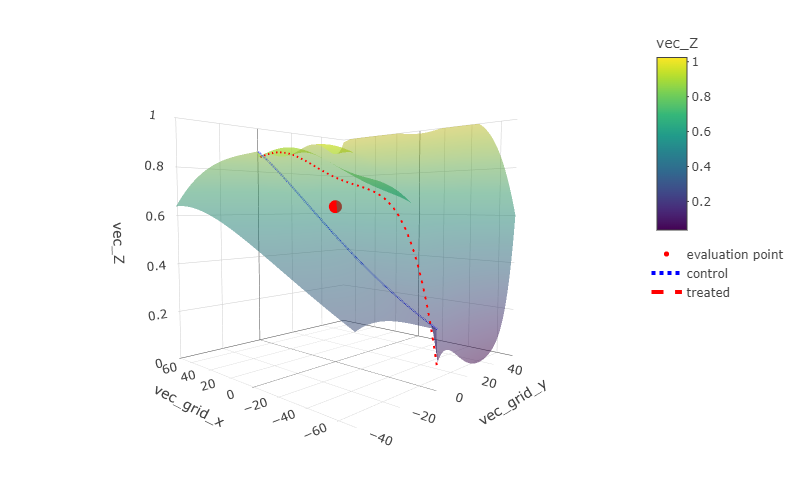}
    \centering 
    \end{minipage}
    \begin{minipage}{0.49\hsize}
    (d) Design 4
    \includegraphics[width=\columnwidth]{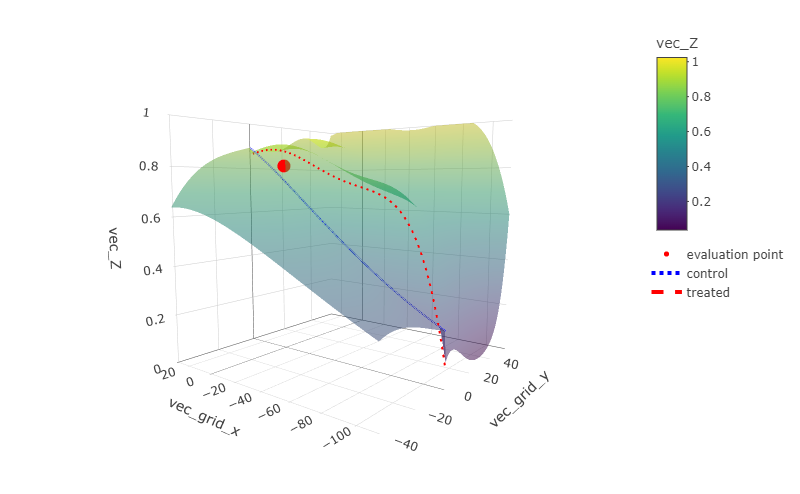}
    \centering 
    \end{minipage}
    \caption{3D plots of the mean functions at four boundary points. The horizontal line is the boundary; the center circle is the evaluation point. We rotate the axes so that the $X$-axis aligns with the boundary and the sign of $Y$-axis value determines the treatment status. See Appendix \ref{sec.polynomialShapes} for the exact polynomial shapes used and supports for each design.}
    \label{fig:shapes}
\end{figure}

We compare the quality of our estimator relative to the \textit{rdrobust} estimation. Figure \ref{fig:shape_2d} shows histograms of realized estimates of $10,000$ times replications for the primary data generating process. The light-colored histograms of our \textit{2D local poly} estimates tend to have thinner shapes than the dark-colored histograms of \textit{rdrobust} estimates.
\begin{figure}[H]
    \centering
    \begin{minipage}{0.49\hsize}
     (a) Design 1
    \includegraphics[width=\columnwidth]{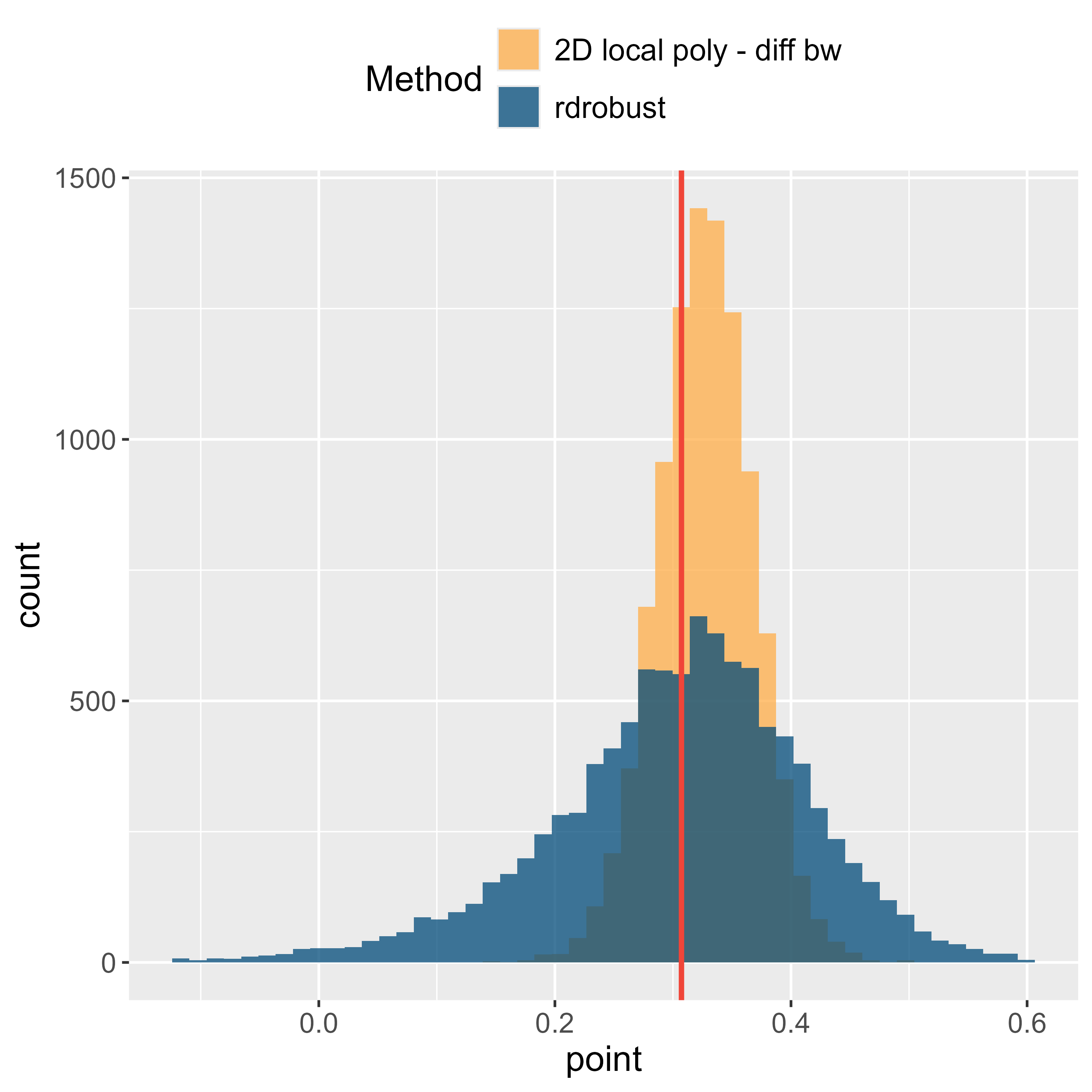}
    \centering
    \end{minipage}    
    \begin{minipage}{0.49\hsize}
    (b) Design 2
    \includegraphics[width=\columnwidth]{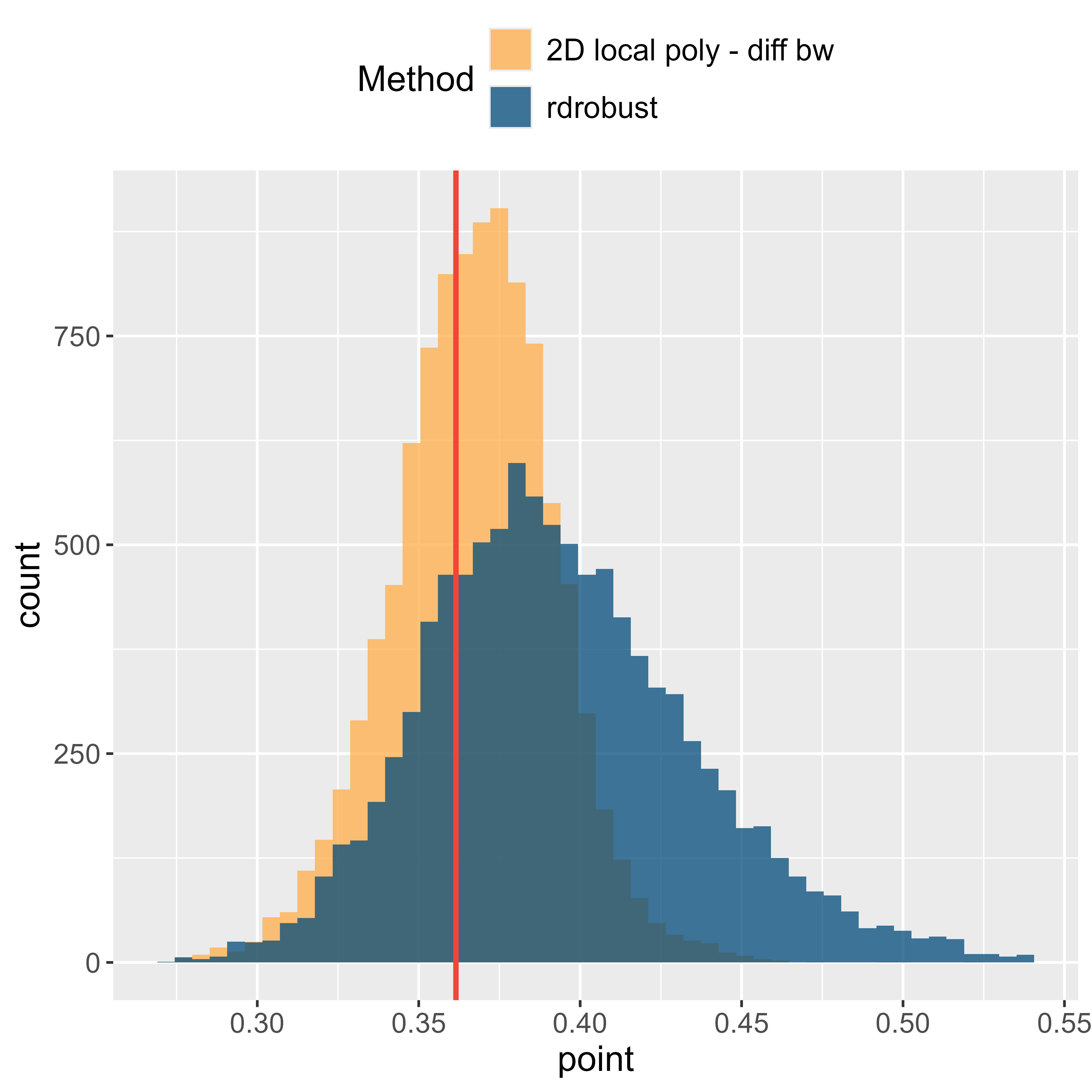}
    \centering
    \end{minipage}
    \begin{minipage}{0.49\hsize}
    (c) Design 3
    \includegraphics[width=\columnwidth]{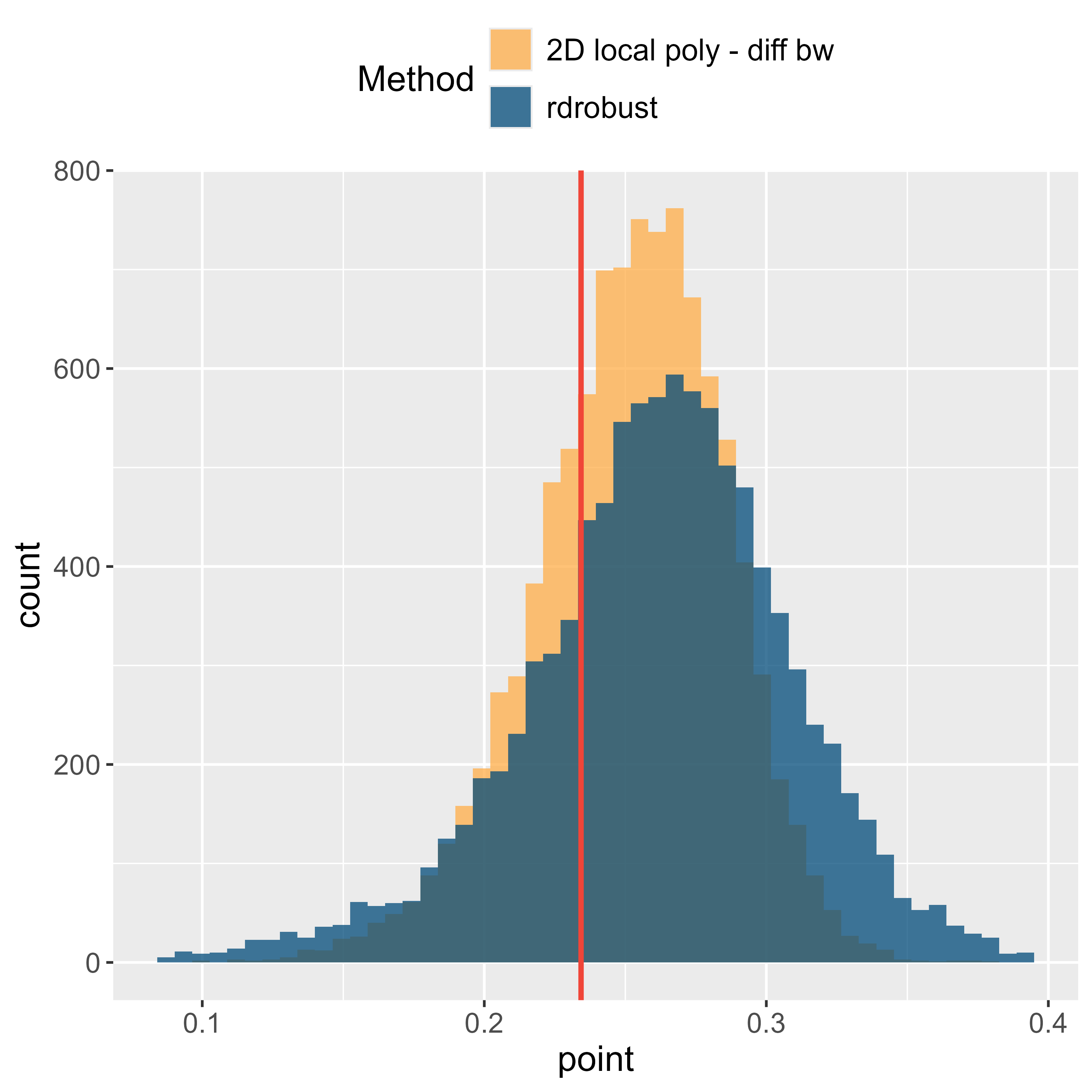}
    \centering
    \end{minipage}
    \begin{minipage}{0.49\hsize}
    (d) Design 4
    \includegraphics[width=\columnwidth]{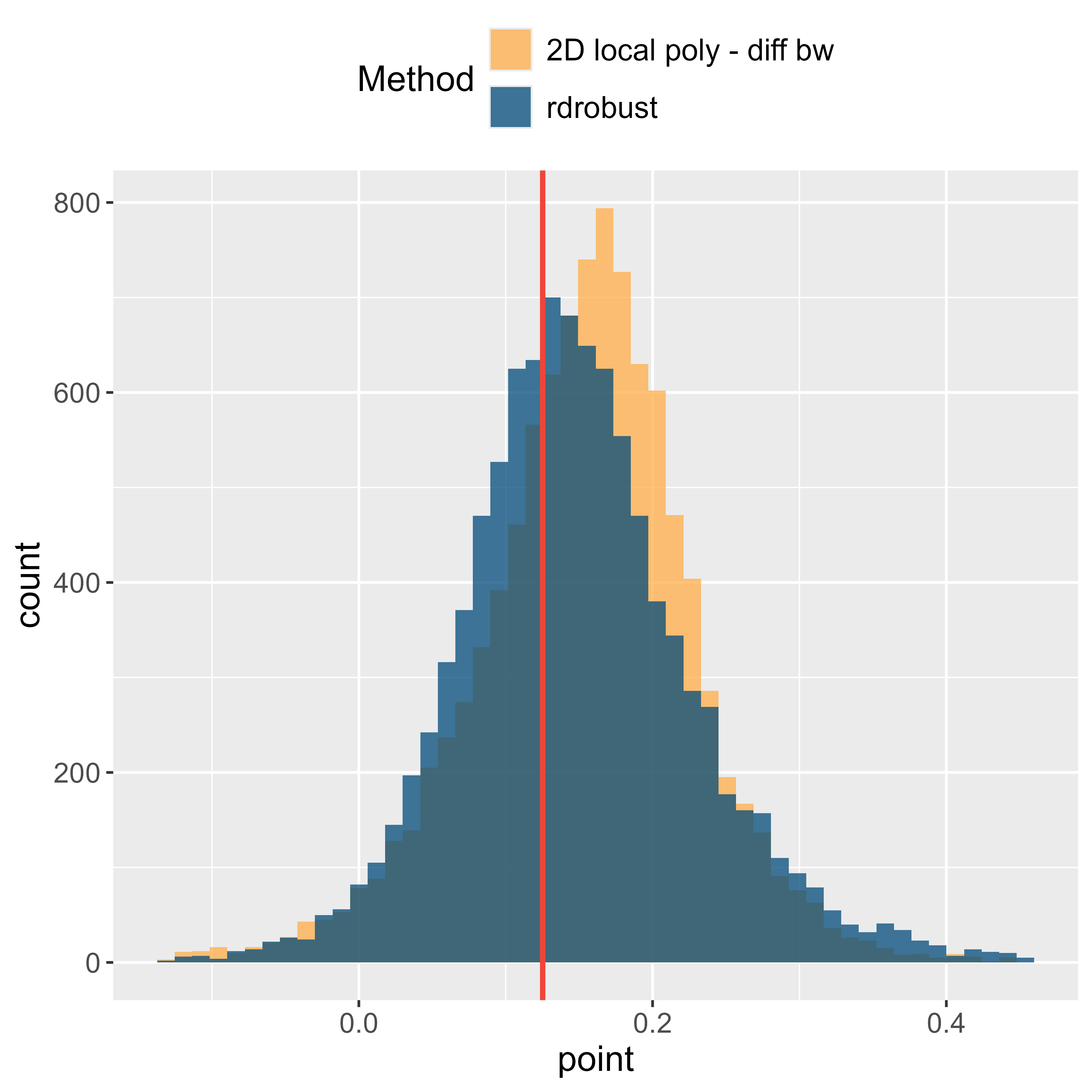}
    \centering
    \end{minipage}
    \caption{Histograms of point estimates with trimming of 1\% tail realizations. Light-colored distributions are of our estimator; dark-colored distributions are of the \textit{rdrobust}.}
    \label{fig:shape_2d}
\end{figure}

\begin{table}[H]
\centering
\caption{Simulation Results For Four Designs.}
\begingroup
\fontsize{12.0pt}{14.4pt}\selectfont
\begin{longtable}{llrrrr}
\toprule
& Estimator & length & bias & coverage & rmse \\ 
\midrule\addlinespace[2.5pt]
Design 1 & rdrobust & 0.413 & -0.003 & 0.931 & 0.113 \\ 
Design 1 & 2D local poly - common bw & 0.211 & 0.029 & 0.942 & 0.054 \\ 
Design 1 & 2D local poly - diff bw & 0.266 & 0.020 & 0.982 & 0.046 \\ 
\midrule\addlinespace[2.5pt]
Design 2 & rdrobust & 0.180 & 0.033 & 0.930 & 0.054 \\ 
Design 2 & 2D local poly - common bw & 0.138 & 0.002 & 0.989 & 0.027 \\ 
Design 2 & 2D local poly - diff bw & 0.133 & 0.006 & 0.985 & 0.026 \\ 
\midrule\addlinespace[2.5pt]
Design 3 & rdrobust & 0.179 & 0.027 & 0.903 & 0.057 \\ 
Design 3 & 2D local poly - common bw & 0.167 & 0.019 & 0.960 & 0.040 \\ 
Design 3 & 2D local poly - diff bw & 0.166 & 0.017 & 0.970 & 0.038 \\ 
\midrule\addlinespace[2.5pt]
Design 4 & rdrobust & 0.292 & 0.023 & 0.934 & 0.086 \\ 
Design 4 & 2D local poly - common bw & 0.353 & 0.032 & 0.980 & 0.080 \\ 
Design 4 & 2D local poly - diff bw & 0.359 & 0.028 & 0.975 & 0.081 \\ 
\bottomrule
\end{longtable}
\endgroup

\label{tab:MC_result}
\vspace{0.5cm}
\begin{minipage}{0.9\hsize}\footnotesize
 \textit{Notes:} Results are from $10,000$ replication draws of $5,000$ observation samples. \textit{rdrobust} is the estimator with the Euclidean distance from the boundary point as the running variable using \textit{rdrobust}; \textit{2D local poly} refers to our preferred different bandwidth estimator \textit{diff bw} and with imposing common bandwidth \textit{common bw}. All the implementations are in \textit{R}. \textit{length} and \textit{coverage} are of generated confidence interval length and coverage rate.
\end{minipage}
\end{table}

We report the detailed results in Table \ref{tab:MC_result}. Our first observation is that estimation with heterogeneous bandwidths $h_1 \neq h_2$ matters. The \textit{common bw} estimator is a version of our \textit{2D local poly} estimator that imposes $h_1 = h_2$. For all designs, our preferred \textit{2D local poly - diff bw} has smaller or approximately equal bias than \textit{common bw}. The better bias correction with heterogeneous bandwidths selection appears to induce smaller root MSEs for most designs while our \textit{2D local poly - diff bw} estimator is stable and maintaining the coverage rates above $95\%$ in all four designs.

Greater differences appear in comparison of our preferred estimator with \textit{rdrobust}. The RMSE of our estimator is smaller than that of the \textit{rdrobust} for all designs. In particular, the RMSE is less than the half of the RMSE in \textit{rdrobust} estimates for the Designs 1 and 2. Furthermore, the confidence intervals of our estimator are also shorter than that of the \textit{rdrobust} for most designs. Hence, our estimates are more efficient than the \textit{rdrobust} estimates and the efficiency conveys its greater performance in the inferences. Interestingly, the bias can be smaller in \textit{rdrobust} than in our estimator while its RMSE is always greater than in our estimator and their coverages are always below $95\%$. This result of the \textit{rdrobust} estimator is consistent with our earlier methodological analyses. The \textit{rdrobust} estimator chooses its bandwidth as if it is a univariate design; hence, their bandwidth selector chooses a suboptimal bandwidth which overly reduce bias relative to variance. \footnote{We report the summary statistics of the bandwidths used in Table \ref{tab:MC_result_bands}. We also conduct a parallel simulation study with a binary response via a linear probability model of the same polynomial in Table \ref{tab:MC_result_LPM} and \ref{tab:MC_result_LPM_bands}.}

Finally, we conduct a parallel exercise across all $30$ points. \footnote{Note that the underlying sampling supports are different from the earlier simulation results for the four points. Unlike the four points, which are relatively center in the support, some of the $30$ boundary points are outside of the originally constructed rectangular supports for the four designs.} Table \ref{tab:MC_result_summary_all} summarizes the performance comparisons across $30$ points. Except for an extreme behavior that appears in the max among $30$ points, our estimator performs favorably relative to \textit{rdrobust}. \footnote{The low performing one is at point 1 where all three estimators are poorly performed. We realize that the largest boundary points (points 1 and 30) are too extreme. Hence, we exclude the extreme points from the boundary points to evaluate in the empirical analysis. See Online Appendix \ref{sec.standardized_plots} Table \ref{tab:MC_result_all_30_points} for the results of all $30$ points.}

\begin{table}[H]
\centering
\addtocounter{table}{-1}
\caption{Summary of Simulation Results At All 30 Points.}
\begingroup
\fontsize{12.0pt}{14.4pt}\selectfont
\begin{longtable}{lrrrrr}
\toprule
Estimator (rmse) & min & 25\% & 50\% & 75\% & max \\ 
\midrule\addlinespace[2pt]
rdrobust & 0.035 & 0.044 & 0.056 & 0.060 & 0.095 \\ 
2D local poly - common bw & 0.021 & 0.029 & 0.039 & 0.040 & 0.146 \\ 
2D local poly - diff bw & 0.023 & 0.030 & 0.036 & 0.038 & 0.105 \\ 
\bottomrule
\end{longtable}
\endgroup

 \vspace{-0.2cm}
\begingroup
\fontsize{12.0pt}{14.4pt}\selectfont
\begin{longtable}{lrrrrr}
\toprule
Estimator (coverage) & min & 25\% & 50\% & 75\% & max \\ 
\midrule\addlinespace[2pt]
rdrobust & 0.742 & 0.920 & 0.928 & 0.944 & 0.963 \\ 
2D local poly - common bw & 0.158 & 0.959 & 0.970 & 0.978 & 0.993 \\ 
2D local poly - diff bw & 0.601 & 0.970 & 0.980 & 0.983 & 0.990 \\ 
\bottomrule
\end{longtable}
\endgroup

\vspace{-0.2cm}
\begingroup
\fontsize{12.0pt}{14.4pt}\selectfont
\begin{longtable}{lrrrrr}
\toprule
Estimator (length) & min & 25\% & 50\% & 75\% & max \\ 
\midrule\addlinespace[2pt]
rdrobust & 0.143 & 0.158 & 0.175 & 0.194 & 0.262 \\ 
2D local poly - common bw & 0.111 & 0.122 & 0.163 & 0.166 & 0.200 \\ 
2D local poly - diff bw & 0.125 & 0.137 & 0.163 & 0.169 & 0.218 \\ 
\bottomrule
\end{longtable}
\endgroup

\vspace{-0.2cm}
\begingroup
\fontsize{12.0pt}{14.4pt}\selectfont
\begin{longtable}{lrrrrr}
\toprule
Estimator (bias) & min & 25\% & 50\% & 75\% & max \\ 
\midrule\addlinespace[2pt]
rdrobust & -0.060 & 0.007 & 0.019 & 0.027 & 0.057 \\ 
2D local poly - common bw & -0.141 & -0.003 & 0.019 & 0.020 & 0.023 \\ 
2D local poly - diff bw & -0.097 & 0.001 & 0.017 & 0.018 & 0.026 \\ 
\bottomrule
\end{longtable}
\endgroup

\label{tab:MC_result_summary_all}
\vspace{0.5cm}
\begin{minipage}{0.9\hsize}\footnotesize
 \textit{Notes:} Results are from $5,000$ replication draws of $5,000$ observation samples. Four tables report rmse, coverage, length, and bias results summarized across 30 simulations points. Each column report minimum (min) of $30$ results, 25\%-tile among $30$ results, median of $30$ results, 75\%-tile of $30$ results and max of $30$ results, respectively. All the implementations are in \textit{R}.
\end{minipage}
\end{table}

\subsection{Applications} \label{sec.application}

We first illustrate our estimator through an empirical application of a Colombian scholarship, \citet*{Londono-Velez.Rodriguez.Sanchez2020, londono_data_2020}. From 2014 to 2018, the Colombian government operated a large-scale scholarship program called Ser Pilo Paga (SPP). The scholarship loan covers ``the full tuition cost of attending \textit{any} four-year or five-year undergraduate program in \textit{any} government-certified `high-quality' university in Colombia'' \citep[pp.194]{Londono-Velez.Rodriguez.Sanchez2020}. The eligibility of the SPP program is based on two thresholds. The first threshold is merit-based, determined by the nationally standardized high school graduation exam, SABER 11. In 2014 of \cite{Londono-Velez.Rodriguez.Sanchez2020}'s study period, the cutoff was the top 9\% of the score distribution. The second threshold is need-based, and is determined by the eligibility of the social welfare program, SISBEN. SISBEN-eligible families are roughly the poorest 50 percent. \footnote{Students must be also accepted by an eligible college in Colombia to receive the scholarship. Hence, the impact of exceeding both thresholds is not the impact of the program itself owing to noncompliance. The estimand is the impact of the program eligibility, which is the intention-to-treat effect.} The sample consists of $347,673$ observations of the control units and $15,423$ observations of the treated units.

The \textit{aggregation} approach is the empirical strategy of \cite{Londono-Velez.Rodriguez.Sanchez2020}. They run \textit{rdrobust} separately for two boundaries: the merit-based criterion (SAVER11) and the need-based criterion (SISBEN) as in Figure \ref{fig:score_map}. They report the effect of exceeding the merit-based (SABER11) threshold on enrollment in any eligible college is $0.32$ with a standard error of $0.012$ for the need-based (SISBEN) eligible subsample, and the effect of exceeding the need-based (SISBEN) threshold on enrollment in any eligible college is $0.274$ with a standard error of $0.027$ for the merit-based (SABER11) eligible subsample. Students with the need eligibility in the $x$-axis boundary of Figure \ref{fig:score_map} have a slightly higher effect than students with the merit eligibility in the $y$-axis boundary of Figure \ref{fig:score_map}. Indeed, their strategy captures certain heterogeneity in the sub-populations, albeit with richer heterogeneity within. The SISBEN threshold students are heterogeneous in their SABER11 scores; the SABER11 threshold students are heterogeneous in their SISBEN scores. 

We estimate the heterogeneous effects over the entire boundary. We summarize our results in Figure \ref{fig:empirical_results} with Panel (a) of the SABER = 0 boundary and Panel (b) of the SISBEN = 0 boundary. The dark-colored intervals are the pointwise $95\%$ confidence intervals from our estimates at each boundary point value, and the
 light-colored intervals are the pointwise $95\%$ confidence intervals from the \textit{rdrobust} estimates. For most points, the two estimates show similar patterns across the boundary points with a notable difference in the length of the confidence intervals. Our estimates exhibit shorter confidence intervals than \textit{rdrobust} when there are enough neighboring observations around the boundary points (such as SISBEN values from 2 through 24 in (a) and SABER values from 7 through 21 in (b)). On the other hand, our confidence intervals widen when there are only a few neighboring observations around the boundary points (such as SISBEN values at 44 and 48 in (a) and SABER values from 70 or more in (b)). Hence, our estimates are more stable for various designs and efficient at least when the effective sample size is large enough.

Both estimates suggest substantial heterogeneity in the effects among the merit-eligible students (Panel (b)) but not among the need-eligible students (Panel (a)). Specifically, the program has similar effects among the majority of students, but has no impact on extremely capable students. The null effect for extremely capable students is reasonable because they would have received other scholarships to attend college anyway. Consequently, the program could have benefited from accepting a larger number of students with higher household incomes because their impact is expected to 
be similar.
 
\begin{figure}[H]
    \centering
    \subfigure[Boundary at SABER = 0]{
        \includegraphics[width=0.47\hsize]{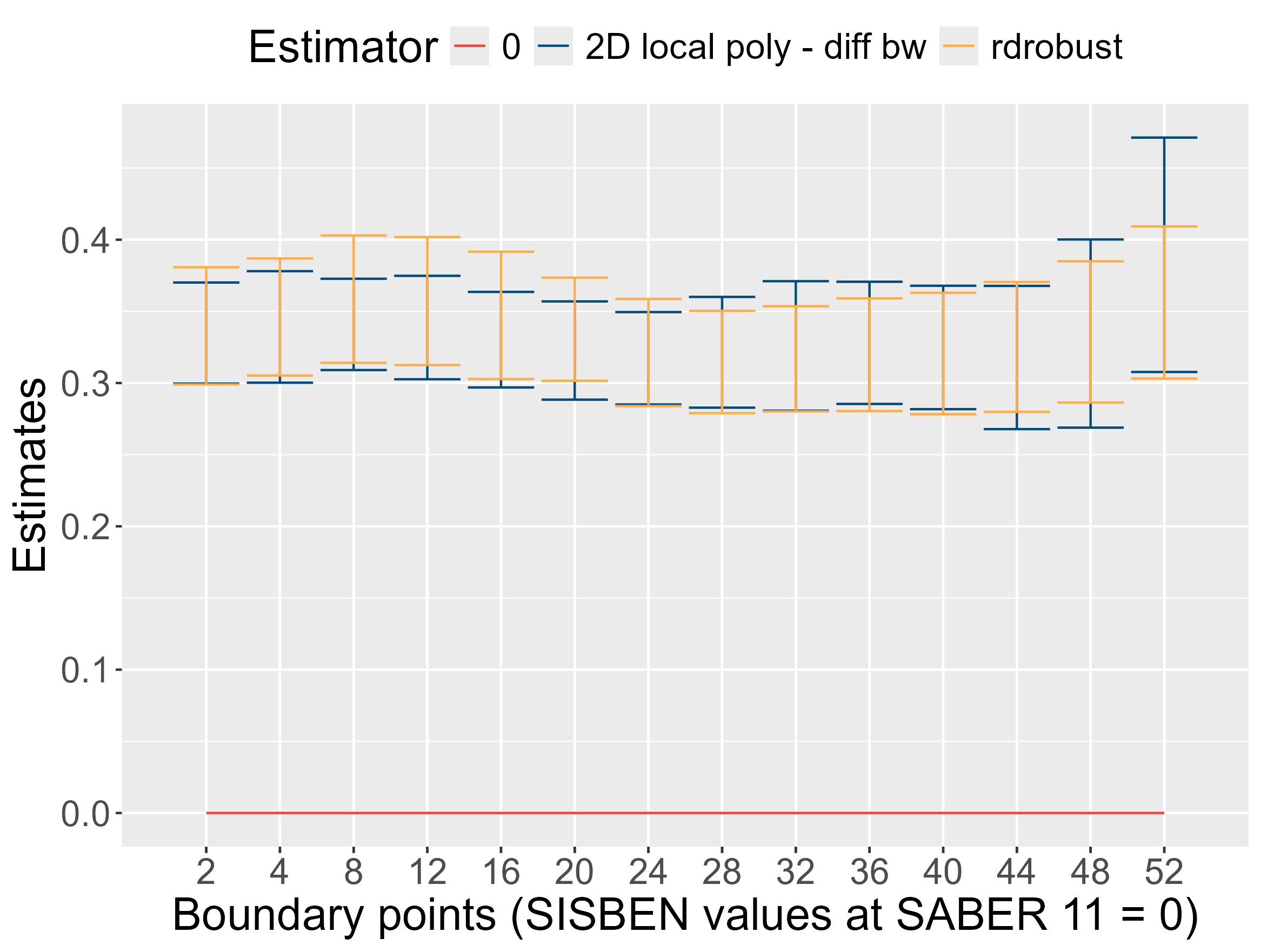}
    }
    \subfigure[Boundary at SISBEN = 0]{
        \includegraphics[width=0.47\hsize]{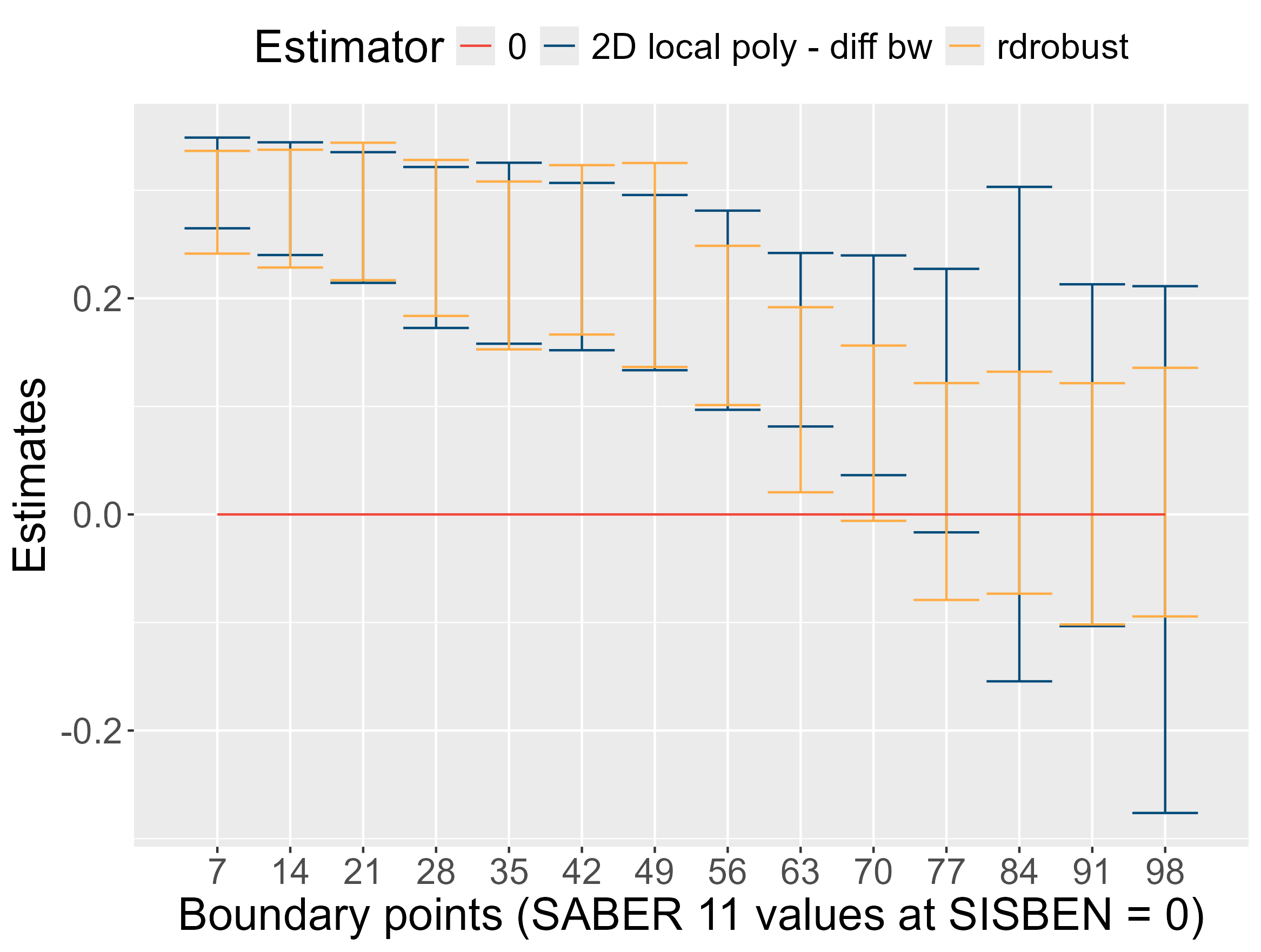}
    }
    \caption{$95\%$ confidence intervals over the boundary points. Dark-colored ranges are of our estimates. Light-colored ranges are of \textit{rdrobust} estimates. The Left panel (a) is for exceeding the merit threshold among the need-eligible students; the right panel (b) is for exceeding the need threshold among the merit-eligible students.}
    \label{fig:empirical_results}
\end{figure}

We further assess the stability of our estimates relative to \textit{rdrobust} by changing the scalings of the two running variables. Figure \ref{fig:empirical_results_scaled_distance} compares estimates with and without scaling by the absolute maximum values of each axis. Compared with the Panel (a) and (c) which exhibit substantial changes in the estimated confidence intervals of \textit{rdrobust}, our estimates in Panel (b) and (d) show the stability in the underlying (relative) scale of the running variables. An appropriate relative scaling of the two axes is hardly known. Hence, our approach is superior in handling the relative scaling of the two-dimensional data as is because our estimator is more robust against the choice of scaling.

\begin{figure}[H]
    \centering
    \subfigure[SABER = 0, rdrobust with and without scaling]{
        \includegraphics[width=0.47\hsize]{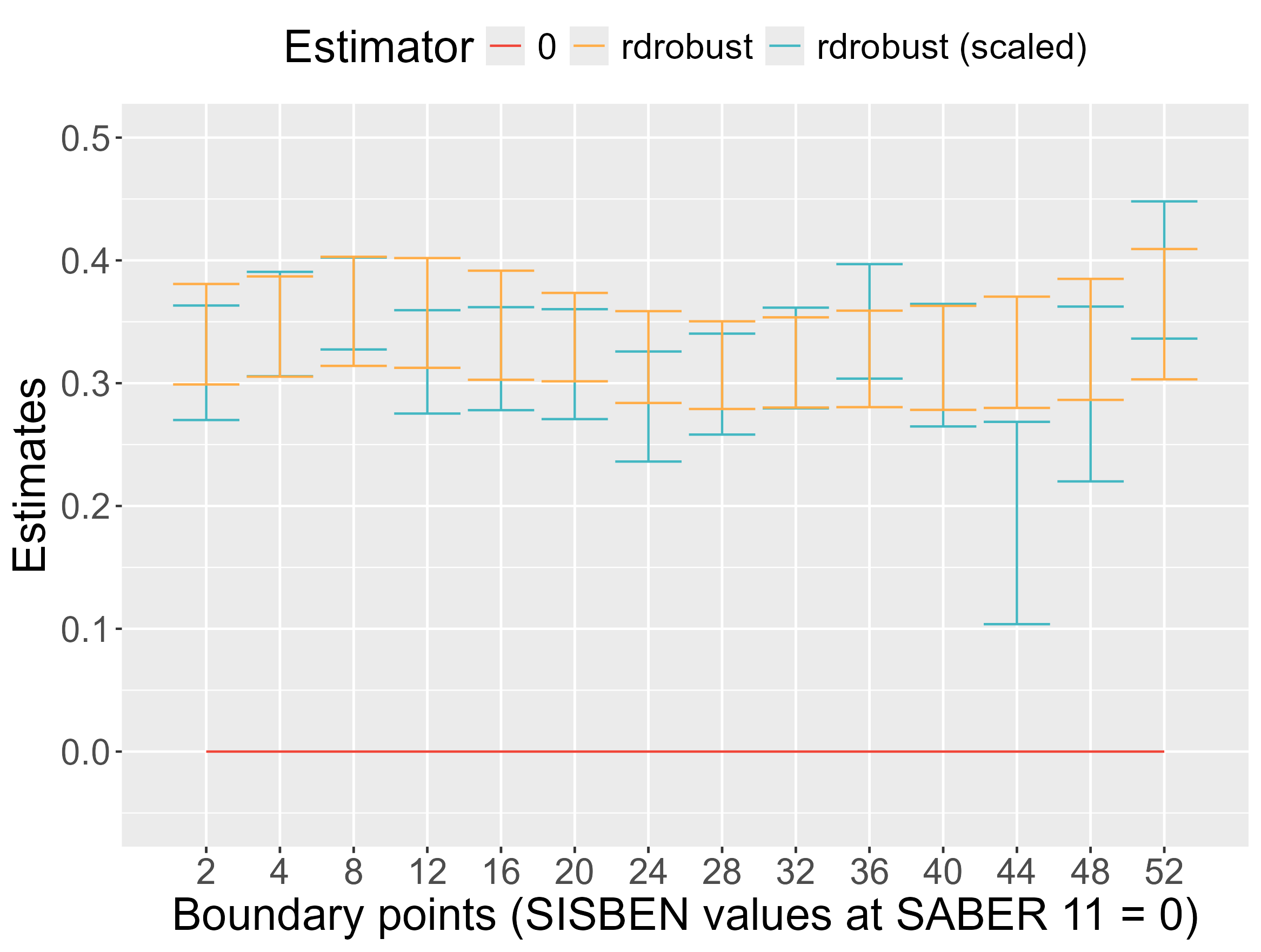}
    }
    \subfigure[SABER = 0, ours with and without scaling]{
        \includegraphics[width=0.47\hsize]{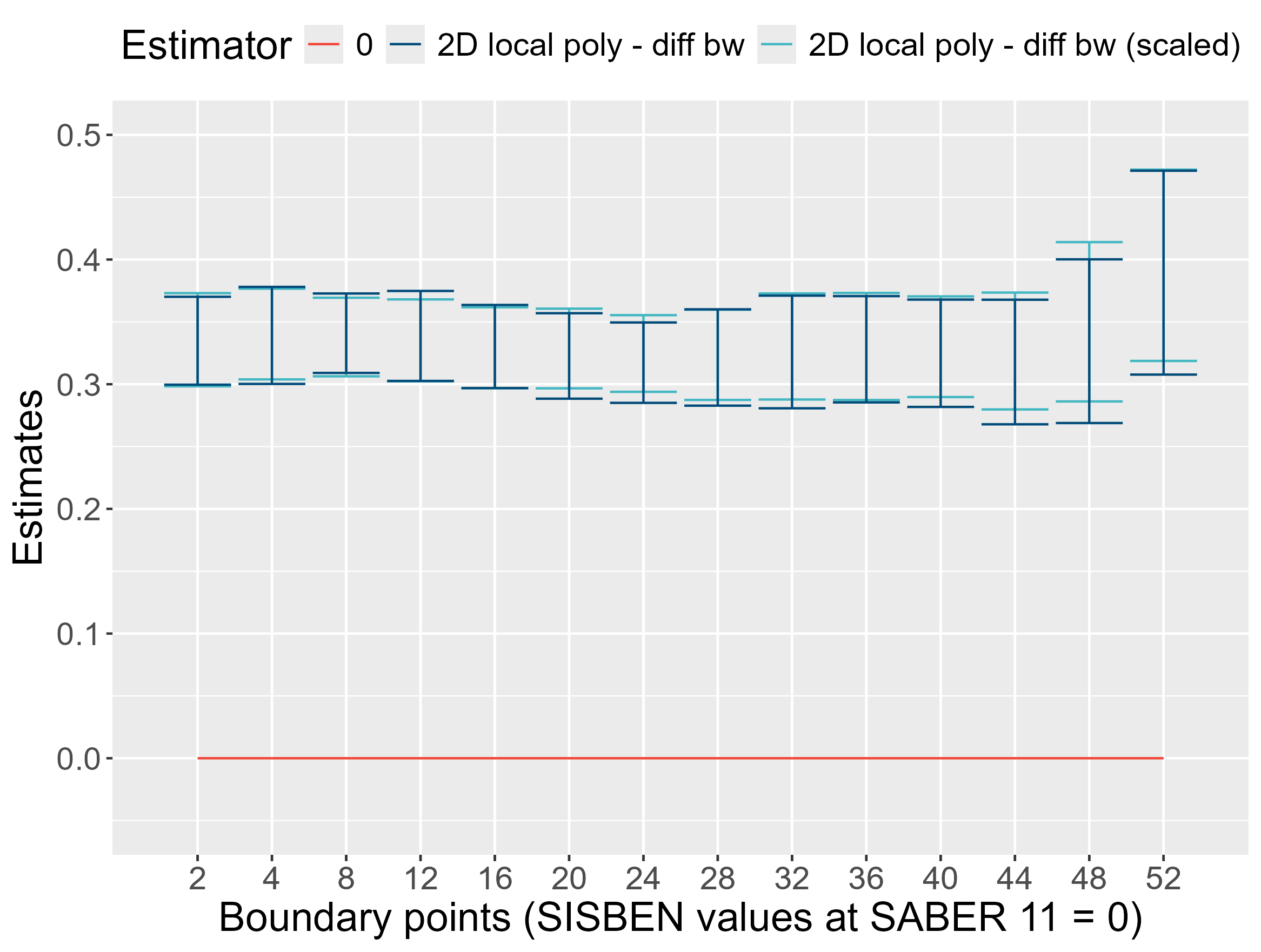}
    }
    \subfigure[SISBEN = 0, rdrobust with and without scaling]{
        \includegraphics[width=0.47\hsize]{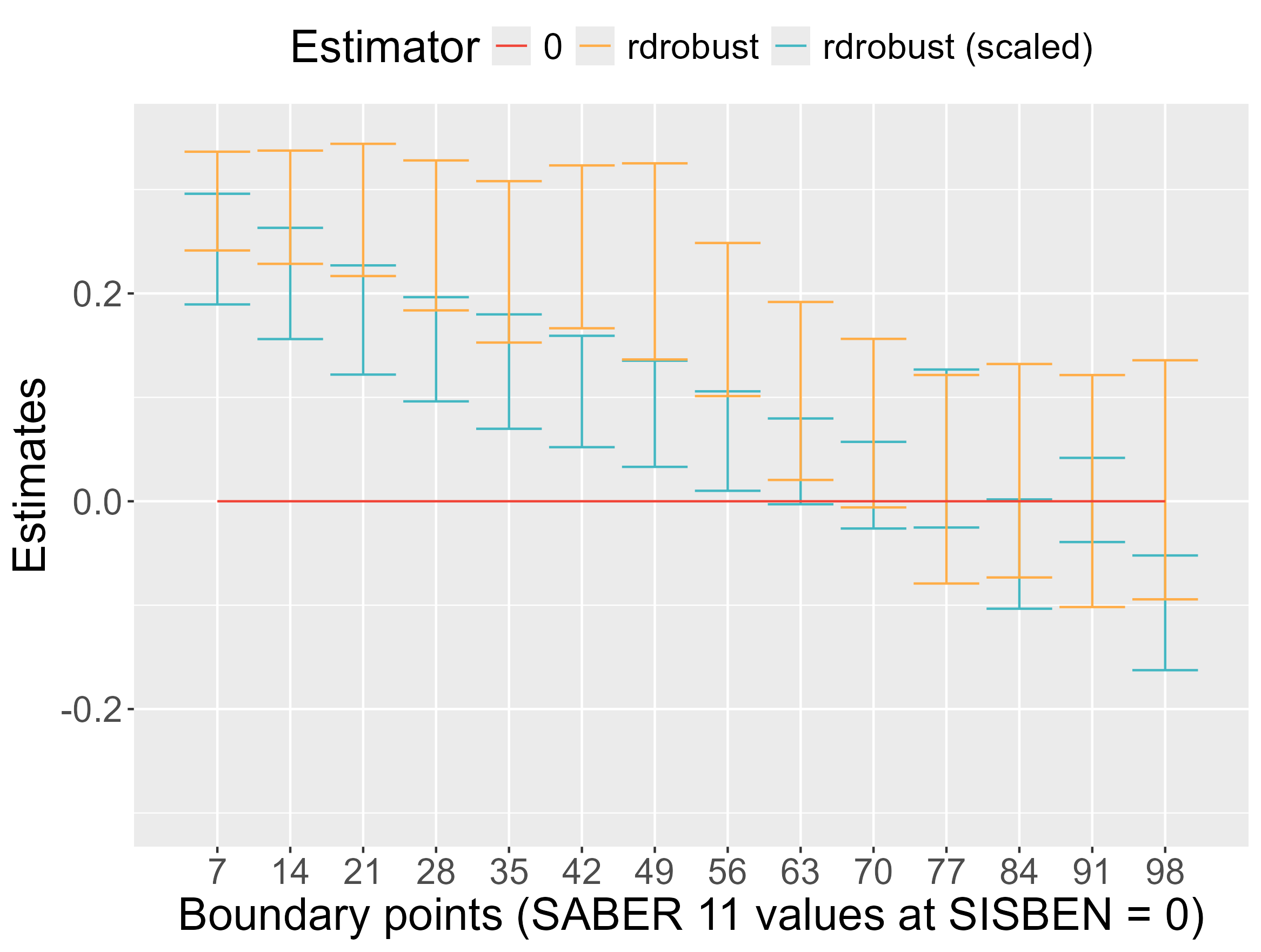}
    }
    \subfigure[SISBEN = 0, ours with and without scaling]{
        \includegraphics[width=0.47\hsize]{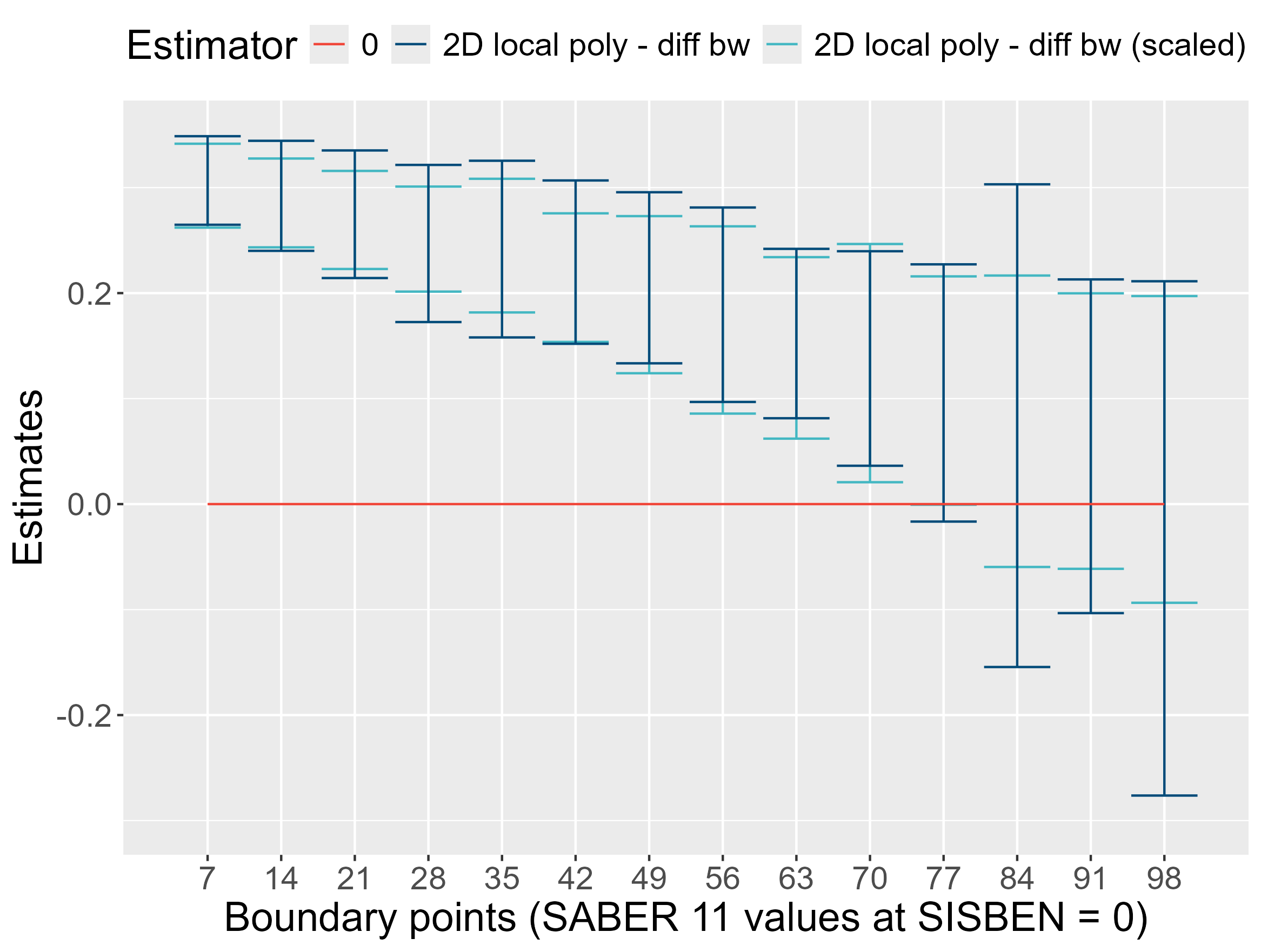}
    }
    \caption{Estimation results over the 28 boundary points comparing two \textit{rdrobust} estimates with and without normalizing scales by their maximum values for each two axes (Panel (a) and (c)) and our estimator (Panel (b) and (d)).}
    \label{fig:empirical_results_scaled_distance}
\end{figure}


We further apply our procedure to the dataset used in \cite{Lee2008} (also in  \citeauthor{caugheyDataCitation}, \citeyear{caugheyDataCitation}, \citeyear{Caughey.Sekhon2011}) that studies the U.S. House Elections and finds the positive significant incumbent margin. There are a few baseline covariates with continuous variations as reported in \cite{Caughey.Sekhon2011}. We use four baseline covariates: percentages of black voters, foreign born voters, government worker voters, and of urban areas for each electoral district. See their scatter plots and evaluation points in Online Appendix \ref{sec.standardized_plots} Figure \ref{fig:shapes_lee_data}. Among the four covariates, three covariate designs exhibit shorter confidence intervals of our estimates relative to \textit{rdrobust}. The confidence intervals were larger among the Government Worker Percentage design, however, our estimates capture more distinct heterogeneity that the higher government worker is related to higher incumbent margin.

\begin{figure}[H]
    \centering 
    \begin{minipage}{0.49\hsize}
    (a) Black Percentage
    \includegraphics[width=\columnwidth]{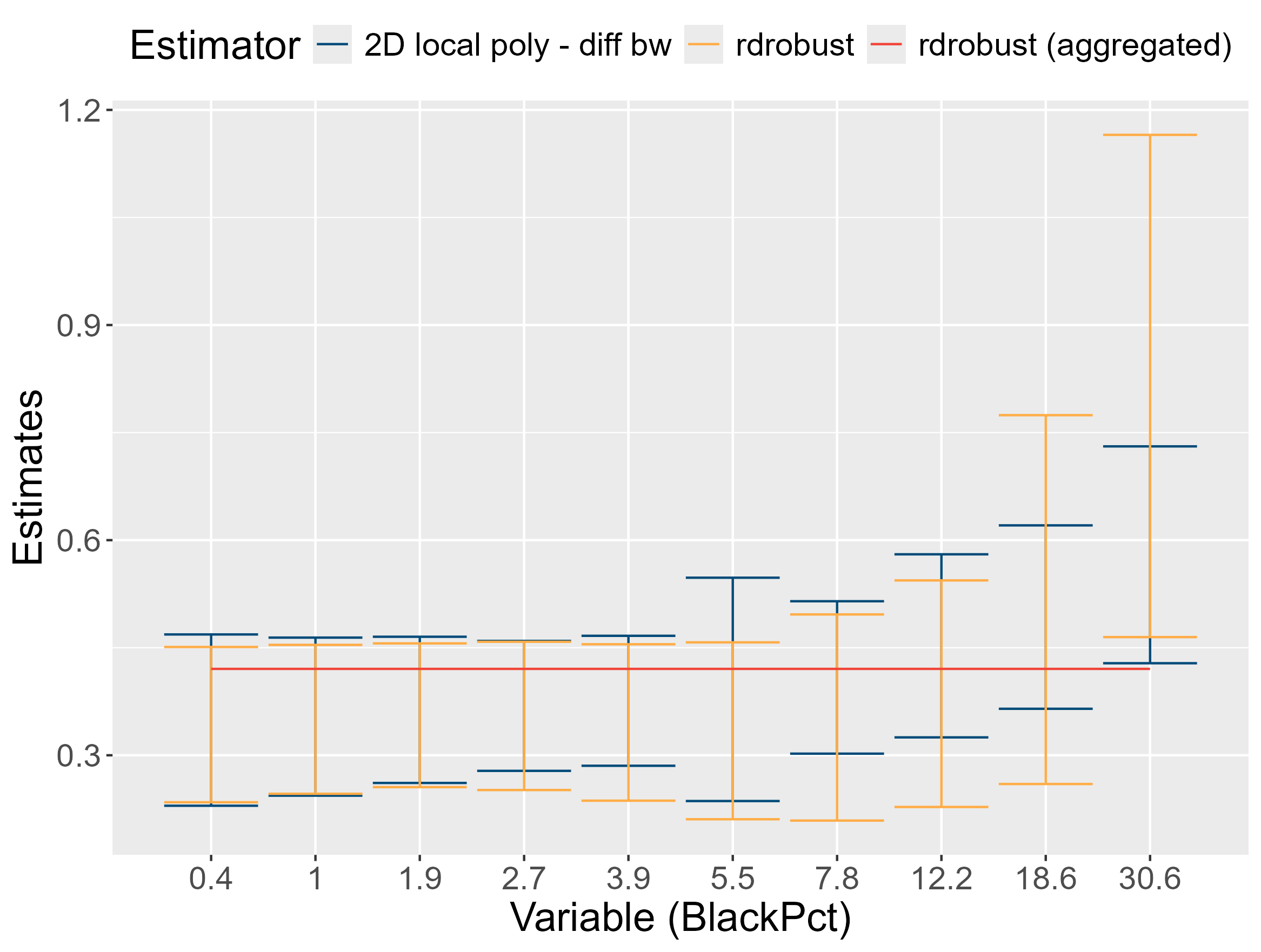}
    \centering 
    \end{minipage}    
    \begin{minipage}{0.49\hsize}
    (b) Foreign Born Percentage
    \includegraphics[width=\columnwidth]{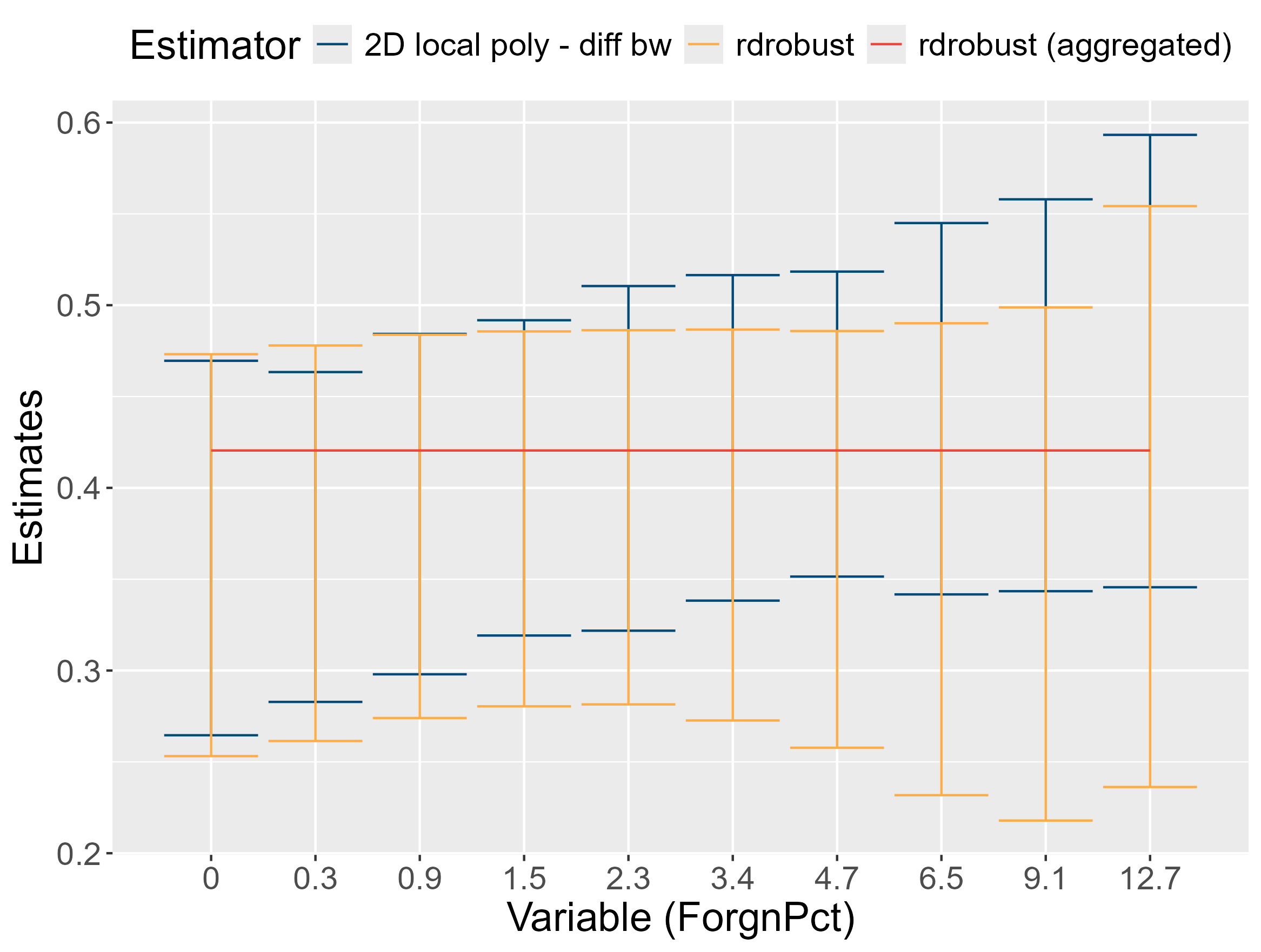}
    \centering
    \end{minipage}
    \begin{minipage}{0.49\hsize}
    (c) Government Worker Percentage
    \includegraphics[width=\columnwidth]{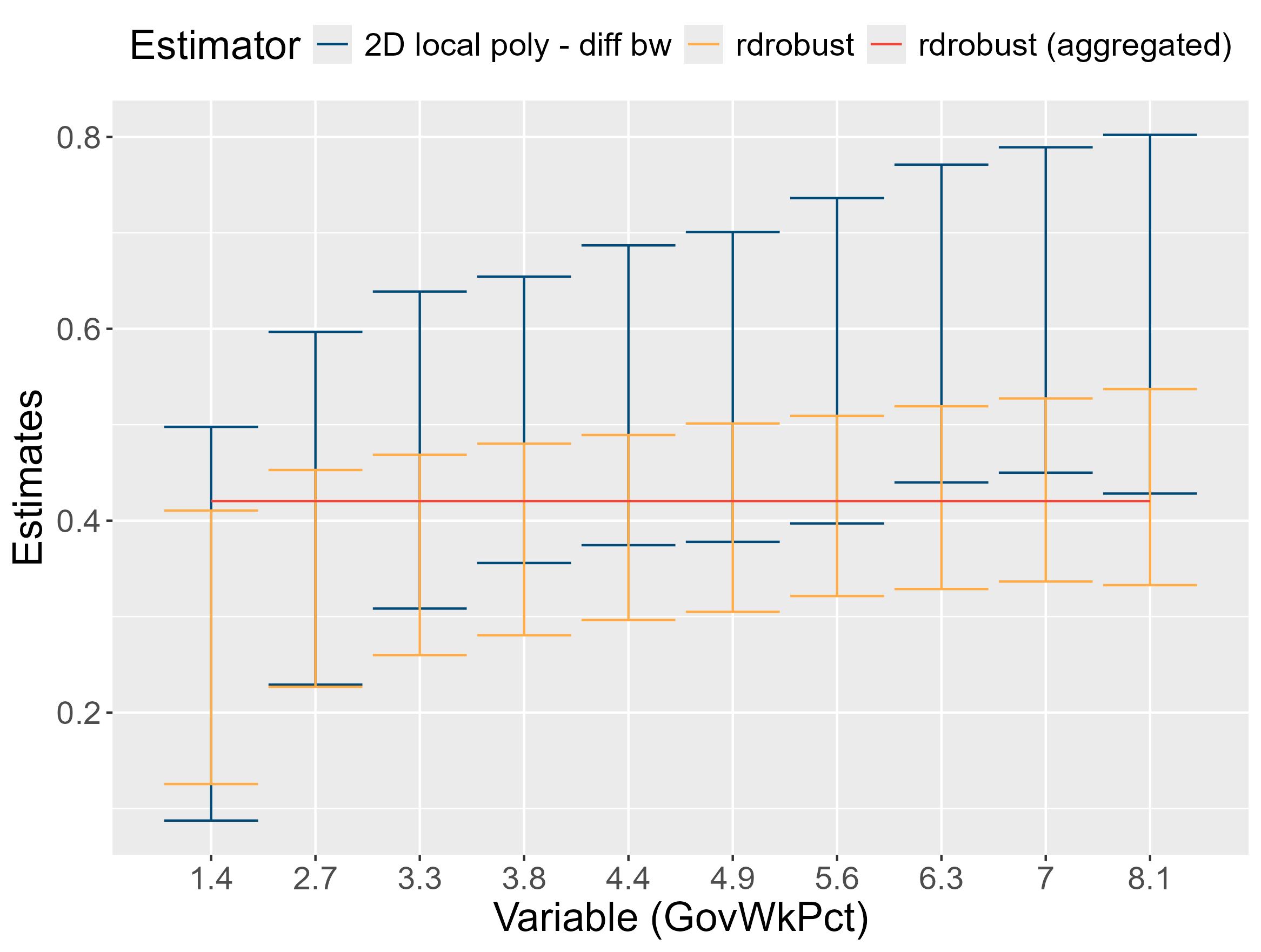}
    \centering 
    \end{minipage}
    \begin{minipage}{0.49\hsize}
    (d) Urban Percentage
    \includegraphics[width=\columnwidth]{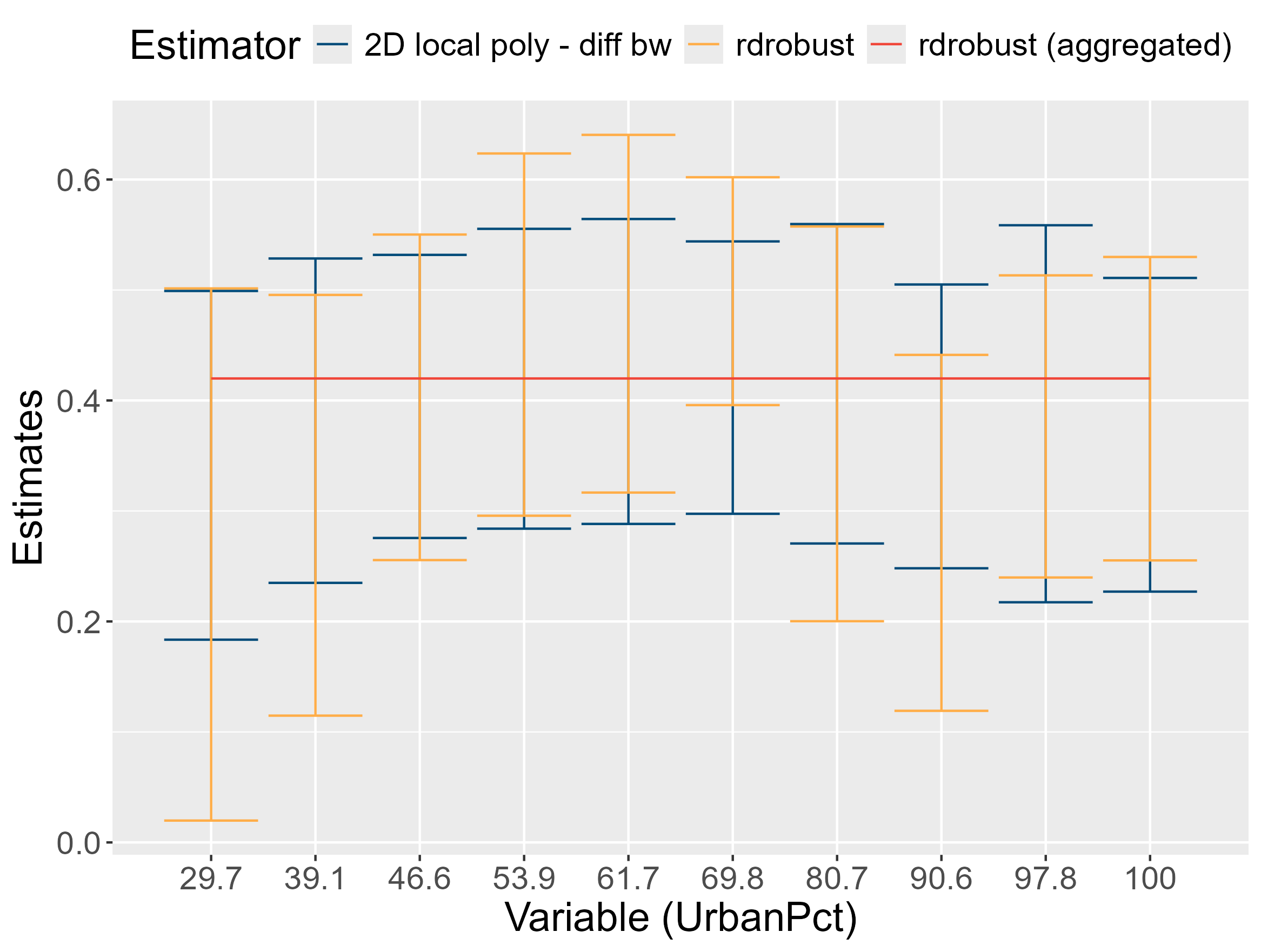}
    \centering 
    \end{minipage}
    \caption{Heterogeneous incumbent margin estimates across different covariate values. The red straight lines in \textit{rdrobust (aggregated)} represents the original univariate estimate.}
    \label{fig:estimates_lee_data}
\end{figure}

\section{Conclusion}

We document that the existing bandwidth selectors are suboptimal when they are used for a multivariate RD design when they take the distance from a boundary point as the running variable. We further provide an alternative estimator for a multivariate RD design to estimate the heterogeneous treatment effects. In numerical simulations, we demonstrate the favorable performance of our estimator against a frequently used \textit{rdrobust} procedure with the distance from a point as the scalar running variable. We apply our estimator to the study of \cite{Londono-Velez.Rodriguez.Sanchez2020} who study the impact of a scholarship program that has two eligibility requirements and a quasi-multivariate design for \cite{Lee2008} dataset with a baseline covariate to study the heterogeneous effects across the covariate values. In these application, our estimates are consistent with the original estimates, often produce shorter confidence intervals, and reveal a richer heterogeneity in the program impacts over the policy boundary than the original estimates.

Hence, we contribute to the RD estimation literature in two ways. We provide a detailed argument that the distance approach is suboptimal for a multivariate design and we provide a remedy for the problem with a dimension-specific bandwidths selector. Combined with the recent work by \cite{cattaneoEstimationInferenceBoundary2025} which documents another problem of the distance approach for designs with a corner or kink and provides an alternative estimator with a uniform inference, we provide the reason why the distance from a boundary point should not be used for a multivariate RD design to estimate heterogeneous effects across the boundary as well as an appropriate estimator to remedy the estimation problem.


Some theoretical and practical issues remain. First, our consideration is limited to a random sample; hence, spatial RD designs are excluded from our consideration. We defer our focus to spatial design because of its theoretical and conceptual complexity. Nevertheless, we aim to propose a spatial RD estimation based on newly developed asymptotic results of \cite{Kurisu.Matsuda2022} in a separated study. Second, our theoretical results can be applied to any finite-dimensional RD design; however, the practical performance of estimators with more than two dimensions is limited. Third, our approach requires a sufficiently large sample over the boundary, and its performance with an extremely small sample size is limited. For a smaller sample, an explicit randomization approach is a compelling alternative. \citet*{Cattaneo.Frandsen.Titiunik2015}, \citet*{Cattaneo.Titiunik.Vazquez-Bare2016} and \citet*{Cattaneo.Titiunik.Vazquez-Bare2017} propose the concepts and a randomization inference. Their approach requires a substantially stronger assumption but is applicable to a geographical RD design as well \citep*{Keele.Titiunik.Zubizarreta2015}. Fourth, covariates are often incorporated in the estimation procedures in RD designs. For the efficiency gain, \cite{Frolich.Huber2019} propose a method with a multi-dimensional non-parametric estimation; \cite{Calonico.Cattaneo.Farrell.Titiunik2019} develop an easy-to-implement augmentation; and recently \citet*{Noack.Olma.Rothe2021} considers flexible and efficient estimation including machine-learning devices and several studies such as \citet*{Kreiss.Rothe2021} and \citet*{Arai.Otsu.Seo2021} explore augmentation with high-dimensional covariates. We defer these analyses to theoretical and conceptual complications for a companion study for a geographic RD design. Fifth, we provided the optimal bandwidths for multivariate RD estimation; however, the optimal kernel for this class of estimators is unknown. Exploring the optimal kernel for a multivariate estimator is a topic for future research. Finally, we do not provide any procedure to aggregate heterogeneous estimates over the set of boundary points. For example, a major feature of the rdmulti package, \citet*{Cattaneo.Titiunik.Vazquez-Bare2020}, is averaging over multiple boundary points;  \citet*{Cattaneo.Keele.Titiunik.Vazquez-Bare2016} offers a target pooling parameter; and \citet*{Cattaneo.Keele.Titiunik.Vazquez-Bare2021} uses a different policy in Columbia with multiple cutoffs to extrapolate the missing part of the support. These ideas can be a benchmark to consider averaging and extrapolation when the support has \textit{holes} in the boundary. 

\bibliography{bibtex_hardcopied_20260107} 

\appendix

\section{The heterogeneous bandwidths case}\label{sec.hetero_band}

We follow the standard bandwidth selection procedure in RD designs to find the pair of $(h_1,h_2)$ that minimizes the asymptotic expansion of the MSE that we derive from Theorem \ref{thm: LP-CLT-main-d2}. Specifically, we derive the asymptotic expansion of the mean-squared error of \textcolor{black}{$\hat{m}_+(c)$}: 
\textcolor{black}{
\begin{align*}
&\underbrace{
 \left[e'_1 S_+^{-1} B^{(2,1)} \begin{pmatrix}
 \partial_{11}m_+(c)h_1^2/2\\
 \partial_{12}m_+(c)h_1h_2\\
 \partial_{22}m_+(c)h_2^2/2 
 \end{pmatrix}\right]^2}_{\text{Bias term}}   + \underbrace{\frac{1}{nh_1h_2}\frac{\sigma_+^2(c)}{f(c)} e'_1 S_+^{-1} \mathcal{K}_+S_+^{-1}e_1}_{\text{Variance term}},
\end{align*}
}
for $e_1 = (1,0,0)'$. However, this general expression is too complex to obtain an analytical formula for the optimal bandwidths when all coefficients of the partial derivatives $\partial_{11}m_+(c), \partial_{12}m_+(c)$ and $\partial_{22}m_+(c)$ in the bias term are non-zero. We simplify this expression by choosing the kernels as follows:
\textcolor{black}{
\begin{equation}
\kappa_{1,\pm}^{(1,1)} = \kappa_{1,2,\pm}^{(1,1,1)}=\kappa_{1,\pm}^{(1,2)}=\kappa_{1,2,\pm}^{(1,1,2)}=\kappa_{1,2,\pm}^{(1,2,1)}=0. \label{eq.kernel_restriction}
\end{equation}
where $\kappa_{j_1,\dots, j_M,\pm}^{(r_1,\dots,r_M,v)}:= \int \prod_{\ell=1}^{M}z_{j_\ell}^{r_\ell}K_\pm^v(z)dz$ for integer $v$.}
Among the product kernels of the form \textcolor{black}{$K_\pm( z_1,z_2) = K_1( z_1) K_2(\pm z_2)$}, the above restriction amounts to \textit{rotate} the space so that the boundary becomes either the $x$ or $y$-axis. Among the product kernels, the following kernels satisfy the above restrictions:
\begin{align*}
K_1(z) &= 
\begin{cases}
(1-|z|)1_{\{|z|\leq 1\}} & (\text{two-sided triangular kernel}),\\
{3 \over 4}(1-z^2)1_{\{|z|\leq 1\}} & (\text{Epanechnikov kernel}),
\end{cases}\\
K_2(z) &= 2(1-|z|)1_{\{0\leq z \leq 1\}}\ (\text{one-sided triangular kernel}).
\end{align*}
The same restriction is possible without a product kernel. For example, a cone kernel \textcolor{black}{$K_\pm(z_1,z_2) = K(z_1,\pm z_2)$ where} $
K(z_1,z_2) = {6 \over \pi}\left(1-\|z\|\right)1_{\{\|z\| \leq 1, z_2\geq 0\}}.
$ for $z = (z_1,z_2)$ and $\|z\| = \sqrt{z_1^2 + z_2^2}$ satisfy \eqref{eq.kernel_restriction}. Given the flexibility of the kernel choice, this simplifying restriction for the admissible kernel is innocuous.

In the subsequent analysis, we assume that $K_1$ is the two-sided triangular kernel and $K_2$ is the one-sided triangular kernel. For example, the design with $\mathcal{T} = \{(R_1,R_2) \in \mathbb{R}^2: R_1 \geq c_1, R_2 \geq c_2 \}$ satisfies the restriction \eqref{eq.kernel_restriction} as is or with a $90$ degrees rotation; the design with $\mathcal{T} = \{(R_1,R_2) \in \mathbb{R}^2: R_1 + R_2 \geq c_1 + c_2 \}$ satisfies the restriction \eqref{eq.kernel_restriction} with a $45$ degrees rotation. \textcolor{black}{Define $\kappa_{0,\pm}^{(v)}:= \int K_\pm^v(z)dz$.} Under \eqref{eq.kernel_restriction}, $\text{MSE}(\hat{m}_+(c))$ is simplified as follows
\textcolor{black}{
\begin{align*}
\Bigl\{\Bigr. &{h_1^2 \over 2}\partial_{11}m_+(c)\left(\tilde{s}_{1,+} \kappa_{1,+}^{(2,1)} + \tilde{s}_{3,+} \kappa_{1,2,+}^{(2,1,1)}\right) + {h_2^2 \over 2}\partial_{22}m_+(c)\left(\tilde{s}_{1,+} \kappa_{2,+}^{(2,1)} + \tilde{s}_{3,+} \kappa_{2,+}^{(3,1)}\right)\Bigl.\Bigr\}^2\\
\quad +& {\sigma_+^2(c)  \over f(c)nh_1h_2} {\kappa_{0,+}^{(2)} \left(\kappa_{1,+}^{(2,1)}\kappa_{2,+}^{(2,1)}\right)^2 - 2\kappa_{2,+}^{(1,2)}\left(\kappa_{1,+}^{(2,1)}\right)^2\kappa_{2,+}^{(2,1)}\kappa_{2,+}^{(1,1)} + \kappa_{1,+}^{(2,2)}\left(\kappa_{1,+}^{(2,1)}\kappa_{2,+}^{(1,1)}\right)^2 \over \left(\kappa_{0,+}^{(1)}\kappa_{1,+}^{(2,1)}\kappa_{2,+}^{(2,1)} - \left(\kappa_{2,+}^{(1,1)}\right)^2 \kappa_{2,+}^{(2,1)}\right)^2}
\end{align*}
}
where 
\textcolor{black}{
\begin{align*}
\begin{pmatrix}
 \tilde{s}_{1,+}\\ \tilde{s}_{2,+} \\\tilde{s}_{3,+}
\end{pmatrix}
 &:= S_+^{-1}e_1={1 \over \kappa_{0,+}^{(1)}\kappa_{1,+}^{(2,1)}\kappa_{2,+}^{(2,1)} - \left(\kappa_{2,+}^{(1,1)}\right)^2 \kappa_{2,+}^{(2,1)}}
 \begin{pmatrix}
\kappa_{1,+}^{(2,1)}\kappa_{2,+}^{(2,1)}\\
0\\
-\kappa_{1,+}^{(2,1)}\kappa_{2,+}^{(1,1)} 
 \end{pmatrix}.
\end{align*}
One can also see that $(\tilde{s}_{1,-},\tilde{s}_{2,-},\tilde{s}_{3,-})':=S_-^{-1}e_1=(\tilde{s}_{1,+},\tilde{s}_{2,+},-\tilde{s}_{3,+})'$, $\kappa_{1,2,-}^{(2,1,1)}=-\kappa_{1,2,+}^{(2,1,1)}$, $\kappa_{2,-}^{(3,1)}=-\kappa_{2,+}^{(3,1)}$, and $e'_1S_+^{-1}\mathcal{K}_+S_+^{-1}e_1=e'_1S_-^{-1}\mathcal{K}_-S_-^{-1}e_1$.}

\textcolor{black}{To simplify the notation, define 
$e'_1S^{-1}\mathcal{K}S^{-1}e_1 \equiv e'_1S_\pm^{-1}\mathcal{K}_\pm S_\pm^{-1}e_1$,$ \tilde{s}_1 \kappa_1^{(2,1)} \equiv \tilde{s}_{1,\pm} \kappa_{1,\pm}^{(2,1)},$
 $\tilde{s}_{3} \kappa_{2}^{(3,1)} \equiv \tilde{s}_{3,\pm} \kappa_{2,\pm}^{(3,1)},$ and $
\tilde{s}_{3} \kappa_{2}^{(3,1)} \equiv  \tilde{s}_{3,\pm} \kappa_{2,\pm}^{(3,1)}.
$} The MSE of the estimator $\hat{m}_+(c) - \hat{m}_+(c)$ is
\textcolor{black}{
\begin{align*}
\Biggl\{&{h_1^2 \over 2}\left(\partial_{11}m_+(c) -\partial_{11}m_-(c)\right) \left(\tilde{s}_1 \kappa_1^{(2,1)} + \tilde{s}_3 \kappa_{1,2}^{(2,1,1)}\right)\\
&+ {h_2^2 \over 2}\left(\partial_{22}m_+(c) - \partial_{22}m_-(c)\right)\left(\tilde{s}_{1} \kappa_{2}^{(2,1)} + \tilde{s}_{3} \kappa_{2}^{(3,1)}\right)\Biggr\}^2 + {(\sigma_+^2(c) + \sigma_-^2(c)) \over f(c)nh_1h_2}e'_1 S^{-1} \mathcal{K} S^{-1}e_1
\end{align*}
}
when the same kernels are used for both the treatment and control sides.

We consider the optimal pair of bandwidths $(h_1,h_2)$ that minimizes the above asymptotic MSE. In minimizing asymptotic MSE, the bias term may disappear when the second derivatives of the treatment and control mean functions are equal. Nevertheless, the second derivatives match exactly only in an extreme scenario. Following \cite{Imbens.Kalyanaraman2012}, we assume the second derivatives $\partial_{11}m_+(c)$ and $\partial_{11}m_{-}(c)$ as well as $\partial_{22}m_{+}(c)$ and $\partial_{22}m_{-}(c)$ are different. Under the following restrictions
\textcolor{black}{
$
\partial_{11}m_+(c) \neq \partial_{11}m_-(c), \partial_{22}m_+(c) \neq \partial_{22}m_-(c)$, and $sgn\left\{(\partial_{11}m_+(c) - \partial_{11}m_-(c)) \tilde{s}_{11}\right\}
= sgn\left\{(\partial_{22}m_+(c) - \partial_{22}m_-(c))\tilde{s}_{22}\right\}$
where $\tilde{s}_{11} \equiv \tilde{s}_{1} \kappa_{1}^{(2,1)} + \tilde{s}_{3} \kappa_{1,2}^{(2,1,1)}$ and $\tilde{s}_{22} \equiv \tilde{s}_{1} \kappa_{2}^{(2,1)} + \tilde{s}_{3} \kappa_{2}^{(3,1)}$,}
the pair of optimal bandwidths \footnote{These bandwidths are not optimal when the signs of the bias terms differ. A similar issue arises in the single-variable RD estimation with heterogeneous bandwidths with the treatment and control mean functions (\citealp*{Imbens.Kalyanaraman2012}). \cite{Arai.Ichimura2018} derive the higher-order expansion of the bias terms for the single-variable RD estimation. In Online Appendix \ref{sec.asymptotic_higher_order}, we derive the higher-order expansion of the bias terms. Nevertheless, we do not follow \cite{Arai.Ichimura2018}'s approach because estimating higher-order bias correction terms is unreliable for multivariate RD estimations.} is
\textcolor{black}{
\begin{align*}
 \frac{h_1}{h_2} = \sqrt{\frac{B_2(c)}{B_1(c)}} &\mbox{ and } h_1^6 = \frac{(\sigma^2_+(c) + \sigma_-^2(c))}{2n} e'_1 S^{-1}\mathcal{K}S^{-1}e_1 (B_1^{-5/2}(c) B_2^{1/2}(c))\\
 \text{where } B_1(c) =& \sqrt{\left\{(\partial_{11}m_+(c) - \partial_{11}m_-(c)) \left(\tilde{s}_{1} \kappa_{1}^{(2,1)} + \tilde{s}_{3} \kappa_{1,2}^{(2,1,1)}\right)\right\}^2}, \mbox{ and } \\
 B_2(c) =& \sqrt{\left\{(\partial_{22}m_+(c) - \partial_{22}m_-(c))\left(\tilde{s}_{1} \kappa_{2}^{(2,1)} + \tilde{s}_{3} \kappa_{2}^{(3,1)}\right)\right\}^2}.    
\end{align*}
}
Furthermore, we follow \cite{Imbens.Kalyanaraman2012}'s regularization approach to prevent the bandwidths from blowing up when the bias terms are zero by following \cite{Calonico.Cattaneo.Titiunik2014}'s approach to estimate the variances of the bias term $B_1(c)$ and $B_2(c)$ estimations and the variances are added to the bias term estimates which appear in the denominator for the optimal bandwidth formulas so that the denominator will not equal to zero even if the bias terms are zero. We also follow \cite{Calonico.Cattaneo.Titiunik2014} for a bias correction to obtain appropriate inference. We propose a plug-in bias correction with a multivariate local-quadratic estimation. See Online Appendix \ref{sec.implementation} for further implementation details.
\newpage

\setcounter{page}{1}

\centering \Huge 
\textbf{Online Appendices}

\Large
\textbf{for Local-Polynomial Estimation for Multivariate Regression Discontinuity Designs.}

\normalsize
\flushleft


\section{Asymptotic Theory for multivariate Local-Polynomial Regressions} \label{sec.asymptotic}

\textcolor{black}{Throughout this section, we write the kernel function $K_+$ as $K$ for simplicity.}
\subsection{Local-polynomial estimator}\label{sec.asymptotic_estimator}
Consider the following nonparametric regression model: 
\[
Y_i = m(R_i) + \varepsilon_i,\ E[\varepsilon_i|R_i] = 0,\ i=1,\dots, n,
\]
where $\{(Y_i,R_i)\}_{i=1}^{n}$ is a sequence of i.i.d. random vectors such that $Y_i \in \mathbb{R}$, $R_i=(R_{i,1},\dots, R_{i,d})' \in \mathbb{R}^{d}$. 

Define 
\begin{align*}
D &= \#\{(j_1,\dots ,j_L): 1 \leq j_1 \leq \dots \leq j_L \leq d, 0 \leq L \leq p\},\\ 
\bar{D} &=\#\{(j_1,\dots,j_{p+1}): 1 \leq j_1 \leq \dots \leq j_{p+1} \leq d\},
\end{align*}  
and $(s_{j_1\dots j_L1},\dots, s_{j_1\dots j_Ld}) \in \mathbb{Z}_{\geq 0}^d$ such that $s_{j_1\dots j_Lk} = \#\{j_\ell : j_\ell = k, 1\leq \ell \leq L\}$. Further, define $
\bm{s}_{j_1\dots j_L}!=s_{j_1\dots j_L1}!\dots s_{j_1\dots j_Ld}!$.
When $L = 0$, we set $(j_1,\dots,j_L) = j_0 = 0$, $\bm{s}_{j_1\dots j_L}! = 1$. Note that $\sum_{j=1}^{d}s_{j_1\dots j_L \ell} = L$. 
The local-polynomial estimator 
\begin{align*}
\hat{\beta}(r) &= (\hat{\beta}_{j_1,\dots j_L}(r))'_{1\leq j_1\leq \dots \leq j_L\leq d, 0 \leq L \leq p}\\
&:= (\hat{\beta}_0(r), \hat{\beta}_1(r),\dots, \hat{\beta}_d(r),\hat{\beta}_{11}(r),\dots \hat{\beta}_{dd}(r),\dots,\hat{\beta}_{1\dots1}(r),\dots,\hat{\beta}_{d\dots d}(r))'.
\end{align*}
of 
\begin{align*}
M(r) &=\left({1 \over \bm{s}_{j_1\dots j_L}!}\partial_{j_1,\dots j_L}m(r)\right)'_{1\leq j_1 \leq \dots \leq j_L\leq d, 0 \leq L \leq p}\\
&:= \left(m(r),\partial_1 m(r),\dots, \partial_d m(r), {\partial_{11}m(r) \over 2!}, {\partial_{12}m(r) \over 1!1!}, \dots, {\partial_{dd}m(r) \over 2!}, \right. \\ 
&\left. \quad \quad \dots, {\partial_{1\dots1}m(r) \over p!}, {\partial_{1\dots 2}m(r) \over (p-1)!1! }\dots, {\partial_{d\dots d}m(r) \over p!} \right)' 
\end{align*} 
is given as a solution of the following problem:  
\begin{align}\label{LL-minimize}
\hat{\beta}(r) &= \argmin_{\beta \in \mathbb{R}^D}\sum_{i=1}^{n}\left(Y_i - \sum_{L=0}^{p}\sum_{1 \leq j_1 \leq \dots \leq j_L \leq d}\beta_{j_1\dots j_L}\prod_{\ell=1}^{L}(R_{i,j_\ell} - r_{j_\ell})\right)^2K_h\left(R_i - r\right)
\end{align}
where $\beta = (\beta_{j_1\dots j_L})'_{1 \leq j_1 \leq \dots \leq j_L \leq d, 0 \leq L \leq p}$, 
\[
K_{h}(R_i - r) = K\left({R_{i,1} - r_i \over h_1}, \dots, {R_{i,d} - r_d \over h_d}\right)
\]
and each $h_j$ is a sequence of positive constants (bandwidths) such that $h_j \to 0$ as $n \to \infty$. For notational convenience, we interpret $\sum_{1 \leq j_1 \leq \dots \leq j_L \leq d}\beta_{j_1\dots j_L}\prod_{\ell=1}^{L}(R_{i,j_\ell} - r_{j_\ell}) = \beta_0$ when $L=0$. We introduce some notations: 
\begin{align*}
\Y &:=
\left(
\begin{array}{c}
Y_1  \\
\vdots \\
Y_n
\end{array}
\right),\ \W := \diag\left(K_h\left(R_1 - r \right),\dots,K_h\left(R_n - r\right)\right),\\
\R &:= (\R_1,\dots,\R_n) = 
\left(
\begin{array}{ccc}
1 & \cdots & 1  \\
\left(R_{1} - r\right)_1 & \cdots &\left(R_{n} - r \right)_1\\
\vdots & \dots & \vdots \\
\left(R_{1} - r\right)_p & \cdots &\left(R_{n} - r\right)_p
\end{array}
\right)= 
\left(
\begin{array}{ccc}
1 & \dots & 1\\
\check{\R}_1 &\dots & \check{\R}_n
\end{array}
\right),
\end{align*}
where 
\begin{align*}
\left(R_{i} - r\right)_L &= \left(\prod_{\ell = 1}^{L}(R_{i,j_{\ell}} - r_{j_{\ell}})\right)'_{1 \leq j_1 \leq \dots \leq j_L \leq d}.
\end{align*}
The minimization problem (\ref{LL-minimize}) can be rewritten as
\begin{align*}
\hat{\beta}(r) &= \argmin_{\beta \in \mathbb{R}^D}(\Y - \R'\beta)'\W(\Y - \R'\beta) = \argmin_{\beta \in \mathbb{R}^D}Q_n(\beta).
\end{align*}
Then the first order condition of the problem (\ref{LL-minimize}) is given by
\begin{align*}
{\partial \over \partial \beta}Q_n(\beta) &= -2\R \W \Y + 2\R \W \R'\beta = 0.
\end{align*}
Hence the solution of the problem (\ref{LL-minimize}) is given by
\begin{align*}
\hat{\beta}(r) &= (\R \W \R')^{-1}\R \W \Y\\
&= \left[\sum_{i=1}^{n}K_h\left(R_i - r \right)\R_i \R'_i\right]^{-1}\sum_{i=1}^{n}K_h\left(R_i - r\right)\R_i Y_i.
\end{align*}

Define
\[
H := \diag(1,h_1,\dots, h_d,h_1^2, h_1h_2, \dots,h_d^2,\dots, h_1^p, h_1^{p-1}h_2,\dots, h_d^p) \in \mathbb{R}^{D \times D}.
\]

\begin{theorem}[Asymptotic normality of local-polynomial estimators]\label{thm: LP-CLT}
Under Assumptions \ref{Ass1}, \ref{Ass2}, \ref{Ass3} and \ref{Ass4}, as $n \to \infty$, we have
\begin{align*}
&\sqrt{nh_1 \cdots h_d}\left(H\left(\hat{\beta}(r) - M(r)\right) - S^{-1}B^{(d,p)}M_n^{(d,p)}(r)\right)\\ 
&\stackrel{d}{\to} N\left( \left(
\begin{array}{c}
0 \\
\vdots \\
0
\end{array}
\right), {\sigma^2(r) \over f(r)}S^{-1}\mathcal{K}S^{-1}\right), 
\end{align*}
where 
\begin{align*}
M_n^{(d,p)}(r) &= \left({\partial_{j_1\dots j_{p+1}}m(r) \over \bm{s}_{j_1 \dots j_{p+1}}!}\prod_{\ell=1}^{p+1}h_{j_{\ell}}\right)'_{1 \leq j_1 \leq \dots \leq j_{p+1}\leq d}\\
&= \left({\partial_{1\dots 1}m(r) \over (p+1)!}h_1^{p+1},{\partial_{1\dots 2}m(r) \over p!}h_1^{p}h_2,\dots, {\partial_{d\dots d}m(r) \over (p+1)!}h_d^{p+1}\right)' \in \mathbb{R}^{\bar{D}}, 
\end{align*}
\begin{align*}
B^{(d,p)} &= \int \left(
\begin{array}{c}
1 \\
\check{\bm{z}}
\end{array}
\right)
(\bm{z})'_{p+1}d\bm{z} \in \mathbb{R}^{D \times \bar{D}},\ \mathcal{K}= \int K^2(\bm{z})\left(
\begin{array}{c}
1 \\
\check{\bm{z}}
\end{array}
\right)
(1\ \check{\bm{z}}')d\bm{z}.
\end{align*}
\end{theorem}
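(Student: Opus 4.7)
The plan is to follow the standard local-polynomial asymptotics (as in \cite{Ruppert.Wand1994} and \cite{Masry1996}), adapted to the multivariate and possibly anisotropic bandwidth $(h_1,\dots,h_d)$. Starting from the closed form $\hat{\beta}(r)=(\R\W\R')^{-1}\R\W\Y$ and substituting $Y_i=m(R_i)+\varepsilon_i$, I would invoke the order-$(p+1)$ Taylor expansion of $m$ at $r$, valid under Assumption~\ref{Ass3}(a), to write
\[
m(R_i)-\R_i'M(r)=(R_i-r)_{p+1}'\,M^{(p+1)}(r)+o(\|R_i-r\|^{p+1}),
\]
where $M^{(p+1)}(r)$ collects the scaled $(p+1)$-order partials analogously to $M(r)$. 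Premultiplying the estimator error by $H$ and inserting $H\,H^{-1}$ around each occurrence of $\R$ yields
\[
H\bigl(\hat{\beta}(r)-M(r)\bigr)=\hat{S}_n^{-1}\bigl(\hat{B}_n+\hat{V}_n\bigr),\quad \hat{S}_n:=\frac{H^{-1}\R\W\R'H^{-1}}{nh_1\cdots h_d},
\]
where $\hat{B}_n$ is the kernel-weighted Taylor contribution and $\hat{V}_n=(nh_1\cdots h_d)^{-1}H^{-1}\R\W\varepsilon$ is the noise contribution.

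I would then treat the design matrix and the bias term together. A componentwise change of variables $u_j=(R_{i,j}-r_j)/h_j$ converts every entry of $H^{-1}\R_i$ into a product of $u_j$'s, so that the kernel-weighted sums become Riemann approximations of integrals against $K(u)f(r+h\cdot u)$ over the compact support of $K$. Combined with Assumptions~\ref{Ass1}(a)(b), \ref{Ass2}, and \ref{Ass4}(a), this gives $\hat{S}_n\stackrel{p}{\to}f(r)S$ via a triangular-array LLN (variances are $O((nh_1\cdots h_d)^{-1})\to 0$ by Assumption~\ref{Ass4}(b)). Applying the same scheme to $(R_i-r)_{p+1}$, which factors as $D^{(p+1)}(u)_{p+1}$ with $D^{(p+1)}$ the diagonal matrix of bandwidth products $\prod_\ell h_{j_\ell}$, gives $\hat{B}_n\stackrel{p}{\to}f(r)B^{(d,p)}M_n^{(d,p)}(r)$, because by construction $D^{(p+1)}M^{(p+1)}(r)=M_n^{(d,p)}(r)$. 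The $o(\|R_i-r\|^{p+1})$ Taylor remainder is swept to $o_p((nh_1\cdots h_d)^{-1/2})$ using Assumption~\ref{Ass4}(c) and the compact kernel support. Slutsky then produces the deterministic bias correction $S^{-1}B^{(d,p)}M_n^{(d,p)}(r)$.

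For the variance term I would apply a Lindeberg--Feller CLT to
\[
\sqrt{nh_1\cdots h_d}\,\hat{V}_n=\frac{1}{\sqrt{nh_1\cdots h_d}}\sum_{i=1}^n K_h(R_i-r)(H^{-1}\R_i)\varepsilon_i.
\]
The summands have mean zero by $E[\varepsilon_i|R_i]=0$, and the same change of variables together with Assumptions~\ref{Ass3}(b) and \ref{Ass2} yields the per-sum covariance limit $f(r)\sigma^2(r)\mathcal{K}$. The Lyapunov condition follows from $E[|K_h(R_i-r)(H^{-1}\R_i)\varepsilon_i|^{2+\delta}]=O(h_1\cdots h_d)$ on the kernel support, where Assumption~\ref{Ass3}(c) bounds the residual moment; the Lyapunov ratio is then $O((nh_1\cdots h_d)^{-\delta/2})\to 0$ by Assumption~\ref{Ass4}(b). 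Hence $\sqrt{nh_1\cdots h_d}\,\hat{V}_n\stackrel{d}{\to}N(\0,f(r)\sigma^2(r)\mathcal{K})$, and premultiplying by $\hat{S}_n^{-1}\stackrel{p}{\to}(f(r)S)^{-1}$ via Slutsky yields the stated covariance $\frac{\sigma^2(r)}{f(r)}S^{-1}\mathcal{K}S^{-1}$.

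The main obstacle is multi-index bookkeeping. Each row and column of $\R_i$, $H$, $S$, $B^{(d,p)}$, and $\mathcal{K}$ is indexed by an ordered tuple $(j_1\leq\dots\leq j_L)$, and one must verify that the scaling by $H$ correctly pairs each entry of $H^{-1}\R_i$ with the matching bandwidth product in $M_n^{(d,p)}(r)$ and the matching monomial in the integrand, so that the reference matrices of Assumption~\ref{Ass2}(c) emerge cleanly from the change of variables. A secondary subtlety is that Assumption~\ref{Ass4}(b) provides exactly the rate needed both to make the variance of $\hat{B}_n$ asymptotically negligible relative to its mean and to close the Lyapunov condition; once that degree-counting is in place, the rest reduces to standard kernel-regression CLT machinery.
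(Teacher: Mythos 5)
Your proposal is correct and follows essentially the same route as the paper's proof: the decomposition $H(\hat\beta(r)-M(r))=S_n^{-1}(B_n+V_n)$, the componentwise change of variables with dominated convergence to get $S_n\stackrel{p}{\to}f(r)S$, a Lyapunov/Cram\'er--Wold CLT for the noise term, and control of the Taylor remainder (which the paper handles via a four-fold centering of $B_n$ around the Lagrange-form intermediate point, equivalent under Assumption~\ref{Ass3}(a) to your Peano-remainder treatment). No substantive differences or gaps.
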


\begin{proof}
\textcolor{black}{Define $\kappa_0^{(v)}:= \int K_+^v(z)dz$, $\kappa_{j_1,\dots, j_M}^{(v)}:= \int \prod_{\ell=1}^{M}z_{j_\ell}K_+^v(z)dz$ for integer $v$.} We also define $h:= (h_1,\dots, h_d)'$ and for $r, y \in \mathbb{R}^{d}$, let $r \circ y = (r_1y_1 ,\cdots, r_dy_d)'$ be the Hadamard product. Considering Taylor's expansion of $m(r)$ around $r=(r_1,\dots,r_d)'$, 
\begin{align*}
m(R_i) &= (1, \check{\R}'_i)M(r) + {1 \over (p+1)!}\sum_{1 \leq j_1 \leq \dots \leq j_{p+1}\leq d}{(p+1)! \over \bm{s}_{j_1\dots j_{p+1}}!}\partial_{j_1,\dots,j_{p+1}}m(\tilde{R}_i)\\
&\quad \times \prod_{\ell=1}^{p+1}(R_{i,j_\ell} - r_{j_\ell}),
\end{align*}
where $\tilde{R}_i = r + \theta_i(R_i - r)$ for some $\theta_i \in [0,1)$. Then we have
\begin{align*}
&\hat{\beta}(r) - M(r) \\
&= (\R \W \R')^{-1}\R \W (\Y - \R'M(r))\\
&= \left[\sum_{i=1}^{n}K_h\left(R_i - r\right)
\left(
\begin{array}{c}
1 \\
\check{\R}_i 
\end{array}
\right)
(1\ \check{\R}'_i)\right]^{-1}
\sum_{i=1}^{n}K_h\left(R_i - r\right)
\left(
\begin{array}{c}
1 \\
\check{\R}_i
\end{array}
\right)\\
&\quad \times \left(\varepsilon_i + \sum_{1 \leq j_1 \leq \dots \leq j_{p+1}\leq d}{1 \over \bm{s}_{j_1\dots j_{p+1}}!}\partial_{j_1,\dots,j_{p+1}}m(\tilde{R}_i)\prod_{\ell=1}^{p+1}(R_{i,j_\ell} - r_{j_\ell})\right).
\end{align*}
This yields
\begin{align*}
\sqrt{nh_1 \cdots h_d}H(\hat{\beta}(r) - M(r)) &= S_n^{-1}(V_n(r) + B_n(r)),
\end{align*}
where
\begin{align*}
S_n(r) &= {1 \over nh_1 \cdots h_d}\sum_{i=1}^{n}K_h\left(R_i - r\right)H^{-1}
\left(
\begin{array}{c}
1 \\
\check{\R}_i
\end{array}
\right)
(1\ \check{\R}'_i)H^{-1},\\
V_n(r) &= {1 \over \sqrt{nh_1 \cdots h_d}}\sum_{i=1}^{n}K_h\left(R_i - r\right)H^{-1}
\left(
\begin{array}{c}
1 \\
\check{\R}_i
\end{array}
\right)\varepsilon_i\\
&=: (V_{n,j_1\dots j_L}(r))'_{1 \leq j_1\leq \dots \leq j_L \leq d, 0 \leq L \leq p},\\
B_n(r) &= {1 \over \sqrt{nh_1 \cdots h_d}}\sum_{i=1}^{n}K_h\left(R_i - r\right)H^{-1}
\left(
\begin{array}{c}
1 \\
\check{\R}_i
\end{array}
\right)\\
&\quad \times \sum_{1 \leq j_1 \leq \dots \leq j_{p+1}\leq d}{1 \over \bm{s}_{j_1\dots j_{p+1}}!}\partial_{j_1,\dots,j_{p+1}}m(\tilde{R}_i)\prod_{\ell=1}^{p+1}(R_{i,j_\ell} - r_{j_\ell})\\ 
&=: (B_{n,j_1\dots j_L}(\tilde{R}_i))'_{1 \leq j_1\leq \dots \leq j_L \leq d, 0 \leq L \leq p}.
\end{align*}

\noindent
(Step 1) Now we evaluate $S_n(r)$. For $1 \leq j_{1,1} \leq \dots \leq j_{1,L_1}\leq d,1\leq j_{2,1}\leq \dots \leq j_{2,L_2} \leq d, 0 \leq L_1,L_2 \leq p$, we define
\begin{align*}
&I_{n,j_{1,1}\dots j_{1,L_1},j_{2,1}\dots j_{2,L_2}}\\ 
&:= {1 \over nh_1 \cdots h_d}\sum_{i=1}^{n}K_h\left(R_i - r \right)\prod_{\ell_1=1}^{L_1}\left({R_{i,j_{\ell_1}} - r_{j_{\ell_1}} \over h_{j_{\ell_1}}}\right)\prod_{\ell_2=1}^{L_2}\left({R_{i,j_{\ell_2}} - r_{j_{\ell_2}} \over h_{j_{\ell_2}}}\right). 
\end{align*}
Observe that 
\begin{align*}
&E\left[I_{n,j_{1,1}\dots j_{1,L_1},j_{2,1}\dots j_{2,L_2}} \right] \\
&= {1 \over h_1 \cdots h_d}E\left[K_h\left(R_i - r \right)\prod_{\ell_1=1}^{L_1}\left({R_{i,j_{\ell_1}} - r_{j_{\ell_1}} \over h_{j_{\ell_1}}}\right)\prod_{\ell_2=1}^{L_2}\left({R_{i,j_{\ell_2}} - r_{j_{\ell_2}} \over h_{j_{\ell_2}}}\right)\right]\\
&= \int \left(\prod_{\ell_1=1}^{L_1}z_{j_{\ell_1}}\right)\left(\prod_{\ell_2=1}^{L_2}z_{j_{\ell_2}}\right)K(z)f(r + h \circ z)dz\\
&= f(r)\kappa_{j_{1,1}\dots j_{1,L_1}j_{2,1}\dots j_{2,L_2}}^{(1)} + o(1).
\end{align*}
For the last equation, we used the dominated convergence theorem. 
\begin{align*}
&\Var(I_{n,j_{1,1}\dots j_{1,L_1},j_{2,1}\dots j_{2,L_2}})\\ 
&\quad = {1 \over n(h_1 \cdots h_d)^2}\Var\left(K_h\left(R_1 - r \right)\prod_{\ell_1=1}^{L_1}\left({R_{i,j_{\ell_1}} - r_{j_{\ell_1}} \over h_{j_{\ell_1}}}\right)\prod_{\ell_2=1}^{L_2}\left({R_{i,j_{\ell_2}} - r_{j_{\ell_2}} \over h_{j_{\ell_2}}}\right)\right)\\
&\quad = {1 \over nh_1 \cdots h_d}\left\{\int \prod_{\ell_1=1}^{L_1}\left({R_{i,j_{\ell_1}} - r_{j_{\ell_1}} \over h_{j_{\ell_1}}}\right)^2\prod_{\ell_2=1}^{L_2}\left({R_{i,j_{\ell_2}} - r_{j_{\ell_2}} \over h_{j_{\ell_2}}}\right)^2K^2(z)f(r + h \circ z)dz \right. \\ 
&\left. \quad \quad - h_1 \cdots h_d\left(\int\prod_{\ell_1=1}^{L_1}\left({R_{i,j_{\ell_1}} - r_{j_{\ell_1}} \over h_{j_{\ell_1}}}\right)\prod_{\ell_2=1}^{L_2}\left({R_{i,j_{\ell_2}} - r_{j_{\ell_2}} \over h_{j_{\ell_2}}}\right)K(z)f(r + h \circ z)dz\right)^2\right\}\\
&\quad = {1 \over nh_1 \cdots h_d}\left(f(r)\kappa_{j_{1,1}\dots j_{1,L_1}j_{2,1}\dots j_{2,L_2}j_{1,1}\dots j_{1,L_1}j_{2,1}\dots j_{2,L_2}}^{(2)} + o(1)\right)\\
&\quad \quad   - {1 \over n}(f(r)\kappa_{j_{1,1}\dots j_{1,L_1}j_{2,1}\dots j_{2,L_2}}^{(1)} + o(1))^2\\
&\quad = {f(r)\kappa_{j_{1,1}\dots j_{1,L_1}j_{2,1}\dots j_{2,L_2}j_{1,1}\dots j_{1,L_1}j_{2,1}\dots j_{2,L_2}}^{(2)} \over nh_1 \cdots h_d} + o\left({1 \over nh_1 \cdots h_d}\right).
\end{align*}
For the third equation, we used the dominated convergence theorem. Then for any $\rho>0$, 
\begin{align*}
&P\left(|I_{n,j_{1,1}\dots j_{1,L_1},j_{2,1}\dots j_{2,L_2}} - f(r)\kappa_{j_{1,1}\dots j_{1,L_1}j_{2,1}\dots j_{2,L_2}}^{(1)}|>\rho \right)\\ 
&\leq \rho^{-1}\left\{\Var(I_{n,j_{1,1}\dots j_{1,L_1},j_{2,1}\dots j_{2,L_2}}) + \left(E[I_{n,j_{1,1}\dots j_{1,L_1},j_{2,1}\dots j_{2,L_2}}] - f(r)\kappa_{j_{1,1}\dots j_{1,L_1}j_{2,1}\dots j_{2,L_2}}^{(1)}\right)^2\right\}\\
&= O\left({1 \over nh_1 \cdots h_d}\right) + o(1) = o(1).
\end{align*}
This yields $I_{n,j_{1,1}\dots j_{1,L_1},j_{2,1}\dots j_{2,L_2}} \stackrel{p}{\to} f(r)\kappa_{j_{1,1}\dots j_{1,L_1}j_{2,1}\dots j_{2,L_2}}^{(1)}$. Hence we have $S_{n}(r) \stackrel{p}{\to} f(r)S$.

\noindent
(Step 2) Now we evaluate $V_n(r)$. For any $t = (t_0,t_1,\dots, t_d, t_{11},\dots,t_{dd},\dots,t_{1\dots1},\dots, t_{d\dots d})' \in \mathbb{R}^D$, we define
\begin{align*}
R_{n,i,j_1\dots j_L} &:= {1 \over \sqrt{nh_1 \cdots h_d}}K_h\left(R_i - r\right)\prod_{\ell=1}^{L}\left({R_{i,j_\ell} - r_{j_\ell} \over h_{j_\ell}}\right)\varepsilon_i,\ 1\leq j_1,\dots,j_L \leq d,\\
Z_{n,i} &:= \sum_{L=0}^{p}\sum_{1 \leq j_1 \leq \dots \leq j_L \leq d}t_{j_1\dots j_L}R_{n,i,j_1\dots j_L}.
\end{align*}
Observe that
\begin{align*}
\sigma_{n,j_1\dots j_L}^2 &:= \Var\left(\sum_{i=1}^{n}R_{n,i,j_1\dots j_L}\right) = {1 \over h_1 \cdots h_d}E\left[\varepsilon_i^2K_h^2\left(R_1 - r \right)\prod_{\ell=1}^{L}\left({R_{1,j_\ell} - r_{j_\ell} \over h_{j_\ell}}\right)^2\right]\\
&= {1 \over h_1 \cdots h_d}E\left[\sigma^2(R_i)K_h^2\left(R_1 - r \right)\prod_{\ell=1}^{L}\left({R_{1,j_\ell} - r_{j_\ell} \over h_{j_\ell}}\right)^2\right]\\ 
&= \int \sigma^2(r + h \circ z)\left(\prod_{\ell=1}^{L}z_{j_\ell}^2\right)K^2(z)f(r + h \circ z)dz\\
&= \sigma^2(r)f(r)\kappa_{j_1\dots j_L j_1 \dots j_L}^{(2)} + o(1).
\end{align*}
For the last equation, we used the dominated convergence theorem. Moreover, for $1 \leq j_{1,1} \leq \dots \leq j_{1,L_1} \leq d$ and $1 \leq j_{2,1} \leq \dots \leq j_{2,L_2} \leq d$, we have
\begin{align*}
&\Cov(V_{n,j_{1,1}\dots j_{1,L_1}}(r), V_{n,j_{2,1}\dots j_{2,L_2}}(r))\\ 
&=  {1 \over h_1 \cdots h_d}E\left[\sigma^2(R_i)K_h^2\left(R_i - r \right)\prod_{\ell_1=1}^{L_1}\left({R_{i,j_{1,\ell_1}} - r_{j_{1,\ell_1}} \over h_{j_{1,\ell_1}}}\right)\prod_{\ell_2=1}^{L_2}\left({R_{i,j_{2,\ell_2}} - r_{j_{2,\ell_2}} \over h_{j_{2,\ell_2}}}\right)\right]\\ 
&= \int \sigma^2(r + h \circ z)\left(\prod_{\ell_1=1}^{L_1}z_{j_{1,\ell_1}}\right)\left(\prod_{\ell_2=1}^{L_2}z_{j_{2,\ell_2}}\right) K^2(z)f(r + h \circ z)dz\\ 
&= \sigma^2(r)f(r)\kappa_{j_{1,1}\dots j_{1,L_1}j_{2,1}\dots j_{2,L_2}}^{(2)} + o(1).
\end{align*}
For the last equation, we used the dominated convergence theorem. For sufficiently large $n$, we have
\begin{align*}
&\sum_{i=1}^{n}E[|Z_{n,i}|^{2+\delta}] \\
&= {1 \over n^{\delta/2}(h_1 \cdots h_d)^{1 + \delta/2}}E\left[|\varepsilon_i|^{2+\delta}\left|K_h\left(R_i - r \right)\right|^{2+\delta} \right. \\
&\left. \quad \times \left|\sum_{L=0}^{p}\sum_{1 \leq j_1 \leq \dots \leq j_L \leq d}t_{j_1\dots j_L}\prod_{\ell=1}^{L}\left({R_{i,j_\ell} - r_{j_\ell} \over h_{j_\ell}}\right)\right|^{2+\delta}\right]\\
&\leq {U(r) \over (nh_1 \cdots h_d)^{\delta/2}}\int \left|\sum_{L=0}^{p}\sum_{1 \leq j_1 \leq \dots \leq j_L \leq d}t_{j_1\dots j_L}\prod_{\ell=1}^{L}z_{j_{\ell}}\right|^{2+\delta}|K(z)|^{2+\delta}f(r + h \circ z)dz\\
&= {U(r)f(r) \over (nh_1 \cdots h_d)^{\delta/2}}\int \left|\sum_{L=0}^{p}\sum_{1 \leq j_1 \leq \dots \leq j_L \leq d}t_{j_1\dots j_L}\prod_{\ell=1}^{L}z_{j_{\ell}}\right|^{2+\delta}|K(z)|^{2+\delta}dz + o(1)\\
&= o(1). 
\end{align*}
For the second equation, we used the dominated convergence theorem. Thus, Lyapounov's condition is satisfied for $\sum_{i=1}^{n}Z_{n,i}$. Therefore, by Cram\'er-Wold device, we have
\begin{align*}
V_n(r) &\stackrel{d}{\to} N\left( \left(
\begin{array}{c}
0 \\
\vdots \\
0
\end{array}
\right), \sigma^2(r)f(r)\mathcal{K}\right).
\end{align*}

\noindent
(Step 3) Now we evaluate $B_n(r)$. Decompose
\begin{align*}
B_{n,j_1\dots j_L}(\tilde{R}_i)&= \left\{B_{n,j_1\dots j_L}(\tilde{R}_i) - B_{n,j_1\dots j_L}(r) - E\left[B_{n,j_1\dots j_L}(\tilde{R}_i) - B_{n,j_1\dots j_L}(r)\right]\right\}\\
&\quad + E\left[B_{n,j_1\dots j_L}(\tilde{R}_i) - B_{n,j_1\dots j_L}(r)\right]\\
&\quad + \left\{B_{n,j_1\dots j_L}(r) - E\left[B_{n,j_1\dots j_L}(r)\right]\right\}\\
&\quad + E\left[B_{n,j_1\dots j_L}(r)\right]\\
&=: \sum_{\ell=1}^{4}B_{n,j_1\dots j_L\ell}. 
\end{align*}
Define $N_r(h):= \prod_{j=1}^{d}[r_j-C_Kh_j, r_j + C_Kh_j]$. For $B_{n,j_1\dots j_L1}$, 
\begin{align}\label{B_n01}
&\Var(B_{n,j_1\dots j_L1})  \nonumber \\ 
&\leq {1 \over \{(p+1)!\}^2h_1 \cdots h_d}E\left[K_h^2\left(R_i - r \right)\prod_{\ell=1}^{L}\left({R_{i,j_\ell} - r_{j_\ell} \over h_{j_\ell}}\right)^2 \right. \nonumber \\
&\left. \quad \times \sum_{1 \leq j_{1,1} \leq \dots \leq j_{1,p+1} \leq d, 1 \leq j_{2,1} \leq \dots \leq j_{2,p+1} \leq d}{1 \over \bm{s}_{j_{1,1} \dots j_{1,p+1}}! }{1 \over \bm{s}_{j_{2,1}\dots j_{2,p+1}}!} \right. \nonumber  \\
&\left. \quad \times (\partial_{j_{1,1}\dots j_{1,p+1}}m(\tilde{R}_i) - \partial_{j_{1,1}\dots j_{1,p+1}}m(r))(\partial_{j_{2,1}\dots j_{2,p+1}}m(\tilde{R}_i) - \partial_{j_{2,1}\dots j_{2,p+1}}m(r)) \right.  \nonumber \\
&\left. \quad  \times \prod_{\ell_1=1}^{p+1}(R_{i,j_{1,\ell_1}} - r_{j_{1,\ell_1}})\prod_{\ell_2=1}^{p+1}(R_{i,j_{2,\ell_2}}-r_{j_{2,\ell_2}})\right]  \nonumber \\ 
&\leq {1 \over \{(p+1)!\}^2}\max_{1 \leq j_1 \leq \dots \leq j_{p+1} \leq d}\sup_{y \in N_r(h)}|\partial_{j_1\dots j_{p+1}}m(y) - \partial_{j_1\dots j_{p+1}}m(r)|^2 \nonumber \\
&\quad \quad \times  \sum_{1 \leq j_{1,1} \leq \dots \leq j_{1,p+1} \leq d, 1 \leq j_{2,1} \leq \dots \leq j_{2,p+1} \leq d}\prod_{\ell_1=1}^{p+1}h_{j_{1,\ell_1}}\prod_{\ell_2=1}^{p+1}h_{j_{2,\ell_2}} \nonumber \\
&\quad \quad \times \int \left(\prod_{\ell=1}^{L}|z_{j_\ell}|\prod_{\ell_1=1}^{p+1}|z_{j_{1,\ell_1}}|\prod_{\ell_2=1}^{p+1}|z_{j_{2,\ell_2}}|\right)K^2(z)f(r + h \circ z)dz \nonumber \\
&= o\left( \sum_{1 \leq j_{1,1} \leq \dots \leq j_{1,p+1} \leq d, 1 \leq j_{2,1} \leq \dots \leq j_{2,p+1} \leq d}\prod_{\ell_1=1}^{p+1}h_{j_{1,\ell_1}}\prod_{\ell_2=1}^{p+1}h_{j_{2,\ell_2}}\right).
\end{align}
Then we have $B_{n,j_1\dots j_L1} = o_p(1)$. 

For $B_{n,j_1\dots j_L2}$, 
\begin{align}\label{B_n02}
&|B_{n,j_1\dots j_L2}| \nonumber \\ 
&\leq {1 \over (p+1)!}\max_{1 \leq j_1,\dots,j_{p+1} \leq d}\sup_{y \in N_r(h)}|\partial_{j_1\dots j_{p+1}}m(y) - \partial_{j_1\dots j_{p+1}}m(r)| \nonumber \\
&\quad \times \sqrt{nh_1 \cdots h_d}\sum_{1 \leq j_{1,1} \leq \dots \leq j_{1,p+1} \leq d}\prod_{\ell_1=1}^{p+1}h_{j_{1,\ell_1}}\int \left(\prod_{\ell=1}^{L}|z_{j_\ell}|\prod_{\ell_1=1}^{p+1}|z_{j_{1,\ell_1}}|\right)|K(z)|f(r + h \circ z)dz \nonumber \\
&= o(1). 
\end{align}

For $B_{n,j_1\dots j_L3}$, 
\begin{align}\label{B_n03}
&\Var(B_{n,j_1\dots j_L3}) \nonumber \\ 
&\leq {1 \over \{(p+1)!\}^2} \sum_{1 \leq j_{1,1} \leq \dots \leq j_{1,p+1} \leq d, 1 \leq j_{2,1} \leq \dots \leq j_{2,p+1} \leq d}\partial_{j_{1,1}\dots j_{1,p+1}}m(r)\partial_{j_{2,1}\dots j_{2,p+1}}m(r) \nonumber \\
&\quad \quad \times \prod_{\ell_1=1}^{p+1}h_{j_{1,\ell_1}}\prod_{\ell_2=1}^{p+1}h_{j_{2,\ell_2}} \int \left(\prod_{\ell=1}^{L}z_{j_\ell}^2\prod_{\ell_1}^{p+1}|z_{j_{1,\ell_1}}|\prod_{\ell_2=1}^{p+1}|z_{j_{2,\ell_2}}|\right)K^2(z)f(r + h \circ z)dz \nonumber  \\ 
&= o(1).
\end{align}
Then we have $B_{n,j_1\dots j_L3} = o_p(1)$. 

For $B_{n,j_1\dots j_L4}$, 
\begin{align}\label{B_n04}
&B_{n,j_1\dots j_L4} \nonumber \\ 
&= \sqrt{nh_1 \cdots h_d}\sum_{1 \leq j_{1,1} \leq \dots \leq j_{1,p+1} \leq d}{\partial_{j_{1,1}\dots j_{1,p+1}}m(r) \over \bm{s}_{j_{1,1} \dots j_{1,p+1}}!}\nonumber \\&\quad \times \prod_{\ell_1=1}^{p+1}h_{j_{1,\ell_1}} \int \left(\prod_{\ell=1}^{L}z_{j_\ell} \prod_{\ell_1=1}^{p+1}z_{j_{1,\ell_1}}\right) K(z)f(r + h \circ z)dz \nonumber \\
&= f(r) \sqrt{nh_1 \cdots h_d}\sum_{1 \leq j_{1,1} \leq \dots \leq j_{1,p+1} \leq d}{\partial_{j_{1,1}\dots j_{1,p+1}}m(r) \over \bm{s}_{j_{1,1} \dots j_{1,p+1}}!}\prod_{\ell_1=1}^{p+1}h_{j_{1,\ell_1}} \kappa_{j_1\dots j_Lj_{1,1}\dots j_{1,p+1}}^{(1)} + o(1). 
\end{align} 
Combining (\ref{B_n01})-(\ref{B_n04}), 
\begin{align*}
B_{n,j_1\dots j_L}(\tilde{R}_i) &= f(r)\sqrt{nh_1 \cdots h_d}\sum_{1 \leq j_{1,1} \leq \dots \leq j_{1,p+1} \leq d}{\partial_{j_{1,1}\dots j_{1,p+1}}m(r) \over \bm{s}_{j_{1,1} \dots j_{1,p+1}}!}\\
&\quad \times \prod_{\ell_1=1}^{p+1}h_{j_{1,\ell_1}} \kappa_{j_1\dots j_Lj_{1,1}\dots j_{1,p+1}}^{(1)} + o_p(1).
\end{align*}

\noindent
(Step 4) Combining the results in Steps1-3, we have
\begin{align*}
A_n(r) &:=V_n(r) + \left(B_n(r) - f(r)\sqrt{nh_1 \cdots h_d}\left(b_{n,j_1\dots j_L}(r)\right)'_{1 \leq j_1 \leq \dots \leq  j_L \leq d,0\leq L\leq p}\right) \\ 
&\stackrel{d}{\to} N\left( \left(
\begin{array}{c}
0 \\
\vdots \\
0
\end{array}
\right), \sigma^2(r)f(r)\mathcal{K}\right).
\end{align*} 
This yields the desired result. 
\end{proof}


\begin{remark}[General form of the MSE of $\widehat{\partial_{j_1\dots j_L}m(r)}$]
Define
\begin{align*}
\bm{b}_n^{(d,p)}(r) &:= B^{(d,p)}M_n^{(d,p)}(r)\\ 
&= \left(b_{n,0}(r), b_{n,1}(r),\dots,b_{n,d}(r), \right. \\
&\left. \quad \quad \quad b_{n,11}(r),b_{n,12}(r),\dots,b_{n,dd}(r), \dots, b_{n,1\dots,1}(r), b_{n,1\dots 2}(r),\dots,b_{n,d \dots d}(r)\right)'
\end{align*}
and let $e_{j_1\dots j_L}= (0,\dots,0,1,0,\dots,0)'$ be a $D$-dimensional vector such that $e_{j_1\dots j_L}'\bm{b}_n^{(d,p)}(r) = b_{j_1\dots j_L}(r)$. 
Theorem \ref{thm: LP-CLT} yields that 
\begin{align*}
b_{n,j_1,\dots,j_L}(r) &:= \sum_{1 \leq j_{1,1} \leq \dots \leq j_{1,p+1}\leq d}{\partial_{j_{1,1}\dots j_{1,p+1}}m(r) \over \bm{s}_{j_{1,1}\dots j_{1,p+1}}!}\prod_{\ell_1=1}^{p+1}h_{j_{1,\ell_1}}\kappa_{j_1\dots j_L j_{1,1} \dots j_{1,p+1}}^{(1)},\\
\end{align*}
for $1 \leq j_1 \leq \dots \leq  j_L \leq d$, $0 \leq L \leq p$ and 
\begin{align*}
&\text{MSE}(\widehat{\partial_{j_1\dots j_L}m(r)}) \\ 
&= \left\{\bm{s}_{j_1\dots j_L}!{(S^{-1}e_{j_1\dots j_L})'B^{(d,p)}M_n^{(d,p)}(r) \over \prod_{\ell=1}^{L}h_{j_\ell}}\right\}^2\\ 
&\quad + \left(\bm{s}_{j_1\dots j_L}!\right)^2{\sigma^2(r) \over nh_1\cdots h_d \times \left(\prod_{\ell=1}^{L}h_{j_\ell}\right)^2f(r)}e'_{j_1\dots j_L}S^{-1}\mathcal{K}S^{-1}e_{j_1\dots j_L}.
\end{align*}
\end{remark}

\subsection{Higher-order bias} \label{sec.asymptotic_bias}
In this section, we derive higher-order biases of local-polynomial estimators. Suppose that Assumptions \ref{Ass1}, \ref{Ass2}, \ref{Ass3} and \ref{Ass4} hold. Further, we assume that 
\begin{itemize}
\item the density function $f$ is continuously differentiable on $U_r$. 
\item the mean function $m$ is $(p+2)$-times continuously differentiable on $U_r$. 
\end{itemize}
Recall that 
\begin{align*}
\sqrt{nh_1 \cdots h_d}H(\hat{\beta}(r) - M(r)) &= S_n^{-1}(V_n(r) + B_n(r)),
\end{align*}
where
\begin{align*}
S_n(r) &= {1 \over nh_1 \cdots h_d}\sum_{i=1}^{n}K_h\left(R_i - r\right)H^{-1}
\left(
\begin{array}{c}
1 \\
\check{\R}_i
\end{array}
\right)
(1\ \check{\R}'_i)H^{-1},\\
V_n(r) &= {1 \over \sqrt{nh_1 \cdots h_d}}\sum_{i=1}^{n}K_h\left(R_i - r\right)H^{-1}
\left(
\begin{array}{c}
1 \\
\check{\R}_i
\end{array}
\right)\varepsilon_i =: (V_{n,j_1\dots j_L}(r))'_{1 \leq j_1\leq \dots \leq j_L \leq d, 0 \leq L \leq p},\\
B_n(r) &= {1 \over \sqrt{nh_1 \cdots h_d}}\sum_{i=1}^{n}K_h\left(R_i - r\right)H^{-1}
\left(
\begin{array}{c}
1 \\
\check{\R}_i
\end{array}
\right)\\
&\quad \times \left\{\sum_{1 \leq j_1 \leq \dots \leq j_{p+1}\leq d}{1 \over \bm{s}_{j_1\dots j_{p+1}}!}\partial_{j_1,\dots,j_{p+1}}m(r)\prod_{\ell=1}^{p+1}(R_{i,j_\ell} - r_{j_\ell}) \right. \\
& \left. \quad \quad  + \sum_{1 \leq j_1 \leq \dots \leq j_{p+2}\leq d}{1 \over \bm{s}_{j_1\dots j_{p+2}}!}\partial_{j_1,\dots,j_{p+2}}m(\tilde{R}_i)\prod_{\ell=1}^{p+2}(R_{i,j_\ell} - r_{j_\ell})\right\} \\
&=: (B_{n,j_1\dots j_L}(\tilde{R}))'_{1 \leq j_1\leq \dots \leq j_L \leq d, 0 \leq L \leq p}.
\end{align*}
Now we focus on $B_{n,j_1 \dots j_L}(\tilde{R})$. 
\begin{align*}
&B_{n,j_1 \dots j_L}(\tilde{R}) \\
&= {1 \over \sqrt{nh_1 \cdots h_d}}\sum_{i=1}^{n}K_h\left(R_i - r\right) \left(\prod_{\ell=1}^{L}{R_{i,j_\ell} - r_{j_\ell} \over h_{j_\ell}}\right) \\
&\quad \times \left\{\sum_{1 \leq j_{1,1} \leq \dots \leq j_{1,p+1}\leq d}{1 \over \bm{s}_{j_{1,1}\dots j_{1,p+1}}!}\partial_{j_{1,1},\dots,j_{1,p+1}}m(r)\prod_{\ell_1=1}^{p+1}(R_{i,j_{1,\ell_1}} - r_{j_{1,\ell_1}}) \right. \\
& \left. \quad \quad  + \sum_{1 \leq j_{1,1} \leq \dots \leq j_{1,p+2}\leq d}{1 \over \bm{s}_{j_{1,1}\dots j_{1,p+2}}!}\partial_{j_{1,1},\dots,j_{1,p+2}}m(\tilde{R}_i)\prod_{\ell_1=1}^{p+2}(R_{i,j_{1,\ell_1}} - r_{j_{1,\ell_1}})\right\}\\
&=: \mathbb{B}_{n,1}(r) + \mathbb{B}_{n,2}(\tilde{R}).
\end{align*}
For $\mathbb{B}_{n,1}(r)$, 
\begin{align}
E[\mathbb{B}_{n,1}(r)] &= \sqrt{n \over h_1 \cdots h_d}E\left[K_h(R_1 - r)\left(\prod_{\ell=1}^{L}{R_{1,j_\ell} - r_{j_\ell} \over h_{j_\ell}}\right) \right. \nonumber \\
&\left. \quad \times \sum_{1 \leq j_{1,1} \leq \dots \leq j_{1,p+1}\leq d}{1 \over \bm{s}_{j_{1,1}\dots j_{1,p+1}}!}\partial_{j_{1,1},\dots,j_{1,p+1}}m(r)\prod_{\ell_1=1}^{p+1}(R_{1,j_{1,\ell_1}} - r_{j_{1,\ell_1}})\right] \nonumber \\
&= \sqrt{nh_1\cdots h_d}\sum_{1 \leq j_{1,1} \leq \dots \leq j_{1,p+1}\leq d}{1 \over \bm{s}_{j_{1,1}\dots j_{1,p+1}}!}\partial_{j_{1,1},\dots,j_{1,p+1}}m(r)\prod_{\ell_1=1}^{p+1}h_{j_{1,\ell_1}} \nonumber \\
&\quad \times \int  \prod_{\ell=1}^{L}z_{j_\ell}\prod_{\ell_1=1}^{p+1}z_{j_{1,\ell_1}}K(z)f(r + h \circ z)dz \nonumber \\
&=  \sqrt{nh_1\cdots h_d}\sum_{1 \leq j_{1,1} \leq \dots \leq j_{1,p+1}\leq d}{1 \over \bm{s}_{j_{1,1}\dots j_{1,p+1}}!}\partial_{j_{1,1},\dots,j_{1,p+1}}m(r)\prod_{\ell_1=1}^{p+1}h_{j_{1,\ell_1}} \nonumber \\
&\quad \times \left(f(r)\int  \prod_{\ell=1}^{L}z_{j_\ell}\prod_{\ell_1=1}^{p+1}z_{j_{1,\ell_1}}K(z)dz \right.\nonumber \\ 
&\left. \quad \quad + \sum_{k=1}^{d}\partial_k f(r) h_k\int  z_k\prod_{\ell=1}^{L}z_{j_\ell}\prod_{\ell_1=1}^{p+1}z_{j_{1,\ell_1}}K(z)dz\right)(1 + o(1)). \label{higher-order-B11}
\end{align}
\begin{align}
&\Var(\mathbb{B}_{n,1}(r)) \nonumber \\
&\leq \sum_{1 \leq j_{1,1} \leq \dots \leq j_{1,p+1} \leq d, 1 \leq j_{2,1} \leq \dots \leq j_{2,p+1} \leq d}\partial_{j_{1,1}\dots j_{1,p+1}}m(r)\partial_{j_{2,1}\dots j_{2,p+1}}m(r) \nonumber \\
&\quad \quad \times \prod_{\ell_1=1}^{p+1}h_{j_{1,\ell_1}}\prod_{\ell_2=1}^{p+1}h_{j_{2,\ell_2}} \int \left(\prod_{\ell=1}^{L}z_{j_\ell}^2\prod_{\ell_1=1}^{p+1}|z_{j_{1,\ell_1}}|\prod_{\ell_2=1}^{p+1}|z_{j_{2,\ell_2}}|\right)K^2(z)f(r + h \circ z)dz \nonumber \\
&= O\left(\left(\sum_{1 \leq j_1 \leq \dots \leq j_{p+1}\leq d}\prod_{\ell=1}^{p+1}h_{j_\ell}\right)^2\right). \label{higher-order-B12}
\end{align}
For $\mathbb{B}_{n,2}(\tilde{R})$, 
\begin{align*}
\mathbb{B}_{n,2}(\tilde{R}) &= \left\{\mathbb{B}_{n,2}(\tilde{R}) - \mathbb{B}_{n,2}(r) - E[\mathbb{B}_{n,2}(\tilde{R}) - \mathbb{B}_{n,2}(r)]\right\} \\
&\quad + E[\mathbb{B}_{n,2}(\tilde{R}) - \mathbb{B}_{n,2}(r)]\\
&\quad + \mathbb{B}_{n,2}(r) - E[\mathbb{B}_{n,2}(r)]\\
&\quad + E[\mathbb{B}_{n,2}(r)]\\
&=: \sum_{\ell = 1}^4\mathbb{B}_{n,2\ell}.
\end{align*}
Define $N_r(h):= \prod_{j=1}^{d}[r_j-C_Kh_j, r_j + C_Kh_j]$. For $\mathbb{B}_{n,21}$, 
\begin{align}\label{B_n21}
&\Var(\mathbb{B}_{n,21})  \nonumber \\ 
&\leq {1 \over h_1 \cdots h_d}E\left[K_h^2\left(R_i - r \right)\prod_{\ell=1}^{L}\left({R_{i,j_\ell} - r_{j_\ell} \over h_{j_\ell}}\right)^2 \right. \nonumber \\
&\left. \quad \times \sum_{1 \leq j_{1,1} \leq \dots \leq j_{1,p+2} \leq d, 1 \leq j_{2,1} \leq \dots \leq j_{2,p+2} \leq d}{1 \over \bm{s}_{j_{1,1} \dots j_{1,p+2}}! }{1 \over \bm{s}_{j_{2,1}\dots j_{2,p+2}}!} \right. \nonumber  \\
&\left. \quad \times (\partial_{j_{1,1}\dots j_{1,p+2}}m(\tilde{R}_i) - \partial_{j_{1,1}\dots j_{1,p+2}}m(r))(\partial_{j_{2,1}\dots j_{2,p+2}}m(\tilde{R}_i) - \partial_{j_{2,1}\dots j_{2,p+2}}m(r)) \right.  \nonumber \\
&\left. \quad  \times \prod_{\ell_1=1}^{p+2}(R_{i,j_{1,\ell_1}} - r_{j_{1\ell_1}})\prod_{\ell_2=1}^{p+2}(R_{i,j_{2,\ell_2}}-r_{j_{2\ell_2}})\right]  \nonumber \\ 
&\leq \max_{1 \leq j_1 \leq \dots \leq j_{p+2} \leq d}\sup_{y \in N_r(h)}|\partial_{j_1\dots j_{p+2}}m(y) - \partial_{j_1\dots j_{p+2}}m(r)|^2 \nonumber \\
&\quad \quad \times  \sum_{1 \leq j_{1,1} \leq \dots \leq j_{1,p+2} \leq d, 1 \leq j_{2,1} \leq \dots \leq j_{2,p+2} \leq d}\prod_{\ell_1=1}^{p+2}h_{j_{1,\ell_1}}\prod_{\ell_2=1}^{p+2}h_{j_{2,\ell_2}} \nonumber \\
&\quad \quad \times \int \left(\prod_{\ell=1}^{L}|z_{j_\ell}|\prod_{\ell_1=1}^{p+2}|z_{j_{1,\ell_1}}|\prod_{\ell_2=1}^{p+2}|z_{j_{2,\ell_2}}|\right)K^2(z)f(r + h \circ z)dz \nonumber \\
&= o\left( \left(\sum_{1 \leq j_1 \leq \dots \leq j_{p+2} \leq d}\prod_{\ell=1}^{p+2}h_{j_\ell}\right)^2\right).
\end{align}
For $\mathbb{B}_{n,22}$, 
\begin{align}\label{B_n22}
&|\mathbb{B}_{n,22}| \nonumber \\
&\leq \max_{1 \leq j_1,\dots,j_{p+2} \leq d}\sup_{y \in N_r(h)}|\partial_{j_1\dots j_{p+2}}m(y) - \partial_{j_1\dots j_{p+2}}m(r)| \nonumber \\
&\quad \times \sqrt{nh_1 \cdots h_d}\sum_{1 \leq j_{1,1} \leq \dots \leq j_{1,p+2} \leq d}\prod_{\ell_1=1}^{p+2}h_{j_{1,\ell_1}}\int \left(\prod_{\ell=1}^{L}|z_{j_\ell}|\prod_{\ell_1=1}^{p+2}|z_{j_{1,\ell_1}}|\right)|K(z)|f(r + h \circ z)dz \nonumber \\
&= o\left(\sqrt{nh_1 \cdots h_d}\sum_{1 \leq j_{1,1} \leq \dots \leq j_{1,p+2} \leq d}\prod_{\ell_1=1}^{p+2}h_{j_{1,\ell_1}}\right). 
\end{align}
For $\mathbb{B}_{n,23}$, 
\begin{align}\label{B_n23}
&\Var(\mathbb{B}_{n,23}) \nonumber \\ 
&\leq  \sum_{1 \leq j_{1,1} \leq \dots \leq j_{1,p+2} \leq d, 1 \leq j_{2,1} \leq \dots \leq j_{2,p+2} \leq d}\partial_{j_{1,1}\dots j_{1,p+2}}m(r)\partial_{j_{2,1}\dots j_{2,p+2}}m(r) \nonumber \\
&\quad \quad \times \prod_{\ell_1=1}^{p+2}h_{j_{1,\ell_1}}\prod_{\ell_2=1}^{p+2}h_{j_{2,\ell_2}} \int \left(\prod_{\ell=1}^{L}z_{j_\ell}^2\prod_{\ell_1}^{p+2}|z_{j_{1,\ell_1}}|\prod_{\ell_2=1}^{p+2}|z_{j_{2,\ell_2}}|\right)K^2(z)f(r + h \circ z)dz \nonumber  \\ 
&= O\left(\left(\sum_{1 \leq j_1 \leq \dots \leq j_{p+2}  \leq d}\prod_{\ell=1}^{p+2}h_{j_\ell}\right)^2\right).
\end{align}
For $\mathbb{B}_{n,24}$, 
\begin{align}\label{B_n24}
\mathbb{B}_{n,24} &= \sqrt{nh_1 \cdots h_d}\sum_{1 \leq j_{1,1} \leq \dots \leq j_{1,p+2} \leq d}{\partial_{j_{1,1}\dots j_{1,p+2}}m(r) \over \bm{s}_{j_{1,1} \dots j_{1,p+2}}!}\nonumber \\&\quad \times \prod_{\ell_1=1}^{p+2}h_{j_{1,\ell_1}} \int \left(\prod_{\ell=1}^{L}z_{j_\ell} \prod_{\ell_1=1}^{p+2}z_{j_{1,\ell_1}}\right) K(z)f(r + h \circ z)dz \nonumber \\
&= f(r) \sqrt{nh_1 \cdots h_d} \nonumber \\
&\times \left(\sum_{1 \leq j_{1,1} \leq \dots \leq j_{1,p+2} \leq d}{\partial_{j_{1,1}\dots j_{1,p+2}}m(r) \over \bm{s}_{j_{1,1} \dots j_{1,p+2}}!}\prod_{\ell_1=1}^{p+2}h_{j_{1,\ell_1}} \int \left(\prod_{\ell=1}^{L}z_{j_\ell} \prod_{\ell_1=1}^{p+2}z_{j_{1,\ell_1}}\right) K(z)dz\right)(1 + o(1)). 
\end{align} 
Combining (\ref{higher-order-B11})-(\ref{B_n24}), 
\begin{align*}
&B_{n,j_1\dots j_L}(\tilde{R})\\ 
& =  \sqrt{nh_1\cdots h_d}\sum_{1 \leq j_{1,1} \leq \dots \leq j_{1,p+1}\leq d}{1 \over \bm{s}_{j_{1,1}\dots j_{1,p+1}}!}\partial_{j_{1,1},\dots,j_{1,p+1}}m(r)\prod_{\ell_1=1}^{p+1}h_{j_{1,\ell_1}} \nonumber \\
& \times \left(f(r)\int  \prod_{\ell=1}^{L}z_{j_\ell}\prod_{\ell_1=1}^{p+1}z_{j_{1,\ell_1}}K(z)dz + \sum_{k=1}^{d}\partial_k f(r) h_k\int  \left(z_k\prod_{\ell=1}^{L}z_{j_\ell}\prod_{\ell_1=1}^{p+1}z_{j_{1,\ell_1}}\right)K(z)dz\right)(1 + o(1)). \\  
& \quad + \sqrt{nh_1 \cdots h_d}\\
& \times \left(f(r)\sum_{1 \leq j_{1,1} \leq \dots \leq j_{1,p+2} \leq d}{\partial_{j_{1,1}\dots j_{1,p+2}}m(r) \over \bm{s}_{j_{1,1} \dots j_{1,p+2}}!}\prod_{\ell_1=1}^{p+2}h_{j_{1,\ell_1}} \int \left(\prod_{\ell=1}^{L}z_{j_\ell} \prod_{\ell_1=1}^{p+2}z_{j_{1,\ell_1}}\right) K(z)dz\right)(1 + o(1)).
\end{align*}

\subsubsection{Higher-order bias of the local-linear estimator} \label{sec.asymptotic_higher_order}
For local-linear estimators (i.e., $d=2, p=1$), we have
\begin{align*}
b_{n,0} &= {f(r) \over 2}\sum_{j,k=1}^{2}\partial_{jk}m(r)h_jh_k \int z_kz_j K(z)dz\\
&\quad + \sum_{\ell=1}^{2}{\partial_\ell f(r) \over 2}\sum_{j,k=1}^{2}\partial_{jk}m(r)h_j h_k h_\ell \int z_jz_kz_\ell K(z)dz\\
&\quad + {f(r) \over 6}\sum_{j,k,\ell=1}^{2}\partial_{jk\ell}m(r)h_j h_k h_\ell \int z_j z_k z_\ell K(z)dz,
\end{align*}
\begin{align*}
b_{n,1} &= {f(r) \over 2}\sum_{j,k=1}^{2}\partial_{jk}m(r)h_jh_k \int z_1z_kz_j K(z)dz\\
&\quad + \sum_{\ell=1}^{2}{\partial_\ell f(r) \over 2}\sum_{j,k=1}^{2}\partial_{jk}m(r)h_j h_k h_\ell \int z_1z_jz_kz_\ell K(z)dz\\
&\quad + {f(r) \over 6}\sum_{j,k,\ell=1}^{2}\partial_{jk\ell}m(r)h_j h_k h_\ell \int z_1z_j z_k z_\ell K(z)dz,
\end{align*}
\begin{align*}
b_{n,2} &= {f(r) \over 2}\sum_{j,k=1}^{2}\partial_{jk}m(r)h_jh_k \int z_2z_kz_j K(z)dz\\
&\quad + \sum_{\ell=1}^{2}{\partial_\ell f(r) \over 2}\sum_{j,k=1}^{2}\partial_{jk}m(r)h_j h_k h_\ell \int z_2z_jz_kz_\ell K(z)dz\\
&\quad + {f(r) \over 6}\sum_{j,k,\ell=1}^{2}\partial_{jk\ell}m(r)h_j h_k h_\ell \int z_2z_j z_k z_\ell K(z)dz.
\end{align*}
When $K(z) = K_1(z_1)K_2(z_2)$ where $K_1(z_1) = (1-|z_1|)1_{\{|z_1| \leq 1\}}$ and $K_2(z_2) = 2(1-z_2)1_{\{0 \leq z_2 \leq 1\}}$, we have
\begin{align*}
b_{n,0} &=  {f(r) \over 2}\left\{h_1^2\partial_{11}m(r)\kappa_1^{(2,1)} + h_2^2 \partial_{22}m(r)\kappa_2^{(2,1)} \right\}\\
&\quad + {\partial_1 f(r) \over 2}\left(2h_1^2h_2\partial_{12}m(r)\kappa_{1,2}^{(2,1,1)}\right)\\
&\quad + {\partial_2 f(r) \over 2}\left(h_1^2h_2\partial_{11}m(r)\kappa_{1,2}^{(2,1,1)} + h_2^3\partial_{22}m(r)\kappa_2^{(3,1)}\right)\\
&\quad + {f(r) \over 6}\left(3h_1^2h_2\partial_{112}m(r)\kappa_{1,2}^{(2,1,1)} + h_2^3\partial_{222}m(r)\kappa_2^{(3,1)}\right),
\end{align*}
\begin{align*}
b_{n,1} &= {f(r) \over 2}\left(2h_1h_2\partial_{12}m(r)\kappa_{1,2}^{(2,1,1)}\right)\\
&\quad + {\partial_1 f(r) \over 2}\left(h_2^3 \partial_{11}m(r)\kappa_1^{(4,1)} + h_1^2h_2 \partial_{22}m(r)\kappa_{1,2}^{(2,2,1)}\right)\\
&\quad + {\partial_2 f(r) \over 2}\left(2h_1h_2^2 \partial_{12}m(r)\kappa_{1,2}^{(2,2,1)}\right)\\
&\quad + {f(r) \over 6}\left(h_1^3\partial_{111}m(r)\kappa_{1}^{(4,1)} + 3h_1h_2^2\partial_{122}m(r)\kappa_{1,2}^{(2,2,1)}\right),
\end{align*}
\begin{align*}
b_{n,2} &= {f(r) \over 2}\left(h_1^2\partial_{11}m(r)\kappa_{1,2}^{(2,1,1)} + h_2^2 \partial_{22}m(r)\kappa_2^{(3,1)}\right)\\
&\quad + {\partial_1 f(r) \over 2}\left(2h_1^2h_2\partial_{12}m(r)\kappa_{1,2}^{(2,2,1)} \right)\\
&\quad + {\partial_2 f(r) \over 2}\left(h_1^2h_2 \partial_{11}m(r)\kappa_{1,2}^{(2,2,1)} + h_2^3 \partial_{22}m(r)\kappa_2^{(4,1)}\right)\\
&\quad + {f(r) \over 6}\left(3h_1^2h_2\partial_{112}m(r)\kappa_{1,2}^{(2,2,1)} + h_2^3\partial_{222}m(r)\kappa_{2}^{(4,1)}\right).
\end{align*}
Therefore, 
\begin{align*}
&\text{Bias}(\hat{m}(r)) \\
&= \tilde{s}_1b_{n,0} + \tilde{s}_3 b_{n,2}\\
&= \left\{{h_1^2 \over 2}\partial_{11}m(r)(\tilde{s}_1\kappa_1^{(2,1)} + \tilde{s}_3 \kappa_{1,2}^{(2,1,1)}) + {h_2^2 \over 2}\partial_{22}m(r)(\tilde{s}_1\kappa_2^{(2,1)} + \tilde{s}_3 \kappa_{2}^{(3,1)})\right\}\\
&\quad + h_1^2h_2\left({\partial_{11}m(r) \over 2}{\partial_2 f(r) \over f(r)} + \partial_{12}m(r){\partial_1 f(r) \over f(r)} + {\partial_{112}m(r) \over 2}\right)(\tilde{s}_1\kappa_{1,2}^{(2,1,1)} + \tilde{s}_3 \kappa_{1,2}^{(2,2,1)})\\
&\quad + h_2^3\left({1 \over 2}\partial_{22}m(r){\partial_2 f(r) \over f(r)} + {1 \over 6}\partial_{222}m(r)\right)(\tilde{s}_1\kappa_2^{(3,1)} + \tilde{s}_3\kappa_2^{(4,1)}).
\end{align*}

\section{Shapes of the polynomial fits used in simulations}\label{sec.polynomialShapes}
\footnotesize

The following equations are the polynomials estimated from the real data at each evaluation points. In the estimation of coefficients, we rotate the axis so that the sign of $Y$ values determine the treatment status. Specifically, in Design 1 and Design 2 evaluate points where the boundary SABER 11 is 0, $X$ is SISBEN and $Y$ is SABER 11; in Design 3 and Design 4 evaluate points where the boundary SISBEN is 0, $X$ is SABER 11 and $Y$ is SISBEN. The $X$ values (and $Y$ values by construction) are re-centered to the evaluation point as the origin. The e notation ($k \text{e-n}$) represents $k \times 10^{-n}$.

\subsection{Design 1 at point 7}
\begin{align*}
    \mbox{ Control: } &  \\
    0.351330594& + (0.0016345305) X + (0.0001058476) X^2 + (8.255\text{e-07}) X^3 + (5.9\text{e-09}) X^4 + (1 \text{e-10}) X^5 \\ + (0.0053400898) Y &  + (2.4132 \text{e-05}) Y^2 - (1.83\text{e-08}) Y^3 - (4\text{e-10}) Y^4 - 0 Y^5 \\ + (4.50874\text{e-05}) X Y &  + (1.0092\text{e-06}) X^2 Y + (3.368\text{e-07}) X Y^2 + (2\text{e-10}) X^2 Y^2 + (8\text{e-10}) X^3 Y + (1.07\text{e-08}) X Y^3\\
    \mbox{ Treated: } & \\
    0.6585339043& + (0.000775413) X + (5.94362\text{e-05}) X^2 - (1.3635\text{e-06}) X^3 + (4.988\text{e-07}) X^4 + (1.69\text{e-08}) X^5\\ + (0.0032217053) Y &  - (6.65157\text{e-05})Y^2 + (2.97\text{e-06}) Y^3 - (3.79\text{e-08}) Y^4 + (1\text{e-10}) Y^5 \\ - (1.03557\text{e-05}) X Y &  - (4.2481\text{e-06}) X^2 Y + (3.884\text{e-07}) X Y^2 + (4.4\text{e-09}) X^2 Y^2 - (6\text{e-10}) X^3 Y - (1.027\text{e-07}) X Y^3
\end{align*}

\subsection{Design 2 at point 13}
\begin{align*}
    \mbox{ Control: } & \\ 0.36273926& - (0.0021631216) X + (5.15506\text{e-05}) X^2 + (8.953\text{e-07}) X^3 - (7.4\text{e-09}) X^4 + (1\text{e-10}) X^5 \\ + (0.0046917496) Y &  + (1.61902\text{e-05}) Y^2 - (3.67\text{e-08}) Y^3 - (4\text{e-10}) Y^4 - 0 Y^5 \\ + (1.50884\text{e-05}) X Y &  + (2.408\text{e-07}) X^2 Y + (3.25\text{e-07}) X Y^2 + (2\text{e-10}) X^2 Y^2 + (8\text{e-10}) X^3 Y + (1.07\text{e-08}) X Y^3\\
    \mbox{ Treated: } & \\
    0.7242674163& - (0.0040502435) X - (0.0004489873) X^2 + (4.78549\text{e-05}) X^3 - (1.5242\text{e-06}) X^4 + (1.69\text{e-08}) X^5\\ + (0.0024425863) Y &  - (7.33327\text{e-05}) Y^2 + (2.9837\text{e-06}) Y^3 - (3.79\text{e-08}) Y^4 + (1\text{e-10}) Y^5 \\ + (1.61465\text{e-05}) X Y &  + (3.1439\text{e-06}) X^2 Y + (1.796\text{e-07}) X Y^2 + (4.4\text{e-09}) X^2 Y^2 - (6\text{e-10}) X^3 Y - (1.027\text{e-07}) X Y^3
\end{align*}

\subsection{Design 3 at point 19}
\begin{align*}
    \mbox{ Control: } &  \\
    0.5206142027& + (0.0052087349) X + (8.183\text{e-06}) X^2 - (8.79\text{e-08}) X^3 - (4\text{e-10}) X^4 - 0 X^5 \\ - (0.0021581664) Y &  + (2.64291\text{e-05}) Y^2 + (1.5009\text{e-06}) Y^3 - (1.18\text{e-08}) Y^4 + (1\text{e-10}) Y^5 \\ + (3.3066\text{e-05}) X Y &  + (3.854\text{e-07}) X^2 Y - (1.5\text{e-09}) X Y^2 + (2\text{e-10}) X^2 Y^2 + (1.07\text{e-08}) X^3 Y + (8\text{e-10}) X Y^3\\
    \mbox{ Treated: } & \\
    0.7549214382& + (0.0025430669) X + (3.01802\text{e-05}) X^2 - (1.152\text{e-07}) X^3 - (1.75\text{e-08}) X^4 + (1\text{e-10}) X^5\\ + (0.014353943) Y &  - (0.0021086853) Y^2 + (0.0001045443) Y^3 - (2.1986\text{e-06}) Y^4 + (1.69\text{e-08}) Y^5 \\ - (4.90521\text{e-05}) X Y &  + (6.19\text{e-08}) X^2 Y + (5.8515\text{e-06}) X Y^2 + (4.4\text{e-09}) X^2 Y^2 - (1.027\text{e-07}) X^3 Y - (6\text{e-10}) X Y^3
\end{align*}

\subsection{Design 4 at point 25}
\begin{align*}
    \mbox{ Control: } &  \\
    0.7458374267& + (0.0052893523) X - (8.065\text{e-06}) X^2 - (1.737\text{e-07}) X^3 - (6\text{e-10}) X^4 - 0 X^5 \\ - (3.26995\text{e-05}) Y &  + (2.68002\text{e-05}) Y^2 + (1.9491\text{e-06}) Y^3 - (1.18\text{e-08}) Y^4 + (1\text{e-10}) Y^5 \\ + (6.94992\text{e-05}) X Y &  + (4.82\text{e-07}) X^2 Y + (1.92\text{e-08}) X Y^2 + (2\text{e-10}) X^2 Y^2 + (1.07\text{e-08}) X^3 Y + (8\text{e-10}) X Y^3\\
    \mbox{ Treated: } & \\
    0.8710000105& + (0.0015475707) X - (6.16581\text{e-05}) X^2 - (4.855\text{e-07}) X^3 + (1.31\text{e-08}) X^4 + (1\text{e-10}) X^5\\ + (0.0123605658) Y &  - (0.0018552507) Y^2 + (0.0001002323) Y^3 - (2.1986\text{e-06}) Y^4 + (1.69\text{e-08}) Y^5 \\ - (4.68808\text{e-05}) X Y &  - (1.02\text{e-08}) X^2 Y + (6.2169\text{e-06}) X Y^2 + (4.4\text{e-09}) X^2 Y^2 - (1.027\text{e-07}) X^3 Y - (6\text{e-10}) X Y^3
\end{align*}
\normalsize

\subsection{Supports for four designs}
\begin{figure}[H]
    \centering 
    \begin{minipage}{0.49\hsize}
    (a) Design 1
    \includegraphics[width=\columnwidth]{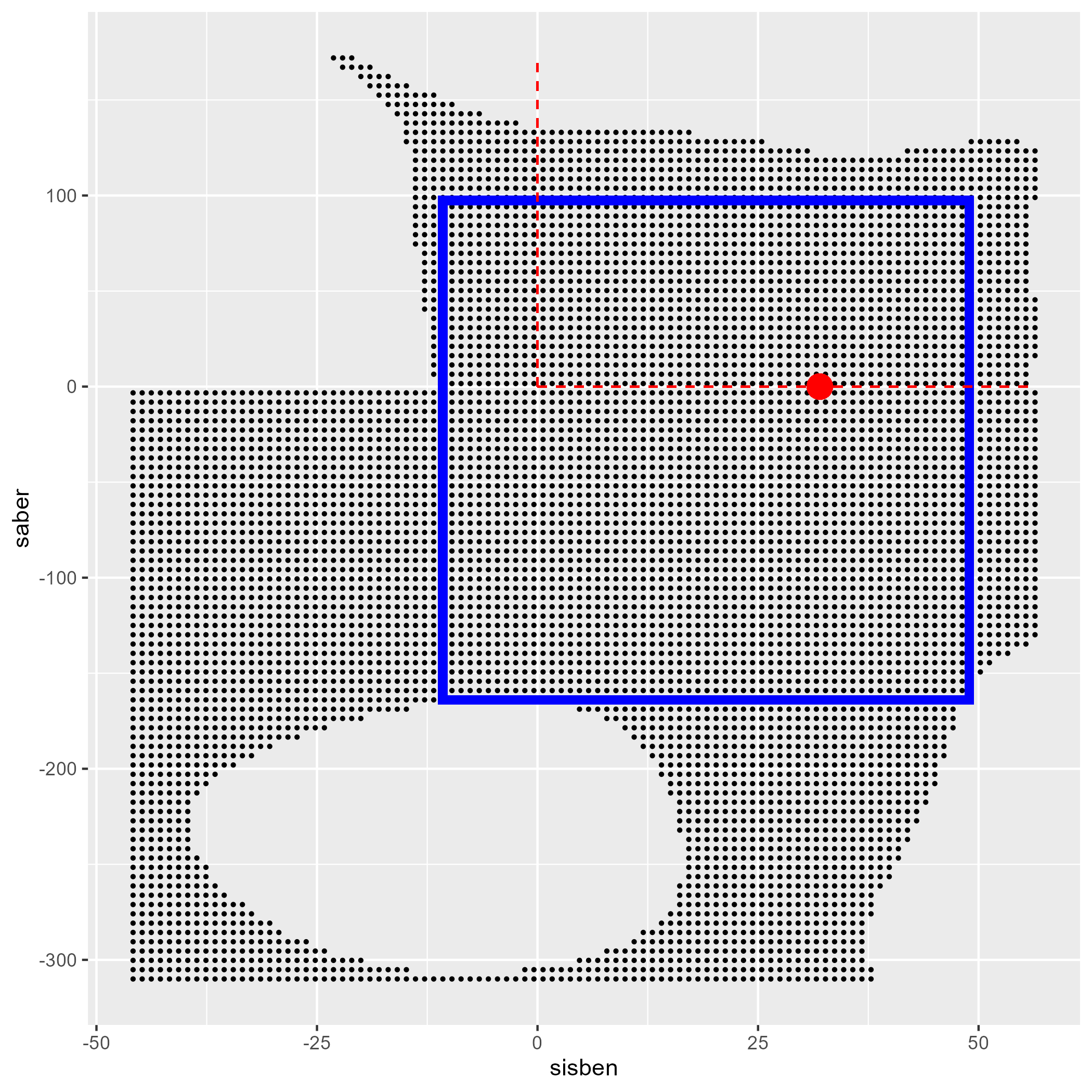}
    \centering 
    \end{minipage}    
    \begin{minipage}{0.49\hsize}
    (b) Design 2
    \includegraphics[width=\columnwidth]{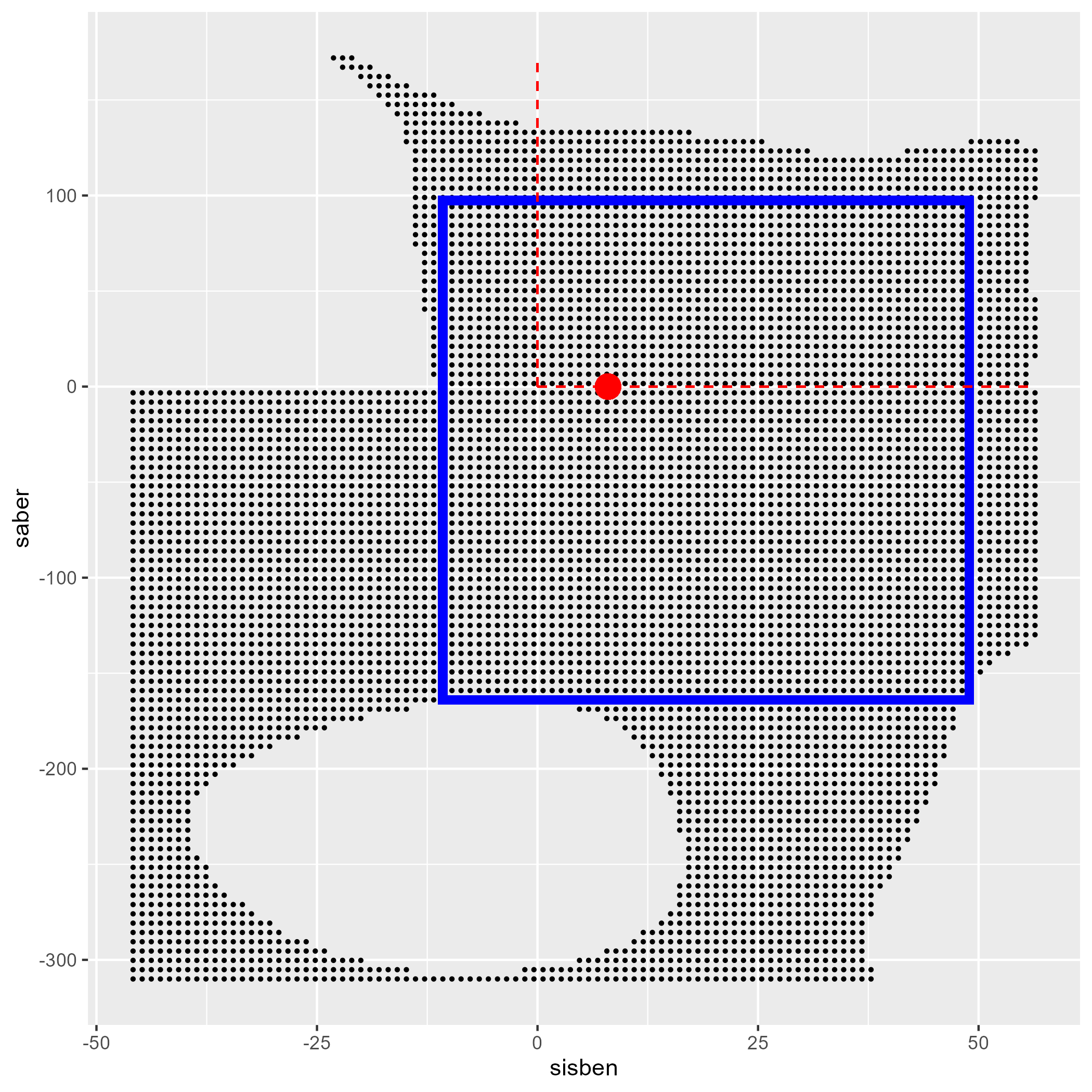}
    \centering
    \end{minipage}
    \begin{minipage}{0.49\hsize}
    (c) Design 3
    \includegraphics[width=\columnwidth]{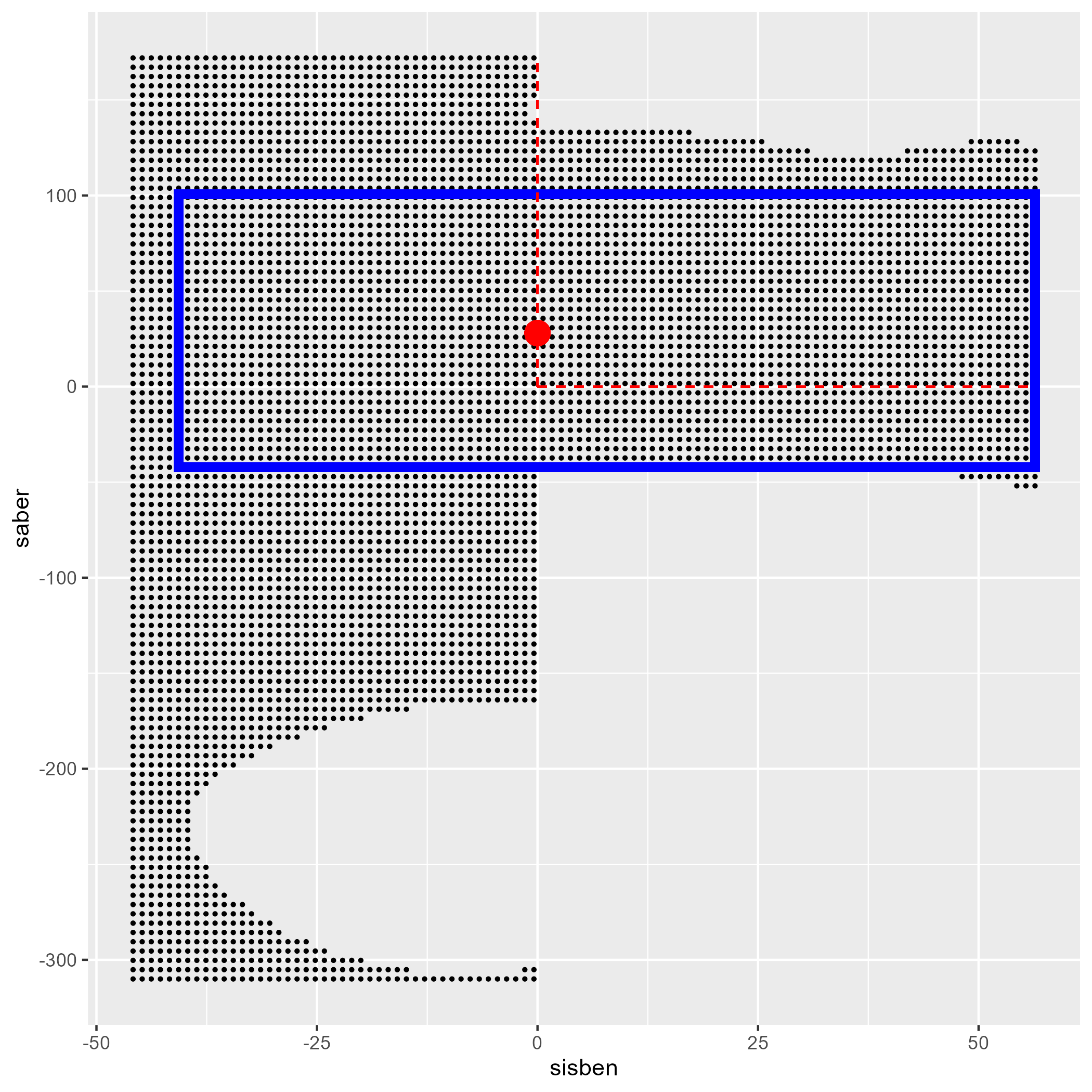}
    \centering 
    \end{minipage}
    \begin{minipage}{0.49\hsize}
    (d) Design 4
    \includegraphics[width=\columnwidth]{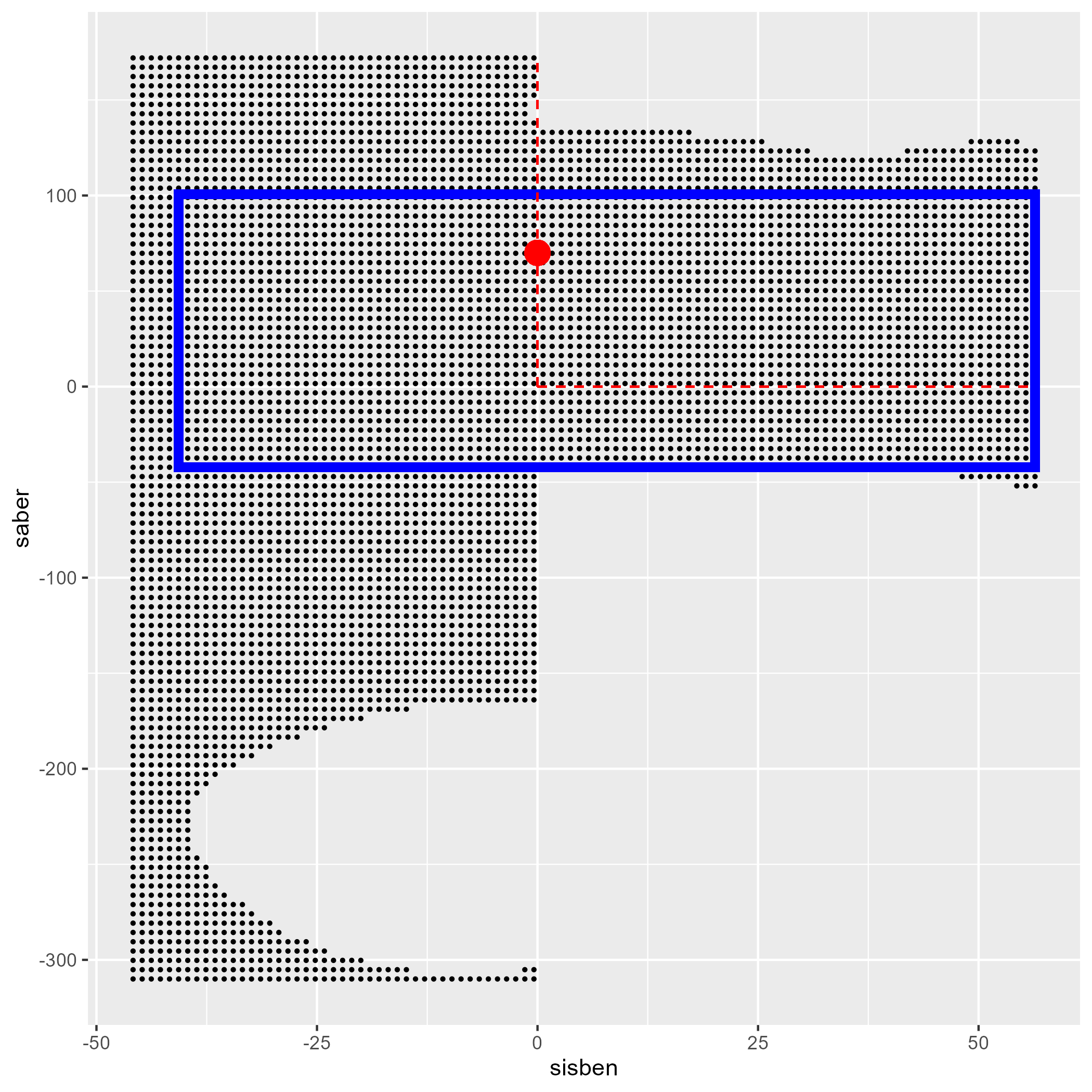}
    \centering 
    \end{minipage}
    \caption{The red circles represent each evaluation point on the boundary. Black dotted areas are points that have their global polynomial predictions from each evaluation point fall within $[0,1]$. The blue rectangles are the largest rectangle ares which falls within the black dotted areas. In the numerical simulations, observations are drawn from the blue rectangle supports.}
    \label{fig:supports}
\end{figure}
\section{Implementation details} \label{sec.implementation}
In section \ref{sec.estimator}, we propose our optimal bandwidth selection from the following formula:
\[
  \frac{h_1}{h_2} = \left(\frac{B_2(c)}{B_1(c)}\right)^{1/2}
\]
and
\[
 h_1 =  \left[\frac{(\sigma^2_+(c) + \sigma_-^2(c))}{2n} e'_1 S^{-1}\mathcal{K}S^{-1}e_1 B_1(c)^{-5/2} B_2(c)^{-1/2}\right]^{1/6}
\]
and our RD estimate prior to the bias correction is $\hat{\beta}^+_0(c) - \hat{\beta}^-_0(c)$ where these intercept terms of the local-polynomial estimates $\{\hat{\beta}^+_0(c), \hat{\beta}^-_0(c)\}$ are computed with the bandwidths specified above. Nevertheless, to compute the optimal bandwidth, we need to estimate the bias terms $B_1(c)$ and $B_2(c)$ as well as the residual variances $\{\sigma^2_+(c),  \sigma_-^2(c)\}$. We follow \citealp{Calonico.Cattaneo.Titiunik2014}, Section 5) in estimation of the residual variances at the boundary point $c$. For the bias terms, as in \cite{Calonico.Cattaneo.Titiunik2014}, we set a pair of pilot bandwidths with the local-quadratic regression. The key complication of our study is that the local-quadratic regression is also multivariate. 

The expression of the bias terms involve a pair of partial derivatives $(\partial_{11} m_+(c),\partial_{22} m_+(c))$ for the treated and $(\partial_{11} m_-(c),\partial_{22} m_-(c))$ for the control. Given a pair of pilot bandwidths $b_+$ and $b_-$ for the treated and the control, we run the local-quadratic estimation
\begin{align*}
\hat{\gamma}^+(c) = \argmin_{(\gamma_0,\dots,\gamma_5)' \in \mathbb{R}^{6}}\sum_{i=1}^{n} \left(\right. &Y_i - \gamma_0 - \gamma_1(R_{i,1} - c_1) \\
&- \gamma_{2}(R_{i,2}-c_2) - \gamma_{3}(R_{i,1}-c_2)^2 \\
&- \gamma_{4}(R_{i,1}-c_1)(R_{i,2} - c_2)\\
&- \gamma_{5}(R_{i,2}-c_2)^2)^2K_b\left(R_i - c\right)1\{R_i \in \mathcal{T}\}    
\end{align*}
and
\begin{align*}
\hat{\gamma}^-(c) = \argmin_{(\gamma_0,\dots,\gamma_5)' \in \mathbb{R}^{6}}\sum_{i=1}^{n} \left(\right. &Y_i - \gamma_0 - \gamma_1(R_{i,1} - c_1) \\
&- \gamma_{2}(R_{i,2}-c_2) - \gamma_{3}(R_{i,1}-c_2)^2 \\
&- \gamma_{4}(R_{i,1}-c_1)(R_{i,2} - c_2)\\
&- \gamma_{5}(R_{i,2}-c_2)^2)^2K_b\left(R_i - c\right)1\{R_i \in \mathcal{T}^C\}    
\end{align*}
where
$
K_{b}(R_i - c) = K\left({R_{i,1} - c_1 \over b},{R_{i,2} - c_2 \over b}\right)
$
to obtain these partial derivatives. These pilot bandwidths $(b_+,b_-)$ are chosen from minimizing the mean squared error of estimating the bias term, which involves the local cubic regression. \footnote{Furthermore, we choose the preliminary bandwidth for the local cubic regression from minimizing the mean squared error of estimating the bias term for the pilot bandwidth. This preliminary bandwidth selection involves the global 4th order polynomial regressions.}

Given the pilot bandwidths, we estimate the bias terms $B_1(c)$ and $B_2(c)$. Let $\hat{B}_1(c)$ and $\hat{B}_2(c)$ be their estimates. In the optimal bandwidth selection, we follow \cite{Imbens.Kalyanaraman2012} to regularize the bias term which appears in the denominator. Specifically, we employ their result that the inverse of bias term estimation error is approximated by $3$ times of their variance. We choose the optimal bandwidths from the first-order condition: we set
\[
 h_1 = \left[\frac{(\hat{\sigma}^2_+(c) + \hat{\sigma}_-^2(c))}{2n} e'_1 S^{-1}\mathcal{K}S^{-1}e_1 (\hat{B}_1(c)^2 + 3\hat{\V}(\hat{B}_1(c))^{-1} \left(\frac{\hat{B}_2(c)^2}{\hat{B}_1(c)^2 + 3\hat{\V}(\hat{B}_1(c))}\right)^{1/4}\right]^{1/6}
\]
and
\[
 h_2 = \left[\frac{(\hat{\sigma}^2_+(c) + \hat{\sigma}_-^2(c))}{2n} e'_1 S^{-1}\mathcal{K}S^{-1}e_1 (\hat{B}_2(c)^2 + 3\hat{\V}(\hat{B}_2(c))^{-1} \left(\frac{\hat{B}_1(c)^2}{\hat{B}_2(c)^2 + 3\hat{\V}(\hat{B}_2(c))}\right)^{1/4}\right]^{1/6}
\]
separately for each subsample of the treated and control, where $\hat{\V}(\hat{B}_1(c))$ and $\hat{\V}(\hat{B}_2(c))$ are variance estimates from the bias estimation with the pilot bandwidths.

\section{Consequence of converting two-dimensional data to one dimension.} \label{sec.density}

Let $Z_i = \|R_i\|$ and $K_1(r) = 2(1-r)1_{\{0 \leq r \leq 1\}}$. Define 
\[
\check{f}(\mathbf{0}) = {1 \over \check{n}h}\sum_{i=1}^{n}K_1(Z_i/h)1_{\{R_{i,2} \geq 0\}},\ \check{n} = \sum_{i=1}^{n}1_{\{R_{i,2} \geq 0\}}. 
\]
Note that ${\check{n} \over n} = P(R_{1,2} \geq 0) + O_p(n^{-1/2})$ and 
\begin{align*}
\check{f}(\mathbf{0}) &= \left({1 \over (\check{n}/n)} - {1 \over P(R_{1,2} \geq 0)} + {1 \over P(R_{1,2} \geq 0)}\right){1 \over nh}\sum_{i=1}^{n}K_1(Z_i/h)1_{\{R_{i,2} \geq 0\}}\\
&= {1 \over P(R_{1,2} \geq 0)}{1 \over nh}\sum_{i=1}^{n}K_1(Z_i/h)1_{\{R_{i,2} \geq 0\}} + O_p(n^{-1/2})\\
&=: {1 \over P(R_{1,2} \geq 0)}\tilde{f}(\mathbf{0}) + O_p(n^{-1/2}). 
\end{align*}
Further, 
\begin{align*}
E[\tilde{f}(\mathbf{0})] &= {2 \over h}E[K_1(Z_1/h)1_{\{R_{1,2} \geq 0\}}]\\
&= {2 \over h}\int (1-\|(r_1/h, r_2/h)\|)1\{\|(r_1/h,r_2/h)\| \leq 1\}1_{\{r_2 \geq 0\}}f(r)dr\\
&= {2 \over h}\int (1-\|(r_1/h, r_2/h)\|)1\{\|(r_1/h,r_2/h)\| \leq 1\}1_{\{r_2/h \geq 0\}}f(r)dr\\
&= 2h\int (1-\|z\|)1_{\{\|z\| \leq 1, z_2 \geq 0\}}f(hz_1, hz_2)dz\\
&= 2h\left(f(\mathbf{0})\int (1-\|z\|)1_{\{\|z\| \leq 1, z_2 \geq 0\}}dz + o(1)\right) \\
&= 2h\left(f(\mathbf{0})\int_{0}^{1}(1-r)rdr\int_{0}^{\pi}d\theta + o(1)\right)\\
&= 2h\left({\pi \over 6}f(\mathbf{0}) + o(1)\right)
\end{align*}
where we used the dominated convergence theorem for the fifth equation, and 
\begin{align*}
\Var(\tilde{f}(\mathbf{0})) & \leq {1 \over nh^2}E\left[K_1^2(Z_1/h)1_{\{R_{1,2} \geq 0\}}\right]\\
&= {4 \over n}\int (1-\|z\|)^21_{\{\|z\| \leq 0, z_2 \geq 0\}}f(hz_1, hz_2)dz\\
&= {4 \over n}\left(f(\mathbf{0})\int (1-\|z\|)^21_{\{\|z\| \leq 1, z_2 \geq 0\}}dz + o(1)\right)\\
&= {4 \over n}\left(f(\mathbf{0})\int_0^1 (1-r)^2rdr \int_0^{\pi} d\theta + o(1)\right)\\
&= {4 \over n}\left({\pi \over 12}f(\mathbf{0}) + o(1)\right)
\end{align*} 
where we used the dominated convergence theorem for the second equation. Then we have
\begin{align*}
\check{f}(\mathbf{0}) &= {\pi h \over 3P(R_{1,2} \geq 0)}f(\mathbf{0}) + o(h) + O_p(n^{-1/2}). 
\end{align*}

\section{The \textit{rdrobust} bandwidth for the \textit{distance} strategy}
In this section, we show that the rate of convergence for the \textit{rdrobust} bandwidth for the \textit{distance} strategy depends on the pilot bandwidth. Let $|\cdot|$ denote the Euclidean matrix norm, that is, $|A|^2 = \mathrm{trace}(A'A)$ for scalar, vector, or matrix $A$. We write $a_n \precsim b_n$ to mean that $a_n \leq C b_n$ for some positive constant $C$ independent of $n$. Letting $Z_i = D_i\|R_i\| - (1-D_i) \|R_i\|$, we define
\begin{align*}
    r_1(z) &= (1,z)', \ \ \ e_1 = (1,0)', \\
    \Gamma_{+}(h) &= \frac{1}{n h} \sum_{i=1}^n 1\{Z_i \geq 0\} K(Z_i/h) r_1(Z_i/h) r_1(Z_i/h)', \\
    \Gamma_{-}(h) &= \frac{1}{n h} \sum_{i=1}^n 1\{Z_i < 0\} K(Z_i/h) r_1(Z_i/h) r_1(Z_i/h)', \\
    \hat{\Psi}_{+}(h) &= \frac{1}{nh^2}  \sum_{i=1}^n 1\{Z_i \geq 0\} K(Z_i/h)^2  r_1(Z_i/h) r_1(Z_i/h)' \hat{\sigma}^2(Z_i), \\
    \hat{\Psi}_{-}(h) &= \frac{1}{nh^2}  \sum_{i=1}^n 1\{Z_i < 0\} K(Z_i/h)^2  r_1(Z_i/h) r_1(Z_i/h)' \hat{\sigma}^2(Z_i), \\
    \hat{V}_{+}(h) &= e_1' \Gamma_{+}(h)^{-1} \hat{\Psi}_{+}(h) \Gamma_{+}(h)^{-1} e_1 /n, \ \ \ \hat{V}_{-}(h) = e_1' \Gamma_{-}(h)^{-1} \hat{\Psi}_{-}(h) \Gamma_{-}(h)^{-1} e_1 /n,
\end{align*}
where $\hat{\sigma}^2(z)$ is an estimator of $\sigma^2(z) = Var(Y_i|Z_i=z)$. Then $\tilde{V}_{CCT}$ in
\[
    \hat{h}_{CCT} = C \cdot \left( \frac{\tilde{V}_{CCT}}{\tilde{B}_{CCT}}  \right)^{1/5} n^{-1/5} \label{CCT_band_distance}
\]
is written as follows:
\begin{align*}
    \tilde{V}_{CCT} &= n h_{\mathrm{initial}} \left\{ \hat{V}_{+}(h_{\mathrm{initial}}) + \hat{V}_{-}(h_{\mathrm{initial}}) \right\},
\end{align*}
where $h_{\mathrm{initial}}$ is the initial bandwidth. For simplicity of the discussion, we assume that $\sigma^2(z)$ is known and replace $\hat{\sigma}^2(Z_i)$ with $\sigma^2(Z_i)$. Hence, in what follows, we define $\Psi_{+}(h)$ and $\Psi_{-}(h)$ as follows:
\begin{align*}
    \Psi_{+}(h) &= \frac{1}{nh^2}  \sum_{i=1}^n 1\{Z_i \geq 0\} K(Z_i/h)^2  r_1(Z_i/h) r_1(Z_i/h)' \sigma^2(Z_i), \\
    \Psi_{-}(h) &= \frac{1}{nh^2}  \sum_{i=1}^n 1\{Z_i < 0\} K(Z_i/h)^2  r_1(Z_i/h) r_1(Z_i/h)' \sigma^2(Z_i).
\end{align*}

To show the convergence of $h_{\mathrm{initial}}\tilde{V}_{CCT}$, we first consider the convergence of $\Gamma_{+}(h)$ and $\Psi_{+}(h)$.

\begin{proposition}\label{prop.Gamma_Psi}
Assume that $Z_1, \ldots, Z_n$ are independent and identically distributed, $Z_i$ has the probability density function $f_Z$ with $f_Z(0)=0$, there exists $\delta > 0$ such that $f_Z(z)$ is continuously differentiable on $[0,\delta]$, $\sigma^2(z)$ is bounded and right-continuous at $z=0$, and $K$ satisfies Assumption \ref{Ass2} for $d=1$. If $n \to \infty$ and $n h^2 \to \infty$, then
\[
h^{-1}\Gamma_{+}(h) = C_{\Gamma,+} + o_p(1) \ \ \text{and} \ \  \Psi_{+}(h) = C_{\Psi,+} + o_p(1),
\]
where $C_{\Gamma,+} := f_{Z}'(0)\int_0^{\infty} z K(z) r_1(z) r_1(z)' dz$, and $C_{\Psi,+} := f_{Z}'(0) \sigma^2(0) \int_0^{\infty} zK(z)^2 r_1(z) r_1(z)'dz$.
\end{proposition}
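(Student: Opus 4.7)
The plan is to prove convergence in probability for each of $h^{-1}\Gamma_+(h)$ and $\Psi_+(h)$ entrywise by the standard two-step approach: show convergence of the mean to the claimed limit, and show the variance vanishes. The essential ingredient, which distinguishes this result from the usual local-polynomial variance calculation, is that $f_Z(0) = 0$: a Taylor expansion $f_Z(hu) = f_Z'(0)\,hu + o(h)$ near the boundary supplies an \emph{extra} factor of $h$ when we change variables, which is exactly what forces the nonstandard scalings $h^{-1}\Gamma_+(h)$ and $\Psi_+(h)$ (rather than $\Gamma_+(h)$ and $h\Psi_+(h)$).

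For the mean of $h^{-1}\Gamma_+(h)$, I would compute
\begin{align*}
E[h^{-1}\Gamma_+(h)] &= h^{-2}\int_0^\infty K(z/h)\, r_1(z/h)\,r_1(z/h)'\, f_Z(z)\,dz \\
&= h^{-1}\int_0^\infty K(u)\,r_1(u)\,r_1(u)'\, f_Z(hu)\,du,
\end{align*}
using the change of variable $u = z/h$. Since $K$ is supported on $[-C_K, C_K]$ by Assumption \ref{Ass2}(b), for $h$ small enough the integration is over a subset of $[0,\delta]$ where $f_Z$ is continuously differentiable; hence $f_Z(hu) = f_Z'(0)\,hu + o(h)$ uniformly in $u \in [0,C_K]$. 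Plugging in and using dominated convergence (with $K$ bounded and compactly supported) gives $E[h^{-1}\Gamma_+(h)] \to C_{\Gamma,+}$. The identical calculation for $\Psi_+(h)$ produces an extra factor of $\sigma^2(hu)$, which is handled by right-continuity of $\sigma^2$ at $0$, yielding $E[\Psi_+(h)] \to C_{\Psi,+}$.

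For the variance, each entry of $h^{-1}\Gamma_+(h)$ is a sample average of i.i.d.\ bounded random variables of the form $n^{-1}h^{-2}\,1\{Z_i\geq 0\}K(Z_i/h)(Z_i/h)^k$. A second-moment bound via the same change of variable yields
\begin{align*}
\mathrm{Var}\big((h^{-1}\Gamma_+(h))_{jk}\big) &\leq \frac{1}{nh^4}\int_0^\infty K(z/h)^2 (z/h)^{2(j+k-2)} f_Z(z)\,dz \\
&= \frac{1}{nh^3}\int_0^\infty K(u)^2 u^{2(j+k-2)} f_Z(hu)\,du = O\!\left(\frac{1}{nh^2}\right),
\end{align*}
since $f_Z(hu) = O(h)$ on the (bounded) support. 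The hypothesis $nh^2 \to \infty$ then gives vanishing variance, and Chebyshev's inequality upgrades the moment convergence to convergence in probability. The computation for $\mathrm{Var}(\Psi_+(h)_{jk})$ is analogous, using boundedness of $\sigma^2$ and the same extra $h$ from $f_Z(hu)$, yielding the same $O(1/(nh^2))$ rate.

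The only subtlety I anticipate is bookkeeping the correct power of $h$ in each step and verifying that the remainder terms from the Taylor expansion of $f_Z$ (and the modulus of continuity of $\sigma^2$ at $0$) integrate to $o(1)$ after the change of variable — this is a pure application of dominated convergence since $K$ is bounded and compactly supported. There is no real obstacle here; the content of the proposition is essentially the observation that the boundary zero-density result of Proposition \ref{prop.zero_density} forces a one-power shift in the natural scaling, which is exactly what drives the suboptimality argument in the main text.
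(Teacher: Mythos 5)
Your proposal is correct and follows essentially the same route as the paper's proof: compute the expectation via the change of variables $u = z/h$, use the Taylor expansion $f_Z(hu) = f_Z'(0)hu + o(h)$ (valid since $f_Z(0)=0$ and $K$ is compactly supported) to extract the extra power of $h$, bound the variance of each entry by the second moment to get $O(1/(nh^2)) = o(1)$, and conclude by Chebyshev; the treatment of $\Psi_+$ via boundedness and right-continuity of $\sigma^2$ is also identical. The only cosmetic difference is that the paper imports the standard expectation and variance formulas from Lemma S.A.1 of Calonico, Cattaneo and Titiunik (2014) rather than deriving them directly.
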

\begin{proof}
From Lemma S.A.1 in \cite{Calonico.Cattaneo.Titiunik2014}, we obtain
\begin{align*}
    E[\Gamma_{+}(h)] &= \int_0^{\infty} K(z) r_1(z) r_1(z)' f_Z(hz) dz, \\
    E\left[ \left| \Gamma_{+}(h) - E[\Gamma_{+}(h)] \right|^2 \right] &\precsim \frac{1}{nh} \int_0^{\infty} K(z)^2 |r_1(z)|^4 f_Z(hz) dz.
\end{align*}
Because $f_Z(z)$ is continuously differentiable and $f_Z(0)=0$, we have
\begin{align*}
    E[h^{-1}\Gamma_{+}(h)] &= h^{-1} \int_0^{\infty} K(z) r_1(z) r_1(z)' \left\{ f_{Z,+}'(0) h z + o(h) \right\}  dz \\
    &= f_{Z,+}'(0)\int_0^{\infty} z K(z) r_1(z) r_1(z)' dz + o(1), \\
    E\left[ h^{-2} \left| \Gamma_{+}(h) - E[\Gamma_{+}(h)] \right|^2 \right] &\precsim \frac{1}{nh^3} \int_0^{\infty} K(z)^2 |r_1(z)|^4 \left\{ f_{Z,+}'(0) h z + o(h) \right\} dz \\
    &= \frac{1}{nh^2} f_{Z,+}'(0) \int_0^{\infty} z K(z)^2 |r_1(z)|^4  dz + o\left( \frac{1}{nh^2} \right) = o(1).
\end{align*}
Hence, we obtain $h^{-1}\Gamma_{+}(h) = C_{\Gamma,+} + o_p(1)$. Similar to $\Gamma_{+}(h)$, $\Psi_{+}(h)$ satisfies
\begin{align*}
    E[\Psi_{+}(h) ] &= h^{-1} \int_0^{\infty} K(z)^2 r_1(z) r_1(z)' \sigma^2(hz) f_Z(hz) dz, \\
    &= f_{Z}'(0) \int_0^{\infty} z K(z)^2 r_1(z) r_1(z)' \sigma^2(hz) dz + o(1),  \\
    E\left[ \left| \Psi_{+}(h) - E[\Psi_{+}(h) ] \right|^2 \right] &\precsim \frac{1}{nh^3} \int_0^{\infty} K(z)^4 |r_1(z)|^2  \sigma^4(hz) f_Z(hz) dz, \\
    &= \frac{1}{nh^2} f_{Z}'(0) \int_0^{\infty} K(z)^4 |r_1(z)|^2  \sigma^4(hz) dz + o\left( \frac{1}{nh^2} \right) = o(1),
\end{align*}
which implies $\Psi_{+}(h) = C_{\Psi,+} + o_p(1)$.
\end{proof}

Proposition \ref{prop.Gamma_Psi} implies that if $C_{\Gamma,+}$ and $C_{\Psi,+}$ are nonsingular and $h_{\mathrm{initial}}$ satisfies $nh_{\mathrm{initial}} \to \infty$, then we obtain
\begin{align*}
    n h_{\mathrm{initial}}^2 \hat{V}_{+}(h_{\mathrm{initial}}) &= e_1' \left\{ h_{\mathrm{initial}}^{-1}\Gamma_{+}(h_{\mathrm{initial}}) \right\}^{-1} \Psi_{+}(h_{\mathrm{initial}}) \left\{ h_{\mathrm{initial}}^{-1}\Gamma_{+}(h_{\mathrm{initial}}) \right\}^{-1} e_1 \\
    &= e_1' C_{\Gamma,+}^{-1} C_{\Psi,+} C_{\Gamma,+}^{-1} e_1 + o_p(1),
\end{align*}
where $e_1' C_{\Gamma,+}^{-1} C_{\Psi,+} C_{\Gamma,+}^{-1} e_1  > 0$. A similar result holds for $n h_{\mathrm{initial}}^2 \hat{V}_{-}(h_{\mathrm{initial}})$ as well. As a result, $h_{\mathrm{initial}} \tilde{V}_{CCT}$ converges to a positive constant. If $\tilde{B}_{CCT}$ also converges to a positive constant, we obtain
\begin{align*}
    \hat{h}_{CCT} &= C \cdot \left( \frac{h_{\mathrm{initial}} \tilde{V}_{CCT}}{\tilde{B}_{CCT}} \right)^{1/5} h_{\mathrm{initial}}^{-1/5} n^{-1/5} = O_p\left( h_{\mathrm{initial}}^{-1/5} n^{-1/5} \right).
\end{align*}
Hence, the rate of convergence for $\hat{h}_{CCT}$ depends on the initial bandwidth. As with $\hat{h}_{IK}$, if $h_{\mathrm{initial}} = O_p(n^{-1/5})$, then the the convergence rate of $\hat{h}_{CCT}$ is $n^{-4/25}$, which is suboptimal for the two-dimensional problem.

\section{Additional numerical results tables and figures}
\label{sec.standardized_plots}

\begin{table}[H]
\centering
\addtocounter{table}{-4}
\caption{Summary For Bandwidths And Effective Sample Sizes For The Base Designs.}
\begingroup
\fontsize{12.0pt}{14.4pt}\selectfont
\begin{longtable}{lrrrr}
\toprule
Estimator & pilot & h1 & h2 & eff. sample \\ 
\midrule\addlinespace[2.5pt]
\multicolumn{5}{l}{1} \\[2.5pt] 
\midrule\addlinespace[2.5pt]
rdrobust & 33.1 & 16.1 & - & 125.0 \\ 
2D local poly - common bw & 54.1 & 24.0 & - & 474.2 \\ 
2D local poly - diff bw & 54.1 & 15.7 & 64.6 & 188.1 \\ 
\midrule\addlinespace[2.5pt]
\multicolumn{5}{l}{2} \\[2.5pt] 
\midrule\addlinespace[2.5pt]
rdrobust & 36.4 & 16.7 & - & 481.2 \\ 
2D local poly - common bw & 52.0 & 18.7 & - & 664.0 \\ 
2D local poly - diff bw & 52.0 & 11.1 & 62.8 & 293.3 \\ 
\midrule\addlinespace[2.5pt]
\multicolumn{5}{l}{3} \\[2.5pt] 
\midrule\addlinespace[2.5pt]
rdrobust & 35.9 & 20.3 & - & 616.1 \\ 
2D local poly - common bw & 26.1 & 22.4 & - & 972.7 \\ 
2D local poly - diff bw & 26.1 & 27.1 & 18.0 & 1,418.2 \\ 
\midrule\addlinespace[2.5pt]
\multicolumn{5}{l}{4} \\[2.5pt] 
\midrule\addlinespace[2.5pt]
rdrobust & 55.9 & 32.6 & - & 319.7 \\ 
2D local poly - common bw & 31.6 & 29.9 & - & 395.5 \\ 
2D local poly - diff bw & 31.6 & 32.9 & 25.3 & 581.9 \\ 
\bottomrule
\end{longtable}
\endgroup

\label{tab:MC_result_bands}
\vspace{0.5cm}
\begin{minipage}{0.9\hsize}\footnotesize
 \textit{Notes:} Results are from $10,000$ replication draws of $5,000$ observation samples. pilot represents the pilot bandwidth, $h_1$ is the bandwidth for the axis along with the boundary, and $h_2$ is the bandwidth for the axis orthogonal to the boundary if presented. eff. sample is the effective sample size.
\end{minipage}
\end{table}

\begin{table}[H]
\centering
\addtocounter{table}{-1}
\caption{Simulation Results For Linear Probability Models.}
\begingroup
\fontsize{12.0pt}{14.4pt}\selectfont
\begin{longtable}{lrrrr}
\toprule
Estimator & length & bias & coverage & rmse \\ 
\midrule\addlinespace[2pt]
\multicolumn{5}{l}{1} \\[2pt] 
\midrule\addlinespace[2pt]
rdrobust & 0.786 & 0.007 & 0.935 & 0.248 \\ 
2D local poly - common bw & 0.473 & 0.014 & 0.969 & 0.106 \\ 
2D local poly - diff bw & 0.537 & 0.016 & 0.974 & 0.115 \\ 
\midrule\addlinespace[2pt]
\multicolumn{5}{l}{2} \\[2pt] 
\midrule\addlinespace[2pt]
rdrobust & 0.431 & 0.020 & 0.935 & 0.128 \\ 
2D local poly - common bw & 0.344 & -0.012 & 0.983 & 0.077 \\ 
2D local poly - diff bw & 0.338 & -0.015 & 0.986 & 0.071 \\ 
\midrule\addlinespace[2pt]
\multicolumn{5}{l}{3} \\[2pt] 
\midrule\addlinespace[2pt]
rdrobust & 0.574 & 0.042 & 0.929 & 0.171 \\ 
2D local poly - common bw & 0.408 & 0.032 & 0.989 & 0.083 \\ 
2D local poly - diff bw & 0.410 & 0.031 & 0.989 & 0.083 \\ 
\midrule\addlinespace[2pt]
\multicolumn{5}{l}{4} \\[2pt] 
\midrule\addlinespace[2pt]
rdrobust & 0.865 & 0.016 & 0.926 & 0.258 \\ 
2D local poly - common bw & 0.821 & 0.041 & 0.988 & 0.177 \\ 
2D local poly - diff bw & 0.840 & 0.037 & 0.979 & 0.185 \\ 
\bottomrule
\end{longtable}
\endgroup

\label{tab:MC_result_LPM}
\vspace{0.5cm}
\begin{minipage}{0.9\hsize}\footnotesize
 \textit{Notes:} Results are from $10,000$ replication draws of $5,000$ observation samples. \textit{rdrobust} is the estimator with the Euclidean distance from the boundary point as the running variable using \textit{rdrobust}; \textit{2D local poly} refers to our preferred different bandwidth estimator \textit{diff bw} and with imposing common bandwidth \textit{common bw}. All the implementations are in \textit{R}. \textit{length} and \textit{coverage} are of generated confidence interval length and coverage rate.
\end{minipage}
\end{table}

\begin{table}[H]
\centering
\addtocounter{table}{-1}
\caption{Summary For Bandwidths And Effective Sample Sizes For LPM.}
\begingroup
\fontsize{12.0pt}{14.4pt}\selectfont
\begin{longtable}{lrrrr}
\toprule
Estimator & pilot & h1 & h2 & eff. sample \\ 
\midrule\addlinespace[2.5pt]
\multicolumn{5}{l}{1} \\[2.5pt] 
\midrule\addlinespace[2.5pt]
rdrobust & 56.0 & 30.3 & - & 635.0 \\ 
2D local poly - common bw & 56.3 & 41.7 & - & 1,453.5 \\ 
2D local poly - diff bw & 56.3 & 32.6 & 56.9 & 946.1 \\ 
\midrule\addlinespace[2.5pt]
\multicolumn{5}{l}{2} \\[2.5pt] 
\midrule\addlinespace[2.5pt]
rdrobust & 52.4 & 28.4 & - & 992.9 \\ 
2D local poly - common bw & 54.7 & 39.2 & - & 1,493.3 \\ 
2D local poly - diff bw & 54.7 & 30.8 & 53.2 & 1,155.8 \\ 
\midrule\addlinespace[2.5pt]
\multicolumn{5}{l}{3} \\[2.5pt] 
\midrule\addlinespace[2.5pt]
rdrobust & 39.3 & 22.8 & - & 783.5 \\ 
2D local poly - common bw & 37.0 & 31.7 & - & 1,888.5 \\ 
2D local poly - diff bw & 37.0 & 34.2 & 27.1 & 2,165.8 \\ 
\midrule\addlinespace[2.5pt]
\multicolumn{5}{l}{4} \\[2.5pt] 
\midrule\addlinespace[2.5pt]
rdrobust & 54.7 & 32.3 & - & 328.0 \\ 
2D local poly - common bw & 39.0 & 36.0 & - & 644.3 \\ 
2D local poly - diff bw & 39.0 & 37.5 & 31.3 & 773.9 \\ 
\bottomrule
\end{longtable}
\endgroup

\label{tab:MC_result_LPM_bands}
\vspace{0.5cm}
\begin{minipage}{0.9\hsize}\footnotesize
 \textit{Notes:} Results are from $10,000$ replication draws of $5,000$ observation samples. pilot represents the pilot bandwidth, $h_1$ is the bandwidth for the axis along with the boundary, and $h_2$ is the bandwidth for the axis orthogonal to the boundary if presented. eff. sample is the effective sample size.
\end{minipage}
\end{table}

\begin{figure}[H]
    \centering 
    \begin{minipage}{0.49\hsize}
    (a) Black Percentage
    \includegraphics[width=\columnwidth]{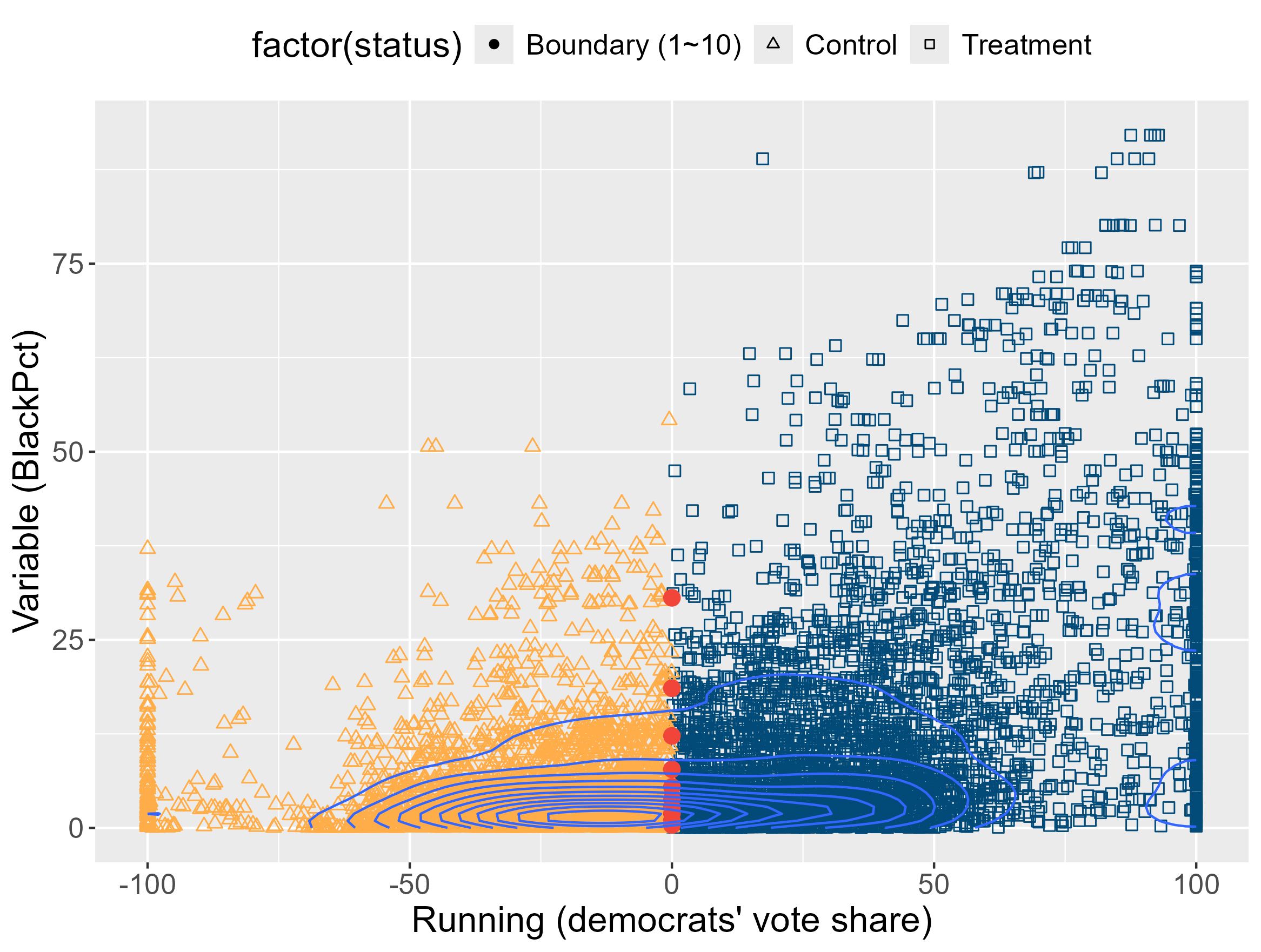}
    \centering 
    \end{minipage}    
    \begin{minipage}{0.49\hsize}
    (b) Foreign Born Percentage
    \includegraphics[width=\columnwidth]{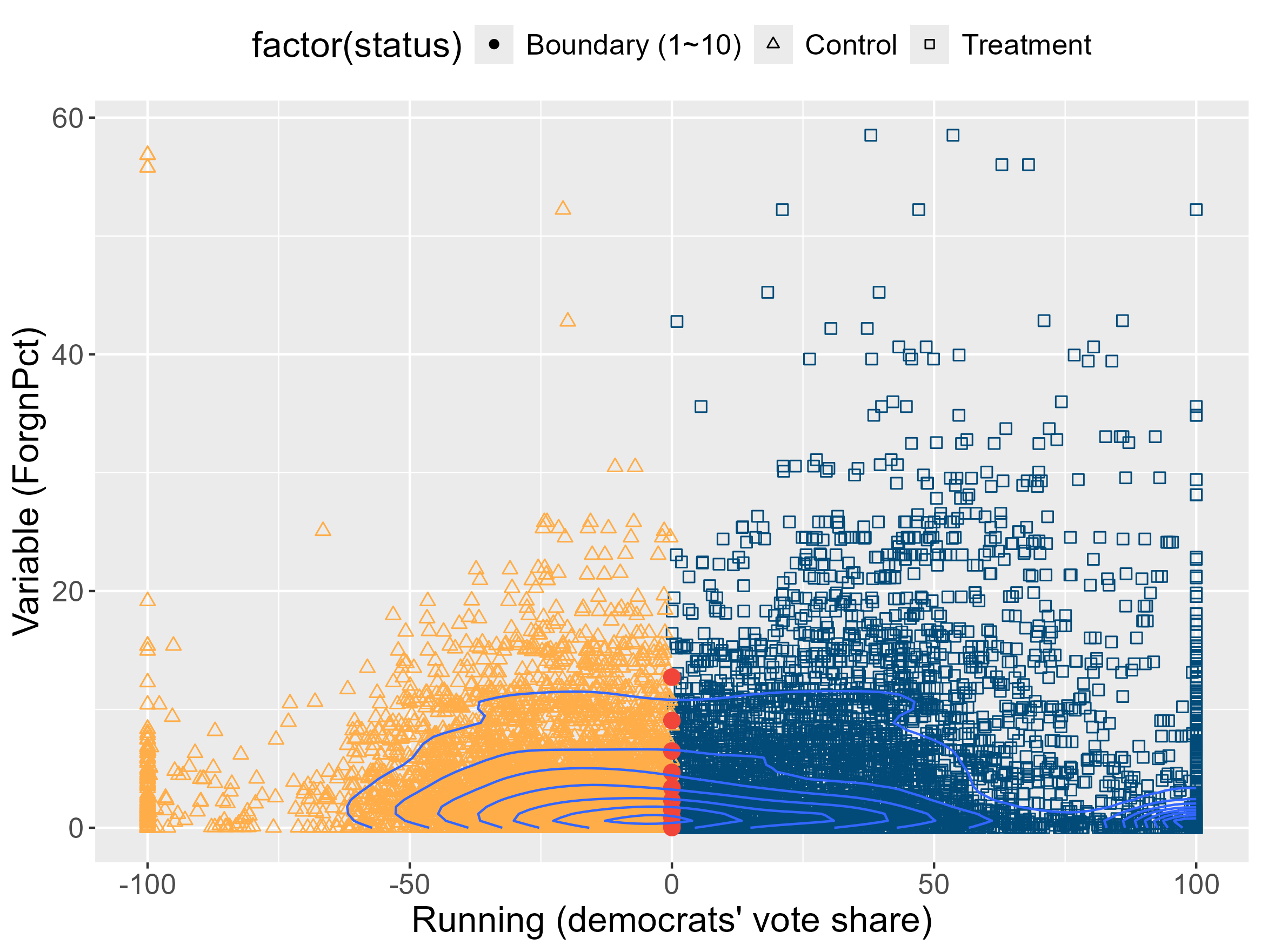}
    \centering
    \end{minipage}
    \begin{minipage}{0.49\hsize}
    (c) Government Worker Percentage
    \includegraphics[width=\columnwidth]{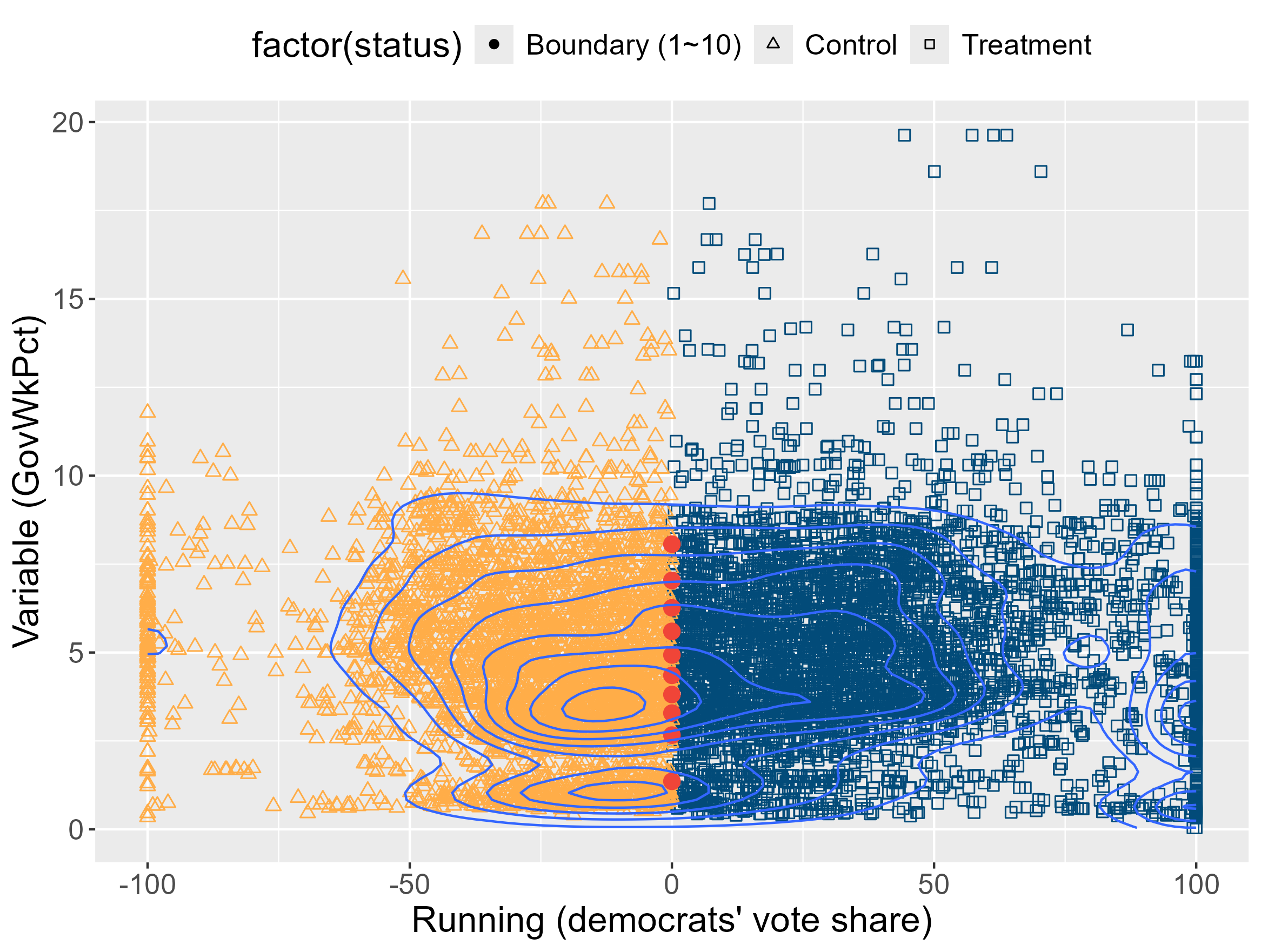}
    \centering 
    \end{minipage}
    \begin{minipage}{0.49\hsize}
    (d) Urban Percentage
    \includegraphics[width=\columnwidth]{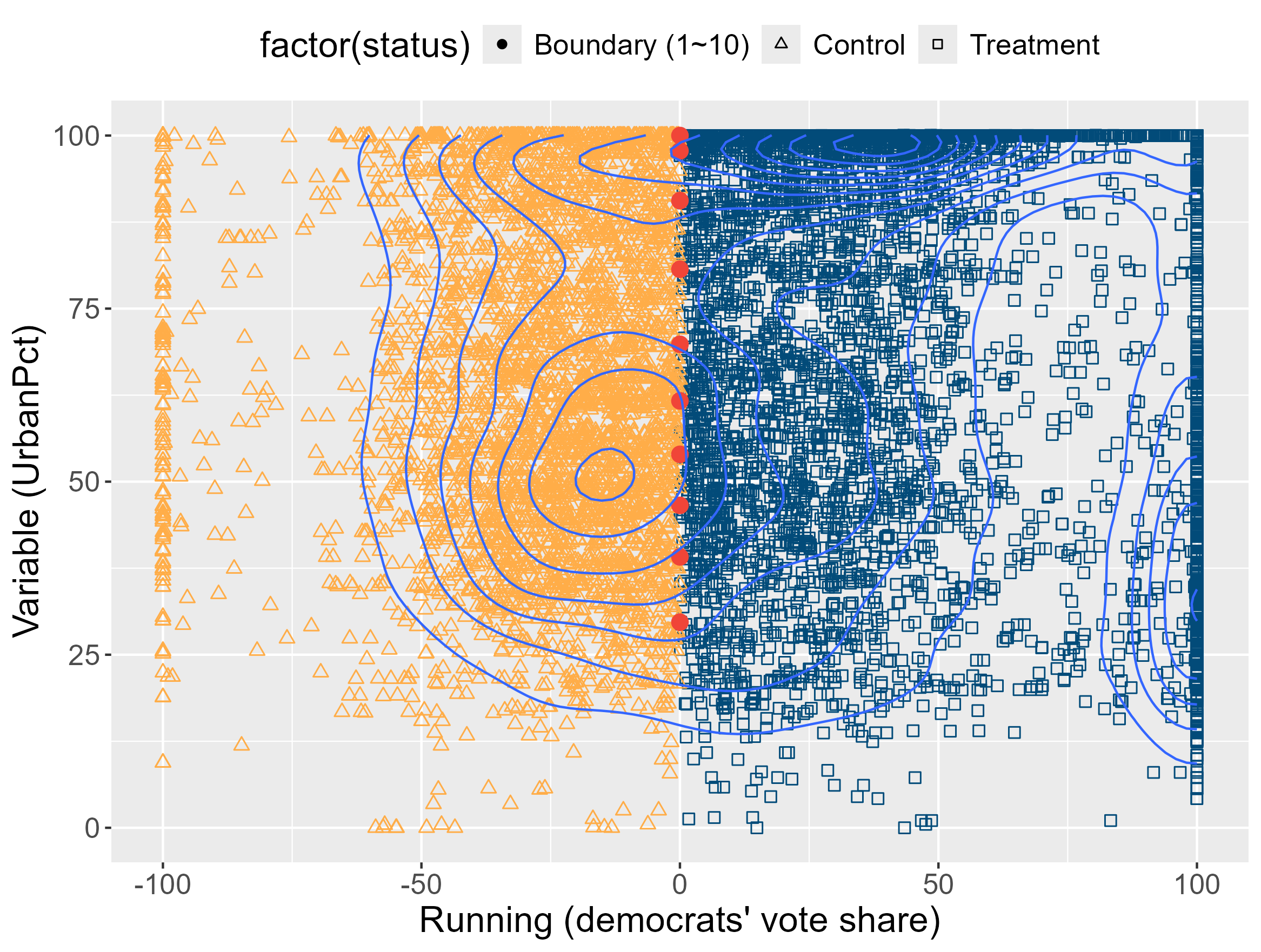}
    \centering 
    \end{minipage}
    \caption{Scatter plots of voting district covariates across different values of the running variable with ten evaluation points over the policy \textit{boundary} for each covariate. (Source: our calculation using \citealp{caugheyDataCitation})}
    \label{fig:shapes_lee_data}
\end{figure}

\begin{figure}[H]
    \centering
    \subfigure[SABER = 0, rdrobust with and without standardizing]{
        \includegraphics[width=0.47\hsize]{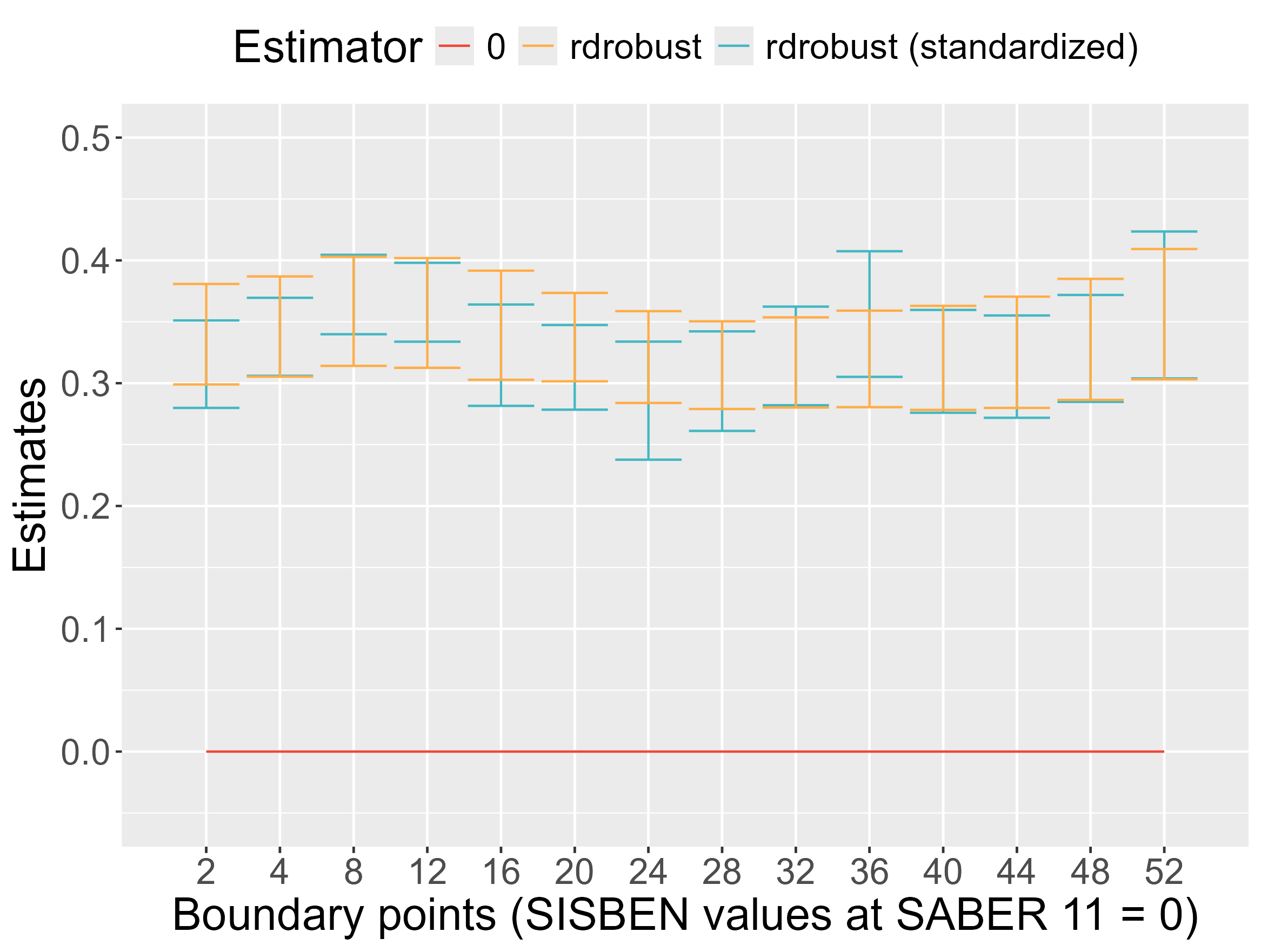}
    }
    \subfigure[SABER = 0, rd2dim with and without standardizing]{
        \includegraphics[width=0.47\hsize]{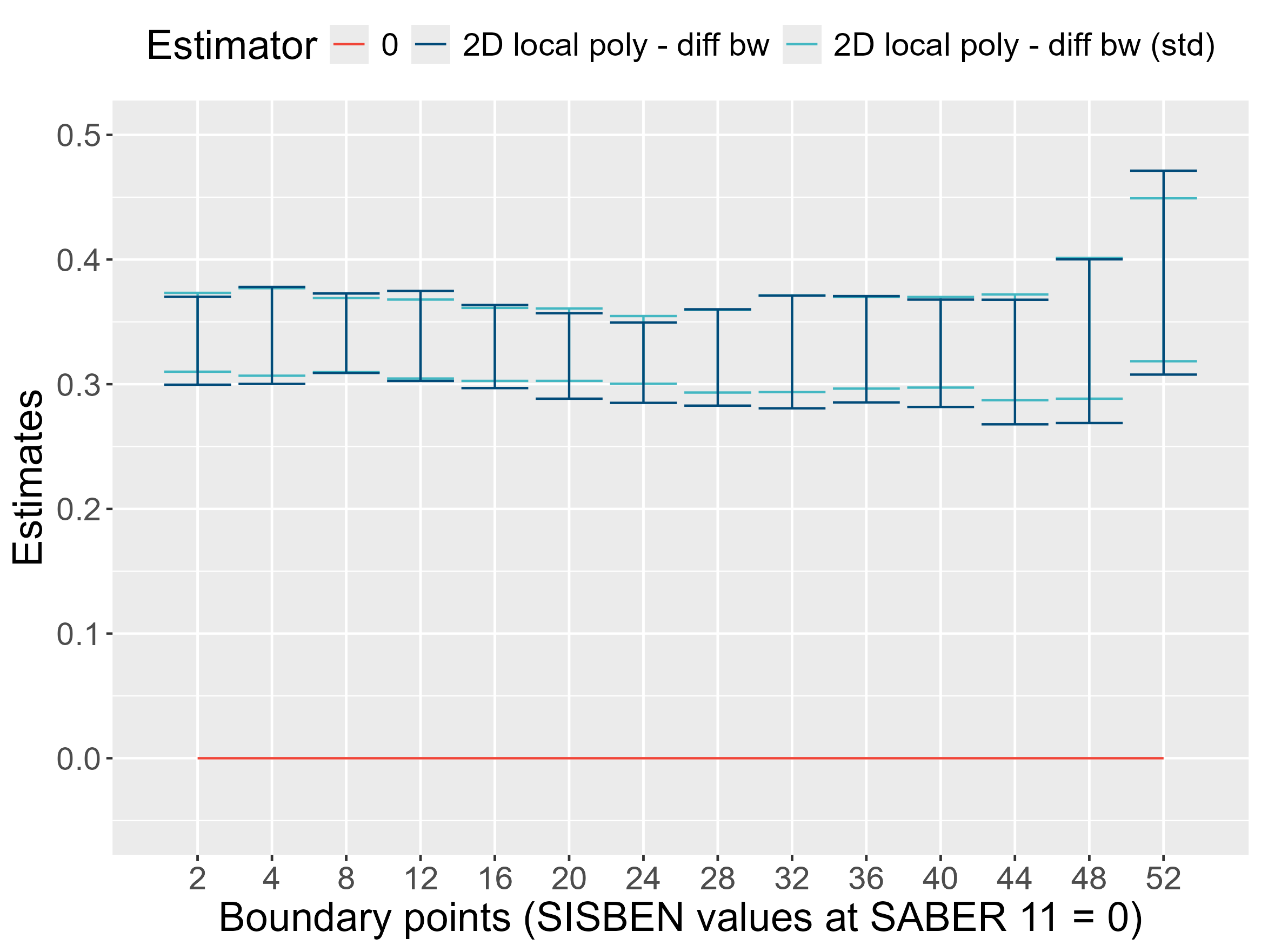}
    }
    \subfigure[SISBEN = 0, rdrobust with and without standardizing]{
        \includegraphics[width=0.47\hsize]{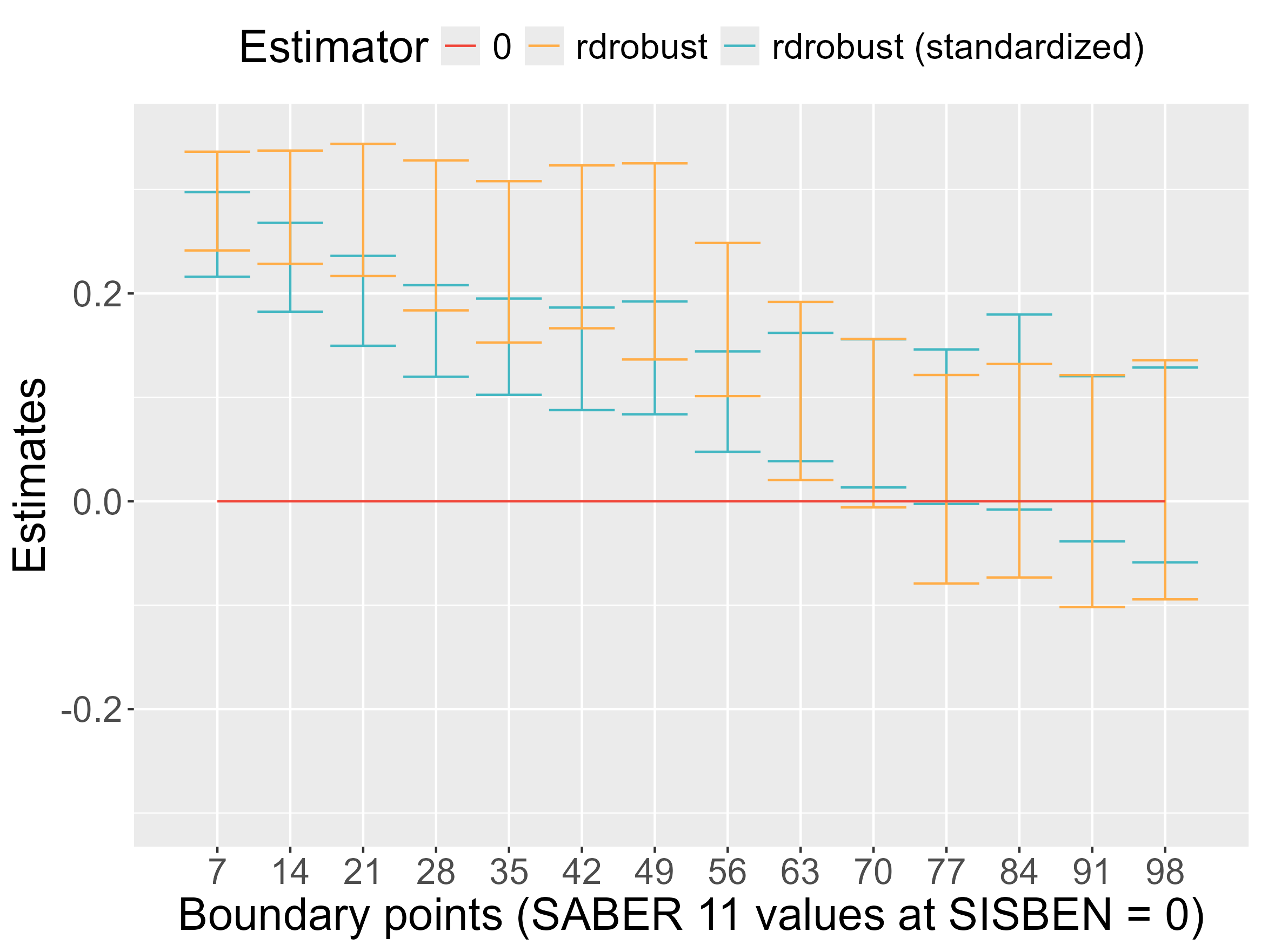}
    }
    \subfigure[SISBEN = 0, rd2dim with and without standardizing]{
        \includegraphics[width=0.47\hsize]{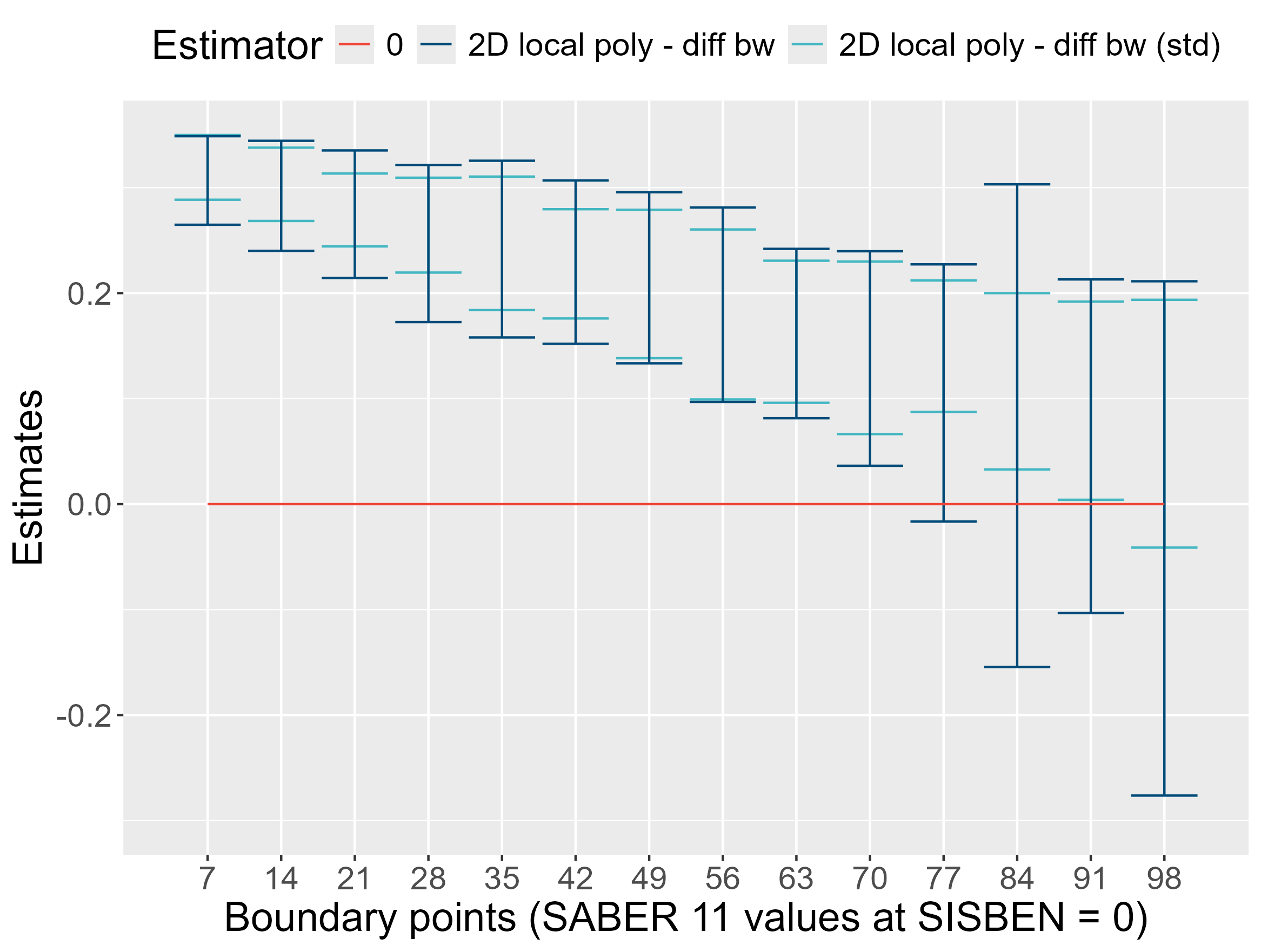}
    }
    \caption{Estimation results over the 28 boundary points comparing two \textit{rdrobust} estimates with and without standardizing scales by their standard deviations for each two axes (Panel (a) and (c)) and our estimator (Panel (b) and (d)).}
    \label{fig:empirical_results_scaled_distance_standardized}
\end{figure}

\addtocounter{table}{-1} 
\begin{table}[H]
\centering
\caption{Simulation Results For All 30 Points.}
\label{tab:MC_result_all_30_points}
\vspace{0.5cm}
\begingroup
\fontsize{11.0pt}{13.4pt}\selectfont
\begin{longtable}{lrrrr}
\toprule
Estimator & length & bias & coverage & rmse \\ 
\midrule\addlinespace[2pt]
\multicolumn{5}{l}{1} \\[2pt] 
\midrule\addlinespace[2pt]
1. rdrobust & 0.243 & -0.060 & 0.742 & 0.095 \\ 
2. 2D local poly - common bw & 0.200 & -0.141 & 0.158 & 0.146 \\ 
3. 2D local poly - diff bw & 0.218 & -0.097 & 0.601 & 0.105 \\ 
\midrule\addlinespace[2pt]
\multicolumn{5}{l}{2} \\[2pt] 
\midrule\addlinespace[2pt]
1. rdrobust & 0.196 & 0.020 & 0.929 & 0.058 \\ 
2. 2D local poly - common bw & 0.163 & -0.027 & 0.959 & 0.041 \\ 
3. 2D local poly - diff bw & 0.178 & -0.005 & 0.989 & 0.032 \\ 
\midrule\addlinespace[2pt]
\multicolumn{5}{l}{3} \\[2pt] 
\midrule\addlinespace[2pt]
1. rdrobust & 0.170 & 0.031 & 0.898 & 0.055 \\ 
2. 2D local poly - common bw & 0.142 & 0.013 & 0.979 & 0.030 \\ 
3. 2D local poly - diff bw & 0.158 & 0.024 & 0.967 & 0.036 \\ 
\midrule\addlinespace[2pt]
\multicolumn{5}{l}{4} \\[2pt] 
\midrule\addlinespace[2pt]
1. rdrobust & 0.153 & 0.018 & 0.951 & 0.042 \\ 
2. 2D local poly - common bw & 0.132 & 0.020 & 0.952 & 0.032 \\ 
3. 2D local poly - diff bw & 0.148 & 0.026 & 0.953 & 0.036 \\ 
\bottomrule
\end{longtable}
\endgroup
\end{table}
\begin{table}[H]
\begingroup
\begin{longtable}{lrrrr}
\toprule
Estimator & length & bias & coverage & rmse \\ 
\midrule\addlinespace[2pt]
\multicolumn{5}{l}{5} \\[2pt] 
\midrule\addlinespace[2pt]
1. rdrobust & 0.143 & 0.001 & 0.963 & 0.035 \\ 
2. 2D local poly - common bw & 0.125 & 0.013 & 0.974 & 0.027 \\ 
3. 2D local poly - diff bw & 0.141 & 0.016 & 0.973 & 0.029 \\ 
\midrule\addlinespace[2pt]
\multicolumn{5}{l}{6} \\[2pt] 
\midrule\addlinespace[2pt]
1. rdrobust & 0.149 & -0.009 & 0.940 & 0.040 \\ 
2. 2D local poly - common bw & 0.119 & 0.006 & 0.986 & 0.023 \\ 
3. 2D local poly - diff bw & 0.136 & 0.006 & 0.985 & 0.025 \\ 
\midrule\addlinespace[2pt]
\multicolumn{5}{l}{7} \\[2pt] 
\midrule\addlinespace[2pt]
1. rdrobust & 0.158 & -0.004 & 0.929 & 0.044 \\ 
2. 2D local poly - common bw & 0.115 & 0.000 & 0.991 & 0.022 \\ 
3. 2D local poly - diff bw & 0.132 & 0.000 & 0.988 & 0.024 \\ 
\midrule\addlinespace[2pt]
\multicolumn{5}{l}{8} \\[2pt] 
\midrule\addlinespace[2pt]
1. rdrobust & 0.163 & 0.007 & 0.945 & 0.044 \\ 
2. 2D local poly - common bw & 0.112 & -0.002 & 0.993 & 0.021 \\ 
3. 2D local poly - diff bw & 0.129 & 0.001 & 0.990 & 0.023 \\ 
\bottomrule
\end{longtable}
\endgroup
\end{table}
\begin{table}[H]
\begingroup
\begin{longtable}{lrrrr}
\toprule
Estimator & length & bias & coverage & rmse \\ 
\midrule\addlinespace[2pt]
\multicolumn{5}{l}{9} \\[2pt] 
\midrule\addlinespace[2pt]
1. rdrobust & 0.158 & 0.012 & 0.941 & 0.043 \\ 
2. 2D local poly - common bw & 0.111 & -0.003 & 0.993 & 0.021 \\ 
3. 2D local poly - diff bw & 0.127 & 0.002 & 0.990 & 0.023 \\ 
\midrule\addlinespace[2pt]
\multicolumn{5}{l}{10} \\[2pt] 
\midrule\addlinespace[2pt]
1. rdrobust & 0.158 & 0.015 & 0.949 & 0.043 \\ 
2. 2D local poly - common bw & 0.111 & -0.007 & 0.989 & 0.022 \\ 
3. 2D local poly - diff bw & 0.125 & 0.001 & 0.990 & 0.023 \\ 
\midrule\addlinespace[2pt]
\multicolumn{5}{l}{11} \\[2pt] 
\midrule\addlinespace[2pt]
1. rdrobust & 0.168 & 0.014 & 0.956 & 0.044 \\ 
2. 2D local poly - common bw & 0.113 & -0.014 & 0.981 & 0.025 \\ 
3. 2D local poly - diff bw & 0.126 & -0.005 & 0.989 & 0.024 \\ 
\midrule\addlinespace[2pt]
\multicolumn{5}{l}{12} \\[2pt] 
\midrule\addlinespace[2pt]
1. rdrobust & 0.172 & 0.008 & 0.955 & 0.045 \\ 
2. 2D local poly - common bw & 0.116 & -0.024 & 0.945 & 0.033 \\ 
3. 2D local poly - diff bw & 0.127 & -0.014 & 0.969 & 0.029 \\ 
\bottomrule
\end{longtable}
\endgroup
\end{table}
\begin{table}[H]
\begingroup
\begin{longtable}{lrrrr}
\toprule
Estimator & length & bias & coverage & rmse \\ 
\midrule\addlinespace[2pt]
\multicolumn{5}{l}{13} \\[2pt] 
\midrule\addlinespace[2pt]
1. rdrobust & 0.171 & 0.003 & 0.950 & 0.045 \\ 
2. 2D local poly - common bw & 0.121 & -0.030 & 0.900 & 0.039 \\ 
3. 2D local poly - diff bw & 0.132 & -0.021 & 0.940 & 0.034 \\ 
\midrule\addlinespace[2pt]
\multicolumn{5}{l}{14} \\[2pt] 
\midrule\addlinespace[2pt]
1. rdrobust & 0.180 & 0.007 & 0.948 & 0.049 \\ 
2. 2D local poly - common bw & 0.129 & -0.021 & 0.952 & 0.034 \\ 
3. 2D local poly - diff bw & 0.141 & -0.013 & 0.961 & 0.032 \\ 
\midrule\addlinespace[2pt]
\multicolumn{5}{l}{15} \\[2pt] 
\midrule\addlinespace[2pt]
1. rdrobust & 0.194 & 0.020 & 0.925 & 0.057 \\ 
2. 2D local poly - common bw & 0.133 & -0.003 & 0.981 & 0.028 \\ 
3. 2D local poly - diff bw & 0.147 & 0.003 & 0.972 & 0.031 \\ 
\midrule\addlinespace[2pt]
\multicolumn{5}{l}{16} \\[2pt] 
\midrule\addlinespace[2pt]
1. rdrobust & 0.155 & 0.056 & 0.757 & 0.067 \\ 
2. 2D local poly - common bw & 0.191 & 0.019 & 0.964 & 0.045 \\ 
3. 2D local poly - diff bw & 0.197 & 0.017 & 0.970 & 0.046 \\ 
\bottomrule
\end{longtable}
\endgroup
\end{table}
\begin{table}[H]
\begingroup
\begin{longtable}{lrrrr}
\toprule
Estimator & length & bias & coverage & rmse \\ 
\midrule\addlinespace[2pt]
\multicolumn{5}{l}{17} \\[2pt] 
\midrule\addlinespace[2pt]
1. rdrobust & 0.158 & 0.057 & 0.768 & 0.068 \\ 
2. 2D local poly - common bw & 0.176 & 0.021 & 0.959 & 0.043 \\ 
3. 2D local poly - diff bw & 0.179 & 0.019 & 0.963 & 0.043 \\ 
\midrule\addlinespace[2pt]
\multicolumn{5}{l}{18} \\[2pt] 
\midrule\addlinespace[2pt]
1. rdrobust & 0.158 & 0.041 & 0.868 & 0.056 \\ 
2. 2D local poly - common bw & 0.171 & 0.019 & 0.959 & 0.040 \\ 
3. 2D local poly - diff bw & 0.173 & 0.018 & 0.970 & 0.040 \\ 
\midrule\addlinespace[2pt]
\multicolumn{5}{l}{19} \\[2pt] 
\midrule\addlinespace[2pt]
1. rdrobust & 0.194 & 0.027 & 0.922 & 0.058 \\ 
2. 2D local poly - common bw & 0.170 & 0.021 & 0.964 & 0.040 \\ 
3. 2D local poly - diff bw & 0.170 & 0.019 & 0.978 & 0.039 \\ 
\midrule\addlinespace[2pt]
\multicolumn{5}{l}{20} \\[2pt] 
\midrule\addlinespace[2pt]
1. rdrobust & 0.239 & 0.020 & 0.927 & 0.070 \\ 
2. 2D local poly - common bw & 0.170 & 0.020 & 0.966 & 0.040 \\ 
3. 2D local poly - diff bw & 0.169 & 0.018 & 0.981 & 0.038 \\ 
\bottomrule
\end{longtable}
\endgroup
\end{table}
\begin{table}[H]
\begingroup
\begin{longtable}{lrrrr}
\toprule
Estimator & length & bias & coverage & rmse \\ 
\midrule\addlinespace[2pt]
\multicolumn{5}{l}{21} \\[2pt] 
\midrule\addlinespace[2pt]
1. rdrobust & 0.262 & 0.017 & 0.932 & 0.076 \\ 
2. 2D local poly - common bw & 0.169 & 0.020 & 0.966 & 0.040 \\ 
3. 2D local poly - diff bw & 0.169 & 0.018 & 0.980 & 0.038 \\ 
\midrule\addlinespace[2pt]
\multicolumn{5}{l}{22} \\[2pt] 
\midrule\addlinespace[2pt]
1. rdrobust & 0.225 & 0.023 & 0.936 & 0.065 \\ 
2. 2D local poly - common bw & 0.167 & 0.020 & 0.970 & 0.039 \\ 
3. 2D local poly - diff bw & 0.167 & 0.018 & 0.980 & 0.037 \\ 
\midrule\addlinespace[2pt]
\multicolumn{5}{l}{23} \\[2pt] 
\midrule\addlinespace[2pt]
1. rdrobust & 0.179 & 0.032 & 0.906 & 0.058 \\ 
2. 2D local poly - common bw & 0.166 & 0.020 & 0.971 & 0.039 \\ 
3. 2D local poly - diff bw & 0.166 & 0.018 & 0.980 & 0.037 \\ 
\midrule\addlinespace[2pt]
\multicolumn{5}{l}{24} \\[2pt] 
\midrule\addlinespace[2pt]
1. rdrobust & 0.188 & 0.030 & 0.919 & 0.058 \\ 
2. 2D local poly - common bw & 0.165 & 0.019 & 0.973 & 0.038 \\ 
3. 2D local poly - diff bw & 0.165 & 0.018 & 0.982 & 0.037 \\ 
\bottomrule
\end{longtable}
\endgroup
\end{table}
\begin{table}[H]
\begingroup
\begin{longtable}{lrrrr}
\toprule
Estimator & length & bias & coverage & rmse \\ 
\midrule\addlinespace[2pt]
\multicolumn{5}{l}{25} \\[2pt] 
\midrule\addlinespace[2pt]
1. rdrobust & 0.193 & 0.026 & 0.906 & 0.061 \\ 
2. 2D local poly - common bw & 0.165 & 0.020 & 0.970 & 0.040 \\ 
3. 2D local poly - diff bw & 0.165 & 0.018 & 0.978 & 0.038 \\ 
\midrule\addlinespace[2pt]
\multicolumn{5}{l}{26} \\[2pt] 
\midrule\addlinespace[2pt]
1. rdrobust & 0.229 & 0.022 & 0.925 & 0.067 \\ 
2. 2D local poly - common bw & 0.164 & 0.021 & 0.971 & 0.039 \\ 
3. 2D local poly - diff bw & 0.164 & 0.018 & 0.981 & 0.037 \\ 
\midrule\addlinespace[2pt]
\multicolumn{5}{l}{27} \\[2pt] 
\midrule\addlinespace[2pt]
1. rdrobust & 0.201 & 0.027 & 0.933 & 0.059 \\ 
2. 2D local poly - common bw & 0.163 & 0.020 & 0.970 & 0.039 \\ 
3. 2D local poly - diff bw & 0.162 & 0.018 & 0.982 & 0.036 \\ 
\midrule\addlinespace[2pt]
\multicolumn{5}{l}{28} \\[2pt] 
\midrule\addlinespace[2pt]
1. rdrobust & 0.194 & 0.029 & 0.928 & 0.058 \\ 
2. 2D local poly - common bw & 0.162 & 0.019 & 0.973 & 0.039 \\ 
3. 2D local poly - diff bw & 0.162 & 0.017 & 0.984 & 0.036 \\ 
\bottomrule
\end{longtable}
\endgroup
\end{table}
\begin{table}[H]
\begingroup
\begin{longtable}{lrrrr}
\toprule
Estimator & length & bias & coverage & rmse \\ 
\midrule\addlinespace[2pt]
\multicolumn{5}{l}{29} \\[2pt] 
\midrule\addlinespace[2pt]
1. rdrobust & 0.184 & 0.018 & 0.925 & 0.056 \\ 
2. 2D local poly - common bw & 0.161 & 0.020 & 0.969 & 0.038 \\ 
3. 2D local poly - diff bw & 0.163 & 0.018 & 0.982 & 0.037 \\ 
\midrule\addlinespace[2pt]
\multicolumn{5}{l}{30} \\[2pt] 
\midrule\addlinespace[2pt]
1. rdrobust & 0.149 & -0.004 & 0.924 & 0.047 \\ 
2. 2D local poly - common bw & 0.165 & 0.023 & 0.956 & 0.041 \\ 
3. 2D local poly - diff bw & 0.170 & 0.022 & 0.971 & 0.041 \\ 
\bottomrule
\end{longtable}
\endgroup
\vspace{0.5cm}
\centering
\begin{minipage}{0.9\hsize}\footnotesize
 \textit{Notes:} Results are from $5,000$ replication draws of $5,000$ observation samples. \textit{rdrobust} is the estimator with the Euclidean distance from the boundary point as the running variable using \textit{rdrobust}; \textit{2D local poly} refers to our preferred different bandwidth estimator \textit{diff bw} and with imposing common bandwidth \textit{common bw}. All the implementations are in \textit{R}. \textit{length} and \textit{coverage} are of generated confidence interval length and coverage rate.
\end{minipage}
\end{table}

\addtocounter{table}{-8} 

\begin{table}[H]
\centering
\caption{Simulation Results For All 30 Points With Linear Probability Models.}
\label{tab:MC_result_all_30_points_lin_prob}
\vspace{0.5cm}
\begingroup
\fontsize{12.0pt}{14.4pt}\selectfont
\begin{longtable}{lrrrr}
\toprule
Estimator & length & bias & coverage & rmse \\ 
\midrule\addlinespace[2.5pt]
\multicolumn{5}{l}{1} \\[2.5pt] 
\midrule\addlinespace[2.5pt]
1. rdrobust & 0.560 & -0.087 & 0.884 & 0.180 \\ 
2. 2D local poly - common bw & 0.396 & -0.166 & 0.610 & 0.187 \\ 
3. 2D local poly - diff bw & 0.413 & -0.126 & 0.757 & 0.162 \\ 
\midrule\addlinespace[2.5pt]
\multicolumn{5}{l}{2} \\[2.5pt] 
\midrule\addlinespace[2.5pt]
1. rdrobust & 0.484 & 0.018 & 0.937 & 0.136 \\ 
2. 2D local poly - common bw & 0.363 & -0.047 & 0.958 & 0.091 \\ 
3. 2D local poly - diff bw & 0.374 & -0.019 & 0.963 & 0.090 \\ 
\midrule\addlinespace[2.5pt]
\multicolumn{5}{l}{3} \\[2.5pt] 
\midrule\addlinespace[2.5pt]
1. rdrobust & 0.448 & 0.046 & 0.941 & 0.128 \\ 
2. 2D local poly - common bw & 0.345 & 0.002 & 0.980 & 0.076 \\ 
3. 2D local poly - diff bw & 0.354 & 0.021 & 0.972 & 0.084 \\ 
\midrule\addlinespace[2.5pt]
\multicolumn{5}{l}{4} \\[2.5pt] 
\midrule\addlinespace[2.5pt]
1. rdrobust & 0.429 & 0.039 & 0.939 & 0.122 \\ 
2. 2D local poly - common bw & 0.330 & 0.014 & 0.969 & 0.077 \\ 
3. 2D local poly - diff bw & 0.337 & 0.028 & 0.960 & 0.083 \\ 
\bottomrule
\end{longtable}
\endgroup
\end{table}
\begin{table}[H]
\begingroup
\begin{longtable}{lrrrr}
\toprule
Estimator & length & bias & coverage & rmse \\ 
\midrule\addlinespace[2.5pt]
\multicolumn{5}{l}{5} \\[2.5pt] 
\midrule\addlinespace[2.5pt]
1. rdrobust & 0.429 & 0.019 & 0.947 & 0.118 \\ 
2. 2D local poly - common bw & 0.321 & 0.012 & 0.968 & 0.077 \\ 
3. 2D local poly - diff bw & 0.328 & 0.023 & 0.966 & 0.081 \\ 
\midrule\addlinespace[2.5pt]
\multicolumn{5}{l}{6} \\[2.5pt] 
\midrule\addlinespace[2.5pt]
1. rdrobust & 0.434 & 0.000 & 0.949 & 0.120 \\ 
2. 2D local poly - common bw & 0.314 & 0.006 & 0.972 & 0.076 \\ 
3. 2D local poly - diff bw & 0.322 & 0.014 & 0.973 & 0.079 \\ 
\midrule\addlinespace[2.5pt]
\multicolumn{5}{l}{7} \\[2.5pt] 
\midrule\addlinespace[2.5pt]
1. rdrobust & 0.438 & -0.011 & 0.939 & 0.124 \\ 
2. 2D local poly - common bw & 0.308 & -0.002 & 0.972 & 0.076 \\ 
3. 2D local poly - diff bw & 0.316 & 0.005 & 0.971 & 0.078 \\ 
\midrule\addlinespace[2.5pt]
\multicolumn{5}{l}{8} \\[2.5pt] 
\midrule\addlinespace[2.5pt]
1. rdrobust & 0.439 & -0.006 & 0.941 & 0.125 \\ 
2. 2D local poly - common bw & 0.308 & -0.004 & 0.971 & 0.076 \\ 
3. 2D local poly - diff bw & 0.315 & 0.003 & 0.972 & 0.078 \\ 
\bottomrule
\end{longtable}
\endgroup
\end{table}
\begin{table}[H]
\begingroup
\begin{longtable}{lrrrr}
\toprule
Estimator & length & bias & coverage & rmse \\ 
\midrule\addlinespace[2.5pt]
\multicolumn{5}{l}{9} \\[2.5pt] 
\midrule\addlinespace[2.5pt]
1. rdrobust & 0.439 & 0.006 & 0.938 & 0.124 \\ 
2. 2D local poly - common bw & 0.308 & -0.006 & 0.974 & 0.077 \\ 
3. 2D local poly - diff bw & 0.315 & 0.002 & 0.973 & 0.078 \\ 
\midrule\addlinespace[2.5pt]
\multicolumn{5}{l}{10} \\[2.5pt] 
\midrule\addlinespace[2.5pt]
1. rdrobust & 0.440 & 0.021 & 0.939 & 0.128 \\ 
2. 2D local poly - common bw & 0.311 & -0.011 & 0.972 & 0.078 \\ 
3. 2D local poly - diff bw & 0.318 & -0.001 & 0.973 & 0.079 \\ 
\midrule\addlinespace[2.5pt]
\multicolumn{5}{l}{11} \\[2.5pt] 
\midrule\addlinespace[2.5pt]
1. rdrobust & 0.441 & 0.024 & 0.944 & 0.122 \\ 
2. 2D local poly - common bw & 0.317 & -0.020 & 0.973 & 0.079 \\ 
3. 2D local poly - diff bw & 0.324 & -0.009 & 0.973 & 0.079 \\ 
\midrule\addlinespace[2.5pt]
\multicolumn{5}{l}{12} \\[2.5pt] 
\midrule\addlinespace[2.5pt]
1. rdrobust & 0.451 & 0.025 & 0.944 & 0.124 \\ 
2. 2D local poly - common bw & 0.326 & -0.032 & 0.967 & 0.082 \\ 
3. 2D local poly - diff bw & 0.333 & -0.019 & 0.967 & 0.081 \\ 
\bottomrule
\end{longtable}
\endgroup
\end{table}
\begin{table}[H]
\begingroup
\begin{longtable}{lrrrr}
\toprule
Estimator & length & bias & coverage & rmse \\ 
\midrule\addlinespace[2.5pt]
\multicolumn{5}{l}{13} \\[2.5pt] 
\midrule\addlinespace[2.5pt]
1. rdrobust & 0.466 & 0.018 & 0.947 & 0.122 \\ 
2. 2D local poly - common bw & 0.337 & -0.039 & 0.955 & 0.088 \\ 
3. 2D local poly - diff bw & 0.346 & -0.026 & 0.960 & 0.086 \\ 
\midrule\addlinespace[2.5pt]
\multicolumn{5}{l}{14} \\[2.5pt] 
\midrule\addlinespace[2.5pt]
1. rdrobust & 0.495 & 0.031 & 0.939 & 0.135 \\ 
2. 2D local poly - common bw & 0.353 & -0.028 & 0.962 & 0.085 \\ 
3. 2D local poly - diff bw & 0.366 & -0.015 & 0.956 & 0.089 \\ 
\midrule\addlinespace[2.5pt]
\multicolumn{5}{l}{15} \\[2.5pt] 
\midrule\addlinespace[2.5pt]
1. rdrobust & 0.520 & 0.044 & 0.938 & 0.146 \\ 
2. 2D local poly - common bw & 0.366 & -0.010 & 0.970 & 0.084 \\ 
3. 2D local poly - diff bw & 0.383 & 0.002 & 0.959 & 0.093 \\ 
\midrule\addlinespace[2.5pt]
\multicolumn{5}{l}{16} \\[2.5pt] 
\midrule\addlinespace[2.5pt]
1. rdrobust & 0.451 & 0.061 & 0.924 & 0.140 \\ 
2. 2D local poly - common bw & 0.579 & 0.025 & 0.974 & 0.133 \\ 
3. 2D local poly - diff bw & 0.596 & 0.027 & 0.972 & 0.140 \\ 
\bottomrule
\end{longtable}
\endgroup
\end{table}
\begin{table}[H]
\begingroup
\begin{longtable}{lrrrr}
\toprule
Estimator & length & bias & coverage & rmse \\ 
\midrule\addlinespace[2.5pt]
\multicolumn{5}{l}{17} \\[2.5pt] 
\midrule\addlinespace[2.5pt]
1. rdrobust & 0.455 & 0.067 & 0.926 & 0.140 \\ 
2. 2D local poly - common bw & 0.525 & 0.026 & 0.979 & 0.117 \\ 
3. 2D local poly - diff bw & 0.536 & 0.026 & 0.981 & 0.119 \\ 
\midrule\addlinespace[2.5pt]
\multicolumn{5}{l}{18} \\[2.5pt] 
\midrule\addlinespace[2.5pt]
1. rdrobust & 0.470 & 0.056 & 0.929 & 0.139 \\ 
2. 2D local poly - common bw & 0.503 & 0.027 & 0.980 & 0.112 \\ 
3. 2D local poly - diff bw & 0.507 & 0.025 & 0.981 & 0.112 \\ 
\midrule\addlinespace[2.5pt]
\multicolumn{5}{l}{19} \\[2.5pt] 
\midrule\addlinespace[2.5pt]
1. rdrobust & 0.511 & 0.045 & 0.945 & 0.148 \\ 
2. 2D local poly - common bw & 0.493 & 0.028 & 0.981 & 0.109 \\ 
3. 2D local poly - diff bw & 0.490 & 0.025 & 0.984 & 0.105 \\ 
\midrule\addlinespace[2.5pt]
\multicolumn{5}{l}{20} \\[2.5pt] 
\midrule\addlinespace[2.5pt]
1. rdrobust & 0.594 & 0.047 & 0.937 & 0.172 \\ 
2. 2D local poly - common bw & 0.484 & 0.028 & 0.982 & 0.107 \\ 
3. 2D local poly - diff bw & 0.479 & 0.025 & 0.986 & 0.103 \\ 
\bottomrule
\end{longtable}
\endgroup
\end{table}
\begin{table}[H]
\begingroup
\begin{longtable}{lrrrr}
\toprule
Estimator & length & bias & coverage & rmse \\ 
\midrule\addlinespace[2.5pt]
\multicolumn{5}{l}{21} \\[2.5pt] 
\midrule\addlinespace[2.5pt]
1. rdrobust & 0.631 & 0.041 & 0.946 & 0.177 \\ 
2. 2D local poly - common bw & 0.474 & 0.026 & 0.984 & 0.103 \\ 
3. 2D local poly - diff bw & 0.468 & 0.023 & 0.987 & 0.099 \\ 
\midrule\addlinespace[2.5pt]
\multicolumn{5}{l}{22} \\[2.5pt] 
\midrule\addlinespace[2.5pt]
1. rdrobust & 0.554 & 0.048 & 0.943 & 0.162 \\ 
2. 2D local poly - common bw & 0.464 & 0.027 & 0.984 & 0.102 \\ 
3. 2D local poly - diff bw & 0.457 & 0.023 & 0.986 & 0.099 \\ 
\midrule\addlinespace[2.5pt]
\multicolumn{5}{l}{23} \\[2.5pt] 
\midrule\addlinespace[2.5pt]
1. rdrobust & 0.516 & 0.037 & 0.943 & 0.151 \\ 
2. 2D local poly - common bw & 0.449 & 0.026 & 0.987 & 0.097 \\ 
3. 2D local poly - diff bw & 0.443 & 0.024 & 0.990 & 0.094 \\ 
\midrule\addlinespace[2.5pt]
\multicolumn{5}{l}{24} \\[2.5pt] 
\midrule\addlinespace[2.5pt]
1. rdrobust & 0.509 & 0.038 & 0.932 & 0.156 \\ 
2. 2D local poly - common bw & 0.434 & 0.025 & 0.983 & 0.095 \\ 
3. 2D local poly - diff bw & 0.429 & 0.023 & 0.985 & 0.092 \\ 
\bottomrule
\end{longtable}
\endgroup
\end{table}
\begin{table}[H]
\begingroup
\begin{longtable}{lrrrr}
\toprule
Estimator & length & bias & coverage & rmse \\ 
\midrule\addlinespace[2.5pt]
\multicolumn{5}{l}{25} \\[2.5pt] 
\midrule\addlinespace[2.5pt]
1. rdrobust & 0.468 & 0.023 & 0.936 & 0.137 \\ 
2. 2D local poly - common bw & 0.421 & 0.022 & 0.986 & 0.091 \\ 
3. 2D local poly - diff bw & 0.416 & 0.020 & 0.988 & 0.088 \\ 
\midrule\addlinespace[2.5pt]
\multicolumn{5}{l}{26} \\[2.5pt] 
\midrule\addlinespace[2.5pt]
1. rdrobust & 0.482 & 0.024 & 0.937 & 0.136 \\ 
2. 2D local poly - common bw & 0.407 & 0.023 & 0.989 & 0.088 \\ 
3. 2D local poly - diff bw & 0.403 & 0.021 & 0.990 & 0.085 \\ 
\midrule\addlinespace[2.5pt]
\multicolumn{5}{l}{27} \\[2.5pt] 
\midrule\addlinespace[2.5pt]
1. rdrobust & 0.437 & 0.035 & 0.944 & 0.121 \\ 
2. 2D local poly - common bw & 0.393 & 0.022 & 0.987 & 0.084 \\ 
3. 2D local poly - diff bw & 0.389 & 0.020 & 0.988 & 0.081 \\ 
\midrule\addlinespace[2.5pt]
\multicolumn{5}{l}{28} \\[2.5pt] 
\midrule\addlinespace[2.5pt]
1. rdrobust & 0.403 & 0.040 & 0.944 & 0.113 \\ 
2. 2D local poly - common bw & 0.382 & 0.022 & 0.985 & 0.083 \\ 
3. 2D local poly - diff bw & 0.379 & 0.020 & 0.989 & 0.081 \\ 
\bottomrule
\end{longtable}
\endgroup
\end{table}
\begin{table}[H]
\begingroup
\begin{longtable}{lrrrr}
\toprule
Estimator & length & bias & coverage & rmse \\ 
\midrule\addlinespace[2.5pt]
\multicolumn{5}{l}{29} \\[2.5pt] 
\midrule\addlinespace[2.5pt]
1. rdrobust & 0.370 & 0.019 & 0.942 & 0.105 \\ 
2. 2D local poly - common bw & 0.373 & 0.023 & 0.986 & 0.082 \\ 
3. 2D local poly - diff bw & 0.375 & 0.023 & 0.988 & 0.081 \\ 
\midrule\addlinespace[2.5pt]
\multicolumn{5}{l}{30} \\[2.5pt] 
\midrule\addlinespace[2.5pt]
1. rdrobust & 0.334 & -0.009 & 0.940 & 0.097 \\ 
2. 2D local poly - common bw & 0.378 & 0.024 & 0.980 & 0.086 \\ 
3. 2D local poly - diff bw & 0.386 & 0.024 & 0.981 & 0.087 \\ 
\bottomrule
\end{longtable}
\endgroup
\vspace{0.5cm}
\centering
\begin{minipage}{0.9\hsize}\footnotesize
 \textit{Notes:} Results are from $5,000$ replication draws of $5,000$ observation samples. \textit{rdrobust} is the estimator with the Euclidean distance from the boundary point as the running variable using \textit{rdrobust}; \textit{2D local poly} refers to our preferred different bandwidth estimator \textit{diff bw} and with imposing common bandwidth \textit{common bw}. All the implementations are in \textit{R}. \textit{length} and \textit{coverage} are of generated confidence interval length and coverage rate.
\end{minipage}
\end{table}

\addtocounter{table}{-8}
\begin{table}[H]
\centering
\caption{Bandwidths and effective sample sizes for the Colombian Data.}
\begingroup
\fontsize{12.0pt}{14.4pt}\selectfont
\begin{longtable}{rlrrrr}
\toprule
point & Estimator & pilot & h1 & h2 & eff. sample \\ 
\midrule\addlinespace[2.5pt]
2 & 2D local poly - diff bw & 63.1 & 12.7 & 56.5 & 1,673 \\ 
2 & rdrobust & 74.5 & 63.9 & - & 111,160 \\ 
3 & 2D local poly - diff bw & 66.9 & 14.5 & 56.0 & 3,634 \\ 
3 & rdrobust & 67.2 & 59.5 & - & 101,745 \\ 
4 & 2D local poly - diff bw & 63.0 & 14.3 & 52.6 & 5,222 \\ 
4 & rdrobust & 62.4 & 53.9 & - & 87,977 \\ 
5 & 2D local poly - diff bw & 62.1 & 14.4 & 51.2 & 7,046 \\ 
5 & rdrobust & 58.4 & 46.7 & - & 69,331 \\ 
6 & 2D local poly - diff bw & 65.7 & 15.0 & 52.0 & 9,653 \\ 
6 & rdrobust & 57.2 & 43.4 & - & 65,304 \\ 
7 & 2D local poly - diff bw & 69.1 & 17.1 & 63.9 & 14,069 \\ 
7 & rdrobust & 57.9 & 42.1 & - & 67,283 \\ 
8 & 2D local poly - diff bw & 67.2 & 15.2 & 54.5 & 12,588 \\ 
8 & rdrobust & 57.2 & 41.0 & - & 69,059 \\ 
\bottomrule
\end{longtable}
\endgroup
\end{table}
\begin{table}[H]
\centering
\begingroup
\fontsize{12.0pt}{14.4pt}\selectfont
\begin{longtable}{rlrrrr}
\toprule
point & Estimator & pilot & h1 & h2 & eff. sample \\ 
\midrule\addlinespace[2.5pt]
9 & 2D local poly - diff bw & 62.1 & 15.1 & 48.1 & 13,649 \\ 
9 & rdrobust & 55.5 & 36.9 & - & 60,207 \\ 
10 & 2D local poly - diff bw & 67.4 & 15.7 & 51.0 & 15,223 \\ 
10 & rdrobust & 66.1 & 32.1 & - & 49,120 \\ 
11 & 2D local poly - diff bw & 68.3 & 16.1 & 52.4 & 17,026 \\ 
11 & rdrobust & 46.5 & 26.1 & - & 34,659 \\ 
12 & 2D local poly - diff bw & 72.4 & 16.7 & 60.5 & 17,939 \\ 
12 & rdrobust & 45.7 & 24.4 & - & 30,959 \\ 
13 & 2D local poly - diff bw & 69.4 & 16.4 & 53.2 & 17,697 \\ 
13 & rdrobust & 45.0 & 23.4 & - & 28,164 \\ 
14 & 2D local poly - diff bw & 73.8 & 17.4 & 62.6 & 19,660 \\ 
14 & rdrobust & 49.8 & 27.1 & - & 35,579 \\ 
15 & 2D local poly - diff bw & 69.7 & 17.3 & 54.1 & 19,134 \\ 
15 & rdrobust & 52.9 & 29.0 & - & 39,069 \\ 
\bottomrule
\end{longtable}
\endgroup
\end{table}
\begin{table}[H]
\centering
\begingroup
\fontsize{12.0pt}{14.4pt}\selectfont
\begin{longtable}{rlrrrr}
\toprule
point & Estimator & pilot & h1 & h2 & eff. sample \\ 
\midrule\addlinespace[2.5pt]
16 & 2D local poly - diff bw & 34.8 & 32.9 & 18.3 & 46,252 \\ 
16 & rdrobust & 45.6 & 28.1 & - & 28,473 \\ 
17 & 2D local poly - diff bw & 36.3 & 21.0 & 25.9 & 15,890 \\ 
17 & rdrobust & 45.1 & 25.0 & - & 18,361 \\ 
18 & 2D local poly - diff bw & 35.5 & 17.6 & 26.6 & 9,052 \\ 
18 & rdrobust & 43.7 & 25.1 & - & 14,397 \\ 
19 & 2D local poly - diff bw & 38.1 & 19.4 & 26.9 & 8,278 \\ 
19 & rdrobust & 42.0 & 28.5 & - & 13,989 \\ 
20 & 2D local poly - diff bw & 42.1 & 31.9 & 25.9 & 15,660 \\ 
20 & rdrobust & 42.4 & 28.8 & - & 10,661 \\ 
21 & 2D local poly - diff bw & 38.5 & 27.0 & 24.8 & 8,809 \\ 
21 & rdrobust & 50.2 & 30.6 & - & 8,866 \\ 
22 & 2D local poly - diff bw & 40.8 & 26.6 & 25.2 & 6,111 \\ 
22 & rdrobust & 45.6 & 31.4 & - & 6,844 \\ 
\bottomrule
\end{longtable}
\endgroup
\end{table}
\begin{table}[H]
\centering
\begingroup
\fontsize{12.0pt}{14.4pt}\selectfont
\begin{longtable}{rlrrrr}
\toprule
point & Estimator & pilot & h1 & h2 & eff. sample \\ 
\midrule\addlinespace[2.5pt]
23 & 2D local poly - diff bw & 43.3 & 32.2 & 24.8 & 6,611 \\ 
23 & rdrobust & 64.8 & 49.3 & - & 13,769 \\ 
24 & 2D local poly - diff bw & 47.2 & 41.5 & 26.3 & 8,296 \\ 
24 & rdrobust & 63.5 & 50.9 & - & 10,881 \\ 
25 & 2D local poly - diff bw & 49.1 & 49.7 & 28.5 & 8,951 \\ 
25 & rdrobust & 74.3 & 59.8 & - & 12,294 \\ 
26 & 2D local poly - diff bw & 47.0 & 37.2 & 30.3 & 3,318 \\ 
26 & rdrobust & 68.5 & 45.5 & - & 4,265 \\ 
27 & 2D local poly - diff bw & 37.4 & 27.0 & 46.3 & 1,197 \\ 
27 & rdrobust & 78.1 & 58.6 & - & 6,055 \\ 
28 & 2D local poly - diff bw & 36.2 & 65.8 & 14.8 & 7,152 \\ 
28 & rdrobust & 80.7 & 59.3 & - & 4,499 \\ 
29 & 2D local poly - diff bw & 36.5 & 36.5 & 14.6 & 1,071 \\ 
29 & rdrobust & 93.7 & 84.2 & - & 11,015 \\ 
\bottomrule
\end{longtable}
\endgroup
\label{tab:MC_result_bands}
\vspace{0.5cm}
\begin{minipage}{0.9\hsize}\footnotesize
 \textit{Notes:} The bandwidths and effective sample sizes for each evaluation points in the Colombian study. The points from 2 through 15 represents the SABER 11 = 0 boundary from SISBEN values 52 to SISBEN values 2; the points from 16 through 29 represents the SISBEN = 0 boundary from SABER 11 values 7 to SABER 11 values 98. Pilot represents the pilot bandwidth, $h_1$ is the bandwidth for the axis along with the boundary, and $h_2$ is the bandwidth for the axis orthogonal to the boundary if presented. eff. sample is the effective sample size.
\end{minipage}
\end{table}

\addtocounter{table}{-4}
\begin{table}[H]
\centering
\caption{Bandwidths for the Lee study (Black Percentage).}
\begingroup
\fontsize{12.0pt}{14.4pt}\selectfont
\begin{longtable}{rlrrrr}
\toprule
point & Estimator & pilot & h1 & h2 & eff. sample \\ 
\midrule
1 & 2D local poly - diff bw & 39.1 & 31.7 & 18.8 & 4,429 \\ 
1 & rdrobust & 37.0 & 17.8 & - & 2,276 \\ 
2 & 2D local poly - diff bw & 37.1 & 28.7 & 17.6 & 4,022 \\ 
2 & rdrobust & 36.9 & 17.9 & - & 2,319 \\ 
3 & 2D local poly - diff bw & 34.6 & 25.0 & 16.1 & 3,536 \\ 
3 & rdrobust & 35.9 & 17.6 & - & 2,304 \\ 
4 & 2D local poly - diff bw & 34.8 & 25.6 & 16.2 & 3,648 \\ 
4 & rdrobust & 36.0 & 17.4 & - & 2,309 \\ 
5 & 2D local poly - diff bw & 35.0 & 25.9 & 16.2 & 3,696 \\ 
5 & rdrobust & 36.5 & 16.6 & - & 2,233 \\ 
6 & 2D local poly - diff bw & 41.1 & 62.4 & 17.0 & 6,961 \\ 
6 & rdrobust & 36.0 & 15.7 & - & 2,127 \\ 
7 & 2D local poly - diff bw & 35.1 & 33.2 & 15.4 & 4,687 \\ 
7 & rdrobust & 36.4 & 15.2 & - & 2,018 \\ 
8 & 2D local poly - diff bw & 38.5 & 58.4 & 15.3 & 6,834 \\ 
8 & rdrobust & 41.7 & 18.9 & - & 2,420 \\ 
9 & 2D local poly - diff bw & 39.5 & 44.8 & 17.4 & 5,961 \\ 
9 & rdrobust & 32.5 & 14.4 & - & 566 \\ 
10 & 2D local poly - diff bw & 41.0 & 35.6 & 18.2 & 5,034 \\ 
10 & rdrobust & 32.0 & 20.3 & - & 379 \\ 
\bottomrule
\end{longtable}
\endgroup

\label{tab:MC_result_bands_lee_black}
\begin{minipage}{0.9\hsize}\footnotesize
 \textit{Notes:} The results are for the Lee study with Black percentage variable. Pilot represents the pilot bandwidth, h1 is the
bandwidth for the axis along with the boundary, and h2 is the bandwidth for the axis orthogonal
to the boundary if presented. eff. sample is the effective sample size.
\end{minipage}
\end{table}
\addtocounter{table}{-1}
\begin{table}[H]
\centering
\caption{Bandwidths for the Lee study (Foreign Percentage).}
\begingroup
\fontsize{12.0pt}{14.4pt}\selectfont
\begin{longtable}{rlrrrr}
\toprule
point & Estimator & pilot & h1 & h2 & eff. sample \\ 
1 & 2D local poly - diff bw & 42.5 & 13.6 & 21.4 & 1,941 \\ 
1 & rdrobust & 32.2 & 16.9 & - & 2,285 \\ 
2 & 2D local poly - diff bw & 38.5 & 11.6 & 19.4 & 1,650 \\ 
2 & rdrobust & 32.6 & 17.3 & - & 2,361 \\ 
3 & 2D local poly - diff bw & 41.0 & 12.3 & 19.2 & 1,746 \\ 
3 & rdrobust & 32.2 & 17.6 & - & 2,417 \\ 
4 & 2D local poly - diff bw & 41.1 & 12.5 & 19.2 & 1,784 \\ 
4 & rdrobust & 32.3 & 17.5 & - & 2,427 \\ 
5 & 2D local poly - diff bw & 49.4 & 15.3 & 22.2 & 2,249 \\ 
5 & rdrobust & 31.9 & 17.0 & - & 2,376 \\ 
6 & 2D local poly - diff bw & 52.8 & 16.5 & 22.9 & 2,424 \\ 
6 & rdrobust & 32.7 & 16.4 & - & 2,315 \\ 
7 & 2D local poly - diff bw & 49.7 & 14.7 & 20.0 & 2,174 \\ 
7 & rdrobust & 33.1 & 15.7 & - & 2,230 \\ 
8 & 2D local poly - diff bw & 44.0 & 15.8 & 20.9 & 2,338 \\ 
8 & rdrobust & 35.1 & 15.7 & - & 2,209 \\ 
9 & 2D local poly - diff bw & 41.4 & 16.5 & 20.4 & 2,453 \\ 
9 & rdrobust & 42.1 & 16.8 & - & 2,296 \\ 
10 & 2D local poly - diff bw & 39.4 & 17.5 & 20.2 & 2,595 \\ 
10 & rdrobust & 45.2 & 20.5 & - & 2,702 \\ 
\bottomrule
\end{longtable}
\endgroup

\label{tab:MC_result_bands_lee_forgn}
\begin{minipage}{0.9\hsize}\footnotesize
 \textit{Notes:} The results are for the Lee study with Foreign percentage variable. Pilot represents the pilot bandwidth, h1 is the
bandwidth for the axis along with the boundary, and h2 is the bandwidth for the axis orthogonal
to the boundary if presented. eff. sample is the effective sample size.
\end{minipage}
\end{table}

\addtocounter{table}{-1}
\begin{table}[H]
\centering
\caption{Bandwidths for the Lee study (Government Worker Percentage).}
\begingroup
\fontsize{12.0pt}{14.4pt}\selectfont
\begin{longtable}{rlrrrr}
\toprule
point & Estimator & pilot & h1 & h2 & eff. sample \\ 
1 & 2D local poly - diff bw & 13.3 & 4.6 & 10.8 & 517 \\ 
1 & rdrobust & 28.4 & 12.7 & - & 1,809 \\ 
2 & 2D local poly - diff bw & 13.7 & 4.9 & 11.7 & 683 \\ 
2 & rdrobust & 28.5 & 14.4 & - & 2,119 \\ 
3 & 2D local poly - diff bw & 12.8 & 4.6 & 9.5 & 667 \\ 
3 & rdrobust & 29.8 & 15.2 & - & 2,249 \\ 
4 & 2D local poly - diff bw & 12.5 & 4.7 & 9.1 & 696 \\ 
4 & rdrobust & 30.6 & 15.9 & - & 2,332 \\ 
5 & 2D local poly - diff bw & 12.8 & 4.9 & 9.3 & 743 \\ 
5 & rdrobust & 31.5 & 16.5 & - & 2,433 \\ 
6 & 2D local poly - diff bw & 13.5 & 5.4 & 10.8 & 821 \\ 
6 & rdrobust & 32.3 & 17.3 & - & 2,542 \\ 
7 & 2D local poly - diff bw & 13.6 & 5.6 & 11.1 & 857 \\ 
7 & rdrobust & 32.8 & 17.8 & - & 2,605 \\ 
8 & 2D local poly - diff bw & 12.9 & 5.4 & 9.4 & 776 \\ 
8 & rdrobust & 33.4 & 18.5 & - & 2,713 \\ 
9 & 2D local poly - diff bw & 12.8 & 5.5 & 9.2 & 715 \\ 
9 & rdrobust & 33.5 & 18.7 & - & 2,721 \\ 
10 & 2D local poly - diff bw & 12.9 & 5.7 & 9.4 & 701 \\ 
10 & rdrobust & 35.1 & 18.6 & - & 2,697 \\ 
\bottomrule
\end{longtable}
\endgroup

\label{tab:MC_result_bands_lee_gvt}
\begin{minipage}{0.9\hsize}\footnotesize
 \textit{Notes:} The results are for the Lee study with Government Worker percentage variable. Pilot represents the pilot bandwidth, h1 is the
bandwidth for the axis along with the boundary, and h2 is the bandwidth for the axis orthogonal
to the boundary if presented. eff. sample is the effective sample size.
\end{minipage}
\end{table}
\addtocounter{table}{-1}
\begin{table}[H]
\centering
\caption{Bandwidths for the Lee study (Urban Percentage).}
\begingroup
\fontsize{12.0pt}{14.4pt}\selectfont
\begin{longtable}{rlrrrr}
\toprule
point & Estimator & pilot & h1 & h2 & eff. sample \\
1 & 2D local poly - diff bw & 42.7 & 33.1 & 19.2 & 2,232 \\ 
1 & rdrobust & 45.6 & 22.2 & - & 846 \\ 
2 & 2D local poly - diff bw & 42.2 & 56.0 & 15.3 & 5,214 \\ 
2 & rdrobust & 42.1 & 20.6 & - & 1,045 \\ 
3 & 2D local poly - diff bw & 41.1 & 40.1 & 16.6 & 3,840 \\ 
3 & rdrobust & 58.9 & 25.7 & - & 1,748 \\ 
4 & 2D local poly - diff bw & 45.5 & 41.5 & 19.0 & 4,463 \\ 
4 & rdrobust & 45.6 & 25.8 & - & 1,863 \\ 
5 & 2D local poly - diff bw & 45.6 & 42.1 & 19.1 & 5,533 \\ 
5 & rdrobust & 42.1 & 25.2 & - & 1,749 \\ 
6 & 2D local poly - diff bw & 41.4 & 40.0 & 16.4 & 5,035 \\ 
6 & rdrobust & 72.1 & 41.3 & - & 4,353 \\ 
7 & 2D local poly - diff bw & 42.5 & 59.7 & 11.6 & 6,651 \\ 
7 & rdrobust & 45.8 & 24.5 & - & 1,691 \\ 
8 & 2D local poly - diff bw & 43.3 & 49.4 & 16.9 & 5,192 \\ 
8 & rdrobust & 56.6 & 23.1 & - & 1,337 \\ 
9 & 2D local poly - diff bw & 55.1 & 80.1 & 22.8 & 7,166 \\ 
9 & rdrobust & 55.3 & 26.5 & - & 1,439 \\ 
10 & 2D local poly - diff bw & 45.2 & 49.7 & 18.7 & 4,457 \\ 
10 & rdrobust & 56.0 & 28.5 & - & 1,507 \\ 
\bottomrule
\end{longtable}
\endgroup

\label{tab:MC_result_bands_lee_urban}
\begin{minipage}{0.9\hsize}\footnotesize
 \textit{Notes:} The results are for the Lee study with Urban percentage variable. Pilot represents the pilot bandwidth, h1 is the
bandwidth for the axis along with the boundary, and h2 is the bandwidth for the axis orthogonal
to the boundary if presented. eff. sample is the effective sample size.
\end{minipage}
\end{table}
\end{document}